\newcolumntype{d}[1]{D{.}{.}{#1}}
\newtheorem{theorem}{Theorem}[section]
\newtheorem{prop}{Proposition}[section]
\newtheorem{assumption}{Assumption}[section]
\newtheorem{app}{Approximation Procedure}[subsection]
\journal{Journal of Multivariate Analysis}
\begin{document}

\begin{frontmatter}

\title{Nonparametric Methods for Complex Multivariate Data: Asymptotics and Small Sample Approximations}

\author{Yue Cui\corref{cor1}
}
\ead{YueCui@missouristate.edu}
\address{Department of Mathematics, Missouri State University, Springfield, MO}


\author{Solomon W. Harrar
}
\ead{Solomon.Harrar@uky.edu}
\address{Department of Statistics, University of Kentucky, Lexington, Kentucky}



\cortext[cor1]{Corresponding author}

\begin{abstract}
Quality of Life (QOL) outcomes are important in the management of chronic illnesses. In studies of efficacies of treatments or intervention modalities, QOL scales --multi-dimensional constructs-- are routinely used as primary endpoints. The standard data analysis strategy computes composite (average) overall and domain scores, and conducts a mixed-model analysis for evaluating efficacy or monitoring medical conditions as if these scores were in continuous metric scale. However, assumptions of parametric models like continuity and homoscedastivity can be violated in many cases. Furthermore, it is even more challenging when there are missing values on some of the variables. In this paper, we propose a purely nonparametric approach in the sense that meaningful and, yet, nonparametric effect size measures are developed. We propose estimator for the effect size and develop the asymptotic properties. Our methods are shown to be particularly effective in the presence of some form of clustering and/or missing values. Inferential procedures are derived from the asymptotic theory. The \textit{Asthma Randomized Trial of Indoor Wood Smoke} data will be used to illustrate the applications of the proposed methods. The data was collected from a three-arm randomized trial which evaluated interventions targeting biomass smoke particulate matter from older model residential wood stoves in homes that have children with asthma.
\end{abstract}

\begin{keyword}
Quality of Life outcomes; Multivariate two-sample problem; Nonparametric effect size measure; Missing data; Rank test
\end{keyword}

\end{frontmatter}


\section{Introduction}
\label{sec:Moti-and-ntro}
Multivariate data commonly arise in medical, sociological and behavioural research. For example, in clinical trials the primary outcome is typically assessed at multiple time points and compared across treatment groups. Multivariate data also arise when the experimental units are subjected to multiple treatments and outcome of interest is measured by multiple variables. Such data are usually analyzed by parametric MANOVA tests such as Wilks' $\Lambda$. However, the corresponding assumptions such as multivariate normality and homoscedasticity are hard to meet in practice and therefore, may lead to wrong results. Moreover, if data are non-metric, such as ordinal or ordered categorical, the classic parametric models are not applicable since means no longer provide meaningful measures of effect size. To overcome these limitations, some progress has been made with rank-based methods for classical multivariate problems, e.g. Sen and Puri \cite{PuriSen-1971}. However, the body of research documented in this book assumes a continuous location family for data distributions.

A fully nonparametric univariate model was proposed by Brunner and Munzel \cite{baby-case-2000} for two independent samples. This nonparametric model abandons the assumptions of continuity or normality of distribution functions and provides a nonparametric measure of effect size. This model was further extended by Konietschke et al. \cite{matched-pair-2012} to the matched-pair data under the assumption of missing completely at random and, thus, observations in two samples are allowed to be dependent. 

In the multivariate setting, let observations be modeled by independent random vectors $\bm{X}_{ij}=(X_{ij}^{(1)},\cdots,$ $X_{ij}^{(d)})^\top\sim\bm{F}_{ij}$, $i=1,\cdots,a$, $j=1,\cdots,n_i$, with possibly dependent components and marginal distributions given by $X_{ij}^{(k)}\sim F_i^{(k)}$. 
Munzel and Brunner \cite{munzel2000nonparametric, munzel2000nonparametric-unbalanced} developed a nonparametric approach to the analysis of multivariate data. Their results were derived using the asymptotic theory for rank statistics developed in, e.g., Akritas and Arnold \cite{akritas1994fully}, Brunner and Denker \cite{brunner1994rank}, and Akritas et al. \cite{akritas1997nonparametric}. Specifically, based on the assumption that the sample size per treatment tends to infinity, whereas the number of treatments (factor levels, cells) remain fixed (large $n_i$ small $a$ case), they derived asymptotic results to test both the multivariate null hypothesis $H^F_0:\bm{F}_1=\cdots=\bm{F}_a$ and the marginal null hypothesis $\overline{H}^F_0:F_1^{(l)}=\cdots=F_a^{(l)}$ for $l=1,\cdots,d$. This problem has also been investigated by Bathke and Harrar \cite{bathke2008nonparametric} and Harrar and Bathke \cite{harrar2008nonparametric} in the asymptotic setting where the number of treatments tends to infinity whereas the sample size per treatment is fixed (large $a$ small $n_i$ case). In these two papers, multivariate factorial structures were considered  
in the balanced and unbalanced designs, respectively. Different from the aforementioned nonparametric tests on multivariate null hypothesis or marginal null hypothesis, Brunner et al. \cite{brunner2002multivariate} generalized Brunner and Munzel \cite{baby-case-2000} to the fully nonparametric model and developed inferential  methods for the so-called nonparametric effect sizes, $\bm{p}=(p_1,\cdots,p_d)^\top$, where $p_l$ is the nonparametric effect size on the $l^{th}$ response variable. 

The aim of the present paper is to generalize Brunner et al. \cite{brunner2002multivariate} to complex multivariate data structures. In Brunner et al. \cite{brunner2002multivariate}, each subject can receive only one of the treatments so that the two samples are independent. In this paper, we relax this condition by allowing some subjects to receive both treatments and, therefore, the two samples to be  dependent. The resulting data structure may be regarded as missing data problem in the sense that subjects that received only one treatment are considered as having missing data on the other treatment. Existing methods or strategies for handing missing data in multivariate models assume unrealistic missing patterns, e.g. monotone missing pattern. It is also the aim of this paper to derive inferential methods for multivariate data in the presence of missing data that occur at component (variable) level rather than treatment level. In addition to asymptotic solution for the above two problems, we also propose approximations for small samples along the ideas of Brunner and Dette \cite{brunner1997box} and Brunner et al. \cite{brunner2002multivariate}. 

The paper is organized as follows. Section \ref{sec:Motivation} describes \textit{Asthma Randomized Trial of Indoor Wood Smoke} (ARTIS), which is the motivation for methods developed in this paper. A precise formulation of the statistical model, nonparametric measure of effect size and hypothesis of no treatment effect are introduced in Section \ref{sec:ModelandHypothesis}. The asymptotic theory is developed in Section \ref{sec:EstimatorandDistribution}. More specifically, the asymptotic multivariate normal distribution of the estimator of nonparametric effect size is derived in a closed form. Estimator of the covariance matrix of the limiting distribution is also derived. The covariance matrix is estimated by manipulating the so-called overall and internal rankings, and it is proved to be $L_2$ consistent. Based on the main results in Section \ref{sec:EstimatorandDistribution}, two tests of treatment effects are proposed in Section \ref{sec:TestStat}. These tests are the rank versions of the Wald-type and ANOVA-type tests. In Section \ref{sec:flex-missing}, extensions to general missing patterns are presented. The numerical accuracy of both statistics under the two missing patterns are investigated in a simulation study in Section \ref{sec:SimuRes}. Application of the new methods is illustrated with the ARTIS data set in Section \ref{sec:RealData}. Discussion and some concluding remarks are provided  in Section \ref{sec:Discussion}. All proofs and technical details are placed in the Appendix.

\section{Motivating Examples}
\label{sec:Motivation}
The research in this paper is motivated by the \textit{Asthma Randomized Trial of Indoor Wood Smoke} (ARTIS) (Noonan et al. \citep{ARTIS-data}), which investigates the impact of home heating sources on quality of life for children with asthma. The study is a randomized placebo-controlled intervention trial with two intervention strategies for reducing in-home woodsmoke particulate matter (PM). Eligible participants included children with asthma, age 6-8 years old, residing in a non-tobacco-smoking household that used older-model wood stoves as primary heating source. All children with asthma in a household were included in the study. ARTIS was conducted over 5 years and each household participated in two consecutive winter periods during which they experienced household interventions. The households were assigned randomly to three treatments: placebo group receiving sham air-filtration devices, air-filter group receiving air-filtration devices and wood stove group receiving improved-technology wood-burning appliances. 

The primary health outcome was the score on Pediatric Asthma Quality of Life Questionnaire (PAQLQ) administered directly to children at each visit. PAQLQ is a 23-item asthma specific battery that provides an overall score as well as domain scores for activity limitations (5 items), symptoms (10 items), and emotional function (8 items), with each item rated in a 7 points scale. The higher the score, the better the clinical record. The design in the ARTIS has multiple level of nesting (clustering). Data is collected from each child multiple times in the pre- and post-intervention winters. In some houses, multiple children are enrolled. In Cui et al. \cite{cui2020}, the authors avoided clustering effect of children in the same house by randomly selecting one of the children from each household that has multiple participating children. For each selected child, multiple measurements from multiple visits are regarded as dependent replicates. However, the methods in Cui et al. \cite{cui2020} can only analyze one item or domain variable at a time. For multiple items or domain variables, multiplicity adjustment is needed to correct the p-values.  

The ideas of the nonparametric test proposed in Cui et al. \cite{cui2020} can be extended to the multivariate data so that the response variables can be tested jointly. Here, to facilitate a smooth development, all clustering effects are ignored. Therefore, we confine attention to one visit per child and one child per home. Details on missing patterns and sample size allocations based on the randomly chosen sample is provided in Section \ref{sec:RealData}. 

Boxplots for domain variables at each intervention period in the air-filter group are shown in Figure \ref{boxplot}. The objective of the study is to monitor the pattern of change in the domain scores before and after the air-filter intervention. Note that the outcomes are ordered categorical and, therefore, analyzing them using mean-based or other parametric effect size measures would be inappropriate. Also, it can be readily seen from the boxplots that the scores in post-intervention period tend to be greater than those in the pre-intervention period for all domains. Of major interest is estimating appropriate intervention effects and testing whether the air-filter intervention effect is significant, i.e. PAQLQ scores of post-intervention group are higher than those in pre-intervention group on at least one domain variable. 

\begin{figure}[htb]
	\caption{Box plot of the Quality of Life scores for each domain.}
	\label{boxplot}
	\centering
	\includegraphics[scale=0.4]{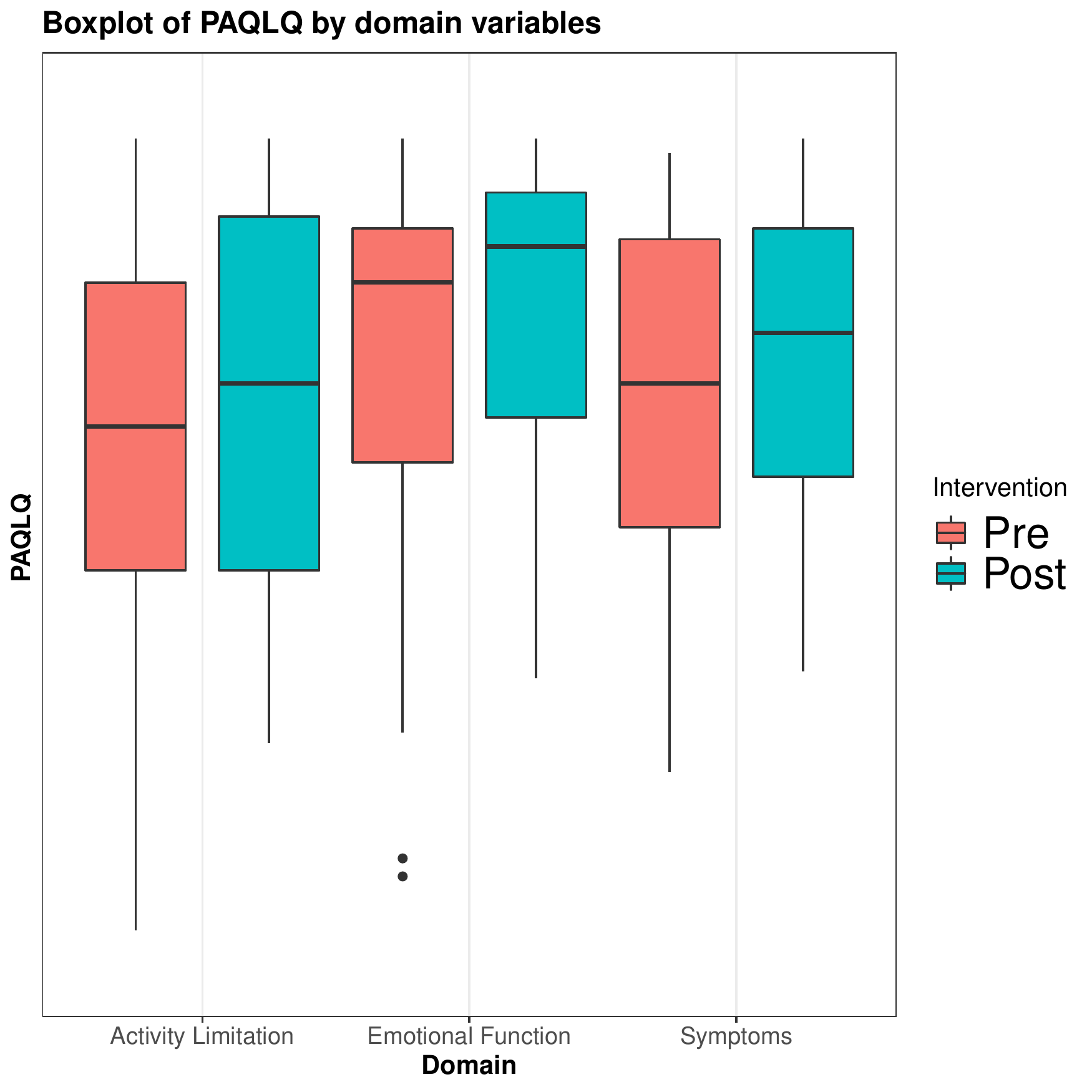}
\end{figure}

\section{Statistical Model and Hypothesis}
\label{sec:ModelandHypothesis}
We consider multivariate data from $n$ independent random vectors. Each vector contains observations of a subject on $d$ response variables and for two treatment groups, i.e. subjects may receive both treatments. This situation may arise by design or as a result of missing values. Define a subject as a \textit{complete} case on the $l^{th}$ component (variable) if it has measurements on the $l^{th}$ component for both treatment groups, and an \textit{incomplete} case if he/she has measurement on the $l^{th}$ component in the $g^{th}$ group only.

Denote $X_{gj}^{(c)(l)}$ as observation on the $l^{th}$ component of the $j^{th}$ complete case in the $g^{th}$ treatment group and $X_{gk}^{(i)(l)}$ as observation on the $l^{th}$ component of the $k^{th}$ incomplete case in the $g^{th}$ treatment group. Further denote $n_c^{(l)}$ as the number of complete cases on the $l^{th}$ component and $n_{g}^{(l)}$ as the number of incomplete cases on the $l^{th}$ component within the $g^{th}$ group, $g=1,2$.

To facilitate ease of presentation, we first focus on a simple structure, where a subject will have no data in a group if there is no data on any component within that group. Under this structure, $n_c^{(l)}$, $n_{1}^{(l)}$ and $n_{2}^{(l)}$ do not depend on $l$ and, hence, we refer to $n_c,n_1$ and $n_2$, respectively. Later in Section \ref{sec:flex-missing}, we present the generalizations to arbitrary structures (missing patterns).

Then data for complete cases are collected in vectors as
\[\textbf{X}_j^{(c)}=(\textbf{X}_{1j}^{(c)},\textbf{X}_{2j}^{(c)})\textrm{ where } \textbf{X}_{gj}^{(c)}=(X_{gj}^{(c)(1)},\cdots,X_{gj}^{(c)(d)}),
g=1,2,j=1,\cdots,n_c\]
and data for incomplete cases are similarly collected in vectors as 
\[\textbf{X}_{gk}^{(i)}=(X_{gk}^{(i)(1)},\cdots,X_{gk}^{(i)(d)}),g=1,2,k=1,\cdots,n_g.\]

Assume the joint distribution of data in the same treatment group to be identical for all subjects, i.e. $\bm{X}_{gj}^{(c)}$, $j=1,\cdots,n_c$ and $\bm{X}_{gj'}^{(i)}$, $j'=1,\cdots,n_g$, are \textit{i.i.d.} within treatment group $g=1,2$. Also, marginal distributions for complete and incomplete cases on the same component and in the same treatment group are assumed to be the same, i.e. $X_{gk}^{(A)(l)}\sim F_g^{(l)}\text{ for }A\in\{c,i\}$. The data scheme of the above model is displayed in Table \ref{Table:schematic} where each row contains observations of a subject and * indicates that data are not available for the subject in that group.
\begin{table}[H]
	\centering
	\caption{Schematic representation of the data. Stars denote missing observations.}
	\label{Table:schematic}
	\setlength\tabcolsep{4 pt}
	\begin{tabular}{|c|ccccc|}
		\hline
		\centering Subject& \multicolumn{2}{c}{TX=1} & \multicolumn{2}{c}{TX=2} &  \\
		\hline
		1 & $\textbf{X}_{11}^{(c)}=$ & $(X_{11}^{(c)(1)},\cdots,X_{11}^{(c)(d)})$ & $\textbf{X}_{21}^{(c)}=$ & $(X_{21}^{(c)(1)},\cdots,X_{21}^{(c)(d)})$ &  \\
		2 & $\textbf{X}_{12}^{(c)}=$ & $(X_{12}^{(c)(1)},\cdots,X_{12}^{(c)(d)})$ & $\textbf{X}_{22}^{(c)}=$ & $(X_{22}^{(c)(1)},\cdots,X_{22}^{(c)(d)})$ &  \\
		$\vdots$ &  \multicolumn{2}{c}{$\vdots$} &  \multicolumn{2}{c}{$\vdots$} &  \\
		$n_c$ & $\textbf{X}_{1n_c}^{(c)}=$ & $(X_{1n_c}^{(c)(1)},\cdots,X_{1n_c}^{(c)(d)})$ & $\textbf{X}_{2n_c}^{(c)}=$ & $(X_{2n_c}^{(c)(1)},\cdots,X_{2n_c}^{(c)(d)})$ & \\
		\hline
		$n_c+1$ & $\textbf{\textbf{X}}_{11}^{(i)}=$ & $(X_{11}^{(i)(1)},\cdots,X_{11}^{(i)(d)})$ &\multicolumn{2}{c}{*} &  \\
		$\vdots$ &  \multicolumn{2}{c}{$\vdots$} & \multicolumn{2}{c}{$\vdots$}  &   \\
		$n_c+n_1$ & $\textbf{X}_{1n_1}^{(i)}=$ & $(X_{1n_1}^{(i)(1)},\cdots,X_{1n_1}^{(i)(d)})$ & \multicolumn{2}{c}{*}&  \\
		\hline
		$n_c+n_1+1$ & \multicolumn{2}{c}{*} & $\textbf{X}_{21}^{(i)}=$ & $(X_{21}^{(i)(1)},\cdots,X_{21}^{(i)(d)})$ &  \\
		$\vdots$ &  \multicolumn{2}{c}{$\vdots$}  &  \multicolumn{2}{c}{$\vdots$} &  \\
		$n_c+n_1+n_2$ & \multicolumn{2}{c}{*} & $\textbf{X}_{2n_2}^{(i)}=$ & $(X_{2n_2}^{(i)(1)},\cdots,X_{2n_2}^{(i)(d)})$ &  \\
		\hline
	\end{tabular}
\end{table}                

In order to accommodate metric, discrete, dichotomous as well as ordinal data in a unified way, we use the normalized distribution function (Ruymgaart \citep{Ruymgaart-1980}), i.e. $F_g^{(l)}(x)=\frac{1}{2}[F_g^{(l)+}(x)+F_g^{(l)-}(x)]$, which is the average of left-continuous 
and right-continuous
distribution functions. That is,
\[F_g^{(l)}(x)=\frac{1}{2}[P(X_{gk}^{(A)(l)}\le x)+P(X_{gk}^{(A)(l)}< x)],\quad\textrm{for}\quad g=1,2,l=1,\cdots,d\quad\textrm{and}\quad A\in\{c,i\}.\]
Note that $F_g^{(l)}(x)$ is an arbitrary distribution function except the trivial case of one-point distribution. 

The statistical model considered here does not entail any parameter by which adequate treatment effects can be described. Therefore, treatment effects are defined via marginal distributions $F_1^{(l)}(x)$ and $F_2^{(l)}(x)$ as
\begin{equation}
	\label{p-l}
	p^{(l)}=\int F_1^{(l)}dF_2^{(l)}=p(X_{11}^{(c)(l)}\le X_{21}^{(c)(l)})+\frac{1}{2}P(X_{11}^{(c)(l)}= X_{21}^{(c)(l)}),\quad l=1,\cdots,d,
\end{equation}                                                                                                                                    
which is the so-called nonparametric relative treatment effect for the $l^{th}$ response, $l=1,\cdots,d$. Actually, $p^{(l)}$ is the generalized Wilcoxon-Mann-Whiteney (WMW) effect introduced in Brunner and Munzel \cite{baby-case-2000}. The tendency of observations from $F_1^{(l)}$ being smaller or larger than $F_2^{(l)}$ can be assessed by the comparisons $p^{(l)}>\frac{1}{2}$ or $p^{(l)}<\frac{1}{2}$. Especially, when $p^{(l)}=\frac{1}{2}$, observations from $F_1^{(l)}$ do not tend to be smaller or larger than observations from $F_2^{(l)}$, and this situation is regarded as "no treatment effect" for the $l^{th}$ component. In this respect, we can characterize the case of "no treatment effect" in the general multivariate model by the condition that $p^{(l)}=\frac{1}{2}$ for all $l=1,\cdots,d$. Now, let $\bm{p}=(p^{(1)},\cdots,p^{(d)})^\top$ be the vector of relative treatment effects. Then the hypothesis of no treatment effect in the multivariate model can be expressed as 
$$H_0:\bm{p}=\frac{1}{2}\mathbbm{1}_d,$$
where $\mathbbm{1}_d=(1,\cdots,1)^\top$ is a vector of all 1's. Equivalently, $H_0:p^{(l)}=\frac{1}{2}\quad\textrm{for all}\quad l=1,\cdots,d$.

One can easily show that $F_1^{(l)}=F_2^{(l)}$ implies $p^{(l)}=\frac{1}{2}$, but the converse is not necessarily true. Indeed, the inference method for $\bm{p}$ allows for heteroscedastic variances, skewness and higher moments of distributions in the two groups. 

To construct an appropriate inference procedure, we need to come up with a consistent estimator for the effect size vector and derive its asymptotic distribution. In the next section, we will derive an (asymptotically) unbiased and consistent estimator of $\bm{p}$ and establish its asymptotic normality. In order to derive the results, we need the following regularity assumption:

\begin{assumption}
	\label{assumption-3.1}
	$n_c+n_g\to\infty$ such that $\frac{n}{n_c+n_g}\le N_0<\infty$, $g=1,2$.
\end{assumption}
This assumption guarantees that either the number of complete cases or both of the incomplete cases are large. In particular, we do not want either $n_1$ or $n_2$ alone to dominate the total sample size. Therefore, it includes the most common practice-oriental sample size allocations:

\begin{enumerate}
	\item[(a)] $n_c\rightarrow \infty,n_1,n_2\le M<\infty,$ or
	\item[(b)] $n_c\rightarrow\infty,n_1\rightarrow\infty,n_2\le M<\infty,$ or
	\item[(c)] $n_c\rightarrow\infty,n_1\rightarrow\infty,n_2\rightarrow\infty,$ or
	\item[(d)] $n_c\le N_c,n_1\rightarrow\infty,n_2\rightarrow\infty$,
\end{enumerate}
where $M$ and $N_c$ are constants.

\section{Asymptotic Theory}
\label{sec:EstimatorandDistribution}
\subsection{Effect Size Estimator}
\label{sec:EffectSize}
To get the asymptotically unbiased and consistent estimator of the relative treatment effect $p^{(l)}$, we replace $F_1^{(l)}$ and $F_2^{(l)}$ with their empirical counterparts $\widehat{F}_1^{(l)}$ and $\widehat{F}_2^{(l)}$. For the $l^{th}$ component, let $\theta_g^{(l)}=\frac{n_c^{(l)}}{n_c^{(l)}+n_{g}^{(l)}}=\frac{n_c^{(l)}}{m_g^{(l)}}$ be the relative sample size of complete and incomplete data in the $g^{th}$ group. Notice that under the data structure shown in Table \ref{Table:schematic}, $\theta_g^{(l)}$'s are identical for all components within the same group. For simplicity, we drop the superscript $(l)$ in the subsequence section. Using relative sample size as weights, define the weighted empirical distribution function by
\begin{equation}
	\label{Fhat-gl}
	\widehat{F}_g^{(l)}(x) =  \theta_g\widehat{F}_g^{(c)(l)}(x)+(1-\theta_g)\widehat{F}_g^{(i)(l)}(x),\quad g=1,2,\quad\textrm{where}
\end{equation}
$$\widehat{F}_g^{(c)(l)}(x)=\frac{1}{n_c}\sum_{k=1}^{n_c}c(x-X_{gk}^{(c)(l)})\quad \textrm{and}\quad \widehat{F}_g^{(i)(l)}(x)=\frac{1}{n_g}\sum_{k=1}^{n_g}c(x-X_{gk}^{(i)(l)})$$
denote the empirical distribution functions of complete and incomplete cases, respectively. Here, the function $c(x)=0,\frac{1}{2},1$ according as $x<0,=0,>0$ is the normalized count function. 
A weighted estimator $\widehat{p}^{(l)}$ of $p^{(l)}$ can then be achieved by plugging in $\widehat{F}_1^{(l)}$ and $\widehat{F}_2^{(l)}$ into the integral representation in (\ref{p-l}) and the resulting estimator is
\begin{equation}
	\label{equation:phatl}
	\begin{split}
		\widehat{p}^{(l)} & = 
		\int \widehat{F}_1^{(l)}d\widehat{F}_2^{(l)} \\
		& = \frac{1}{m_1m_2}[n_c^2\int\widehat{F}_{1}^{(c)(l)}d\widehat{F}_{2}^{(c)(l)}+n_cn_2\int\widehat{F}_{1}^{(c)(l)}d\widehat{F}_{2}^{(i)(l)}\\
		&\qquad\quad+n_cn_1\int\widehat{F}_{1}^{(i)(l)}d\widehat{F}_{2}^{(c)(l)}+n_1n_2\int\widehat{F}_{1}^{(i)(l)}d\widehat{F}_{2}^{(i)(l)}].\\
	\end{split}
\end{equation}
One can verify that
\begin{equation}
	\label{equation:phatl_rank}
	\widehat{p}^{(l)}=\frac{1}{N}(\theta_2\overline{R}_{2\cdot}^{(c)(l)}-\theta_1\overline{R}_{1\cdot}^{(c)(l)}+(1-\theta_2)\overline{R}_{2\cdot}^{(i)(l)}-(1-\theta_1)\overline{R}_{1\cdot}^{(i)(l)})+\frac{1}{2},
\end{equation} 
where $\overline{R}_{g\cdot}^{(c)(l)}=\frac{1}{n_c}\sum_{k=1}^{n_c}R_{gk}^{(c)(l)}$ and $\overline{R}_{g\cdot}^{(i)(l)}=\frac{1}{n_g}\sum_{k=1}^{n_g}R_{gk}^{(i)(l)}$ are the means of ranks $R_{gk}^{(c)(l)}$ and $R_{gk}^{(i)(l)}$ of $X_{gk}^{(c)(l)}$ and $X_{gk}^{(i)(l)}$ among all $N=2n_c+n_1+n_2$ observations on the $l^{th}$ component among both groups. 
\begin{prop}
	\label{prop-phat}
	Let $\widehat{\bm{p}}=(\widehat{p}^{(1)},\cdots,\widehat{p}^{(d)})^\top$ be as defined in (\ref{equation:phatl_rank}). 
	Then, under Assumption \ref{assumption-3.1},
	\begin{description}
		\item[(1)] $E(\widehat{\bm{p}})\to\bm{p}$ and
		\item[(2)] $\widehat{\bm{p}}\overset{a.s.}{\to}\bm{p}$. 
	\end{description}                                                                                                                            
\end{prop}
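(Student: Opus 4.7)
I would exploit the bilinear structure of \eqref{equation:phatl} to write each coordinate as a convex combination
\[
\widehat{p}^{(l)} = w_{cc}\,T_{cc}^{(l)} + w_{ci}\,T_{ci}^{(l)} + w_{ic}\,T_{ic}^{(l)} + w_{ii}\,T_{ii}^{(l)},
\]
where $T_{AB}^{(l)}=\int \widehat{F}_1^{(A)(l)}\,d\widehat{F}_2^{(B)(l)}$ and the weights $w_{cc}=\theta_1\theta_2$, $w_{ci}=\theta_1(1-\theta_2)$, $w_{ic}=(1-\theta_1)\theta_2$, $w_{ii}=(1-\theta_1)(1-\theta_2)$ sum to $1$. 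Subtracting $p^{(l)}$ then gives
\[
\widehat{p}^{(l)}-p^{(l)} = \sum_{A,B\in\{c,i\}} w_{AB}\,\bigl(T_{AB}^{(l)}-p^{(l)}\bigr),
\]
so it suffices to control each of the four pieces separately; the vector statements follow coordinate-wise because $d$ is fixed.

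For part (1), I would observe that whenever $(A,B)\neq(c,c)$, the samples feeding $\widehat{F}_1^{(A)(l)}$ and $\widehat{F}_2^{(B)(l)}$ are independent with marginals $F_1^{(l)}$ and $F_2^{(l)}$, so Fubini gives $E[T_{AB}^{(l)}]=p^{(l)}$ exactly. For $T_{cc}^{(l)}=\frac{1}{n_c^2}\sum_{k,j}c(X_{2j}^{(c)(l)}-X_{1k}^{(c)(l)})$ I would split into the off-diagonal ($k\neq j$) and diagonal ($k=j$) parts: the off-diagonal summands involve different subjects and are therefore independent, contributing $\frac{n_c-1}{n_c}p^{(l)}$ in expectation, while the diagonal is bounded by $1/n_c$ in absolute value. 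Assembling,
\[
E[\widehat{p}^{(l)}] = p^{(l)} + \frac{w_{cc}}{n_c}\bigl(P_\ast^{(l)}-p^{(l)}\bigr),\qquad P_\ast^{(l)}\in[0,1].
\]
Under Assumption~\ref{assumption-3.1}, either $n_c\to\infty$ and the $1/n_c$ kills the correction, or $n_c$ stays bounded but then $n_1,n_2\to\infty$ forces $w_{cc}=\theta_1\theta_2\to 0$; in either regime the correction vanishes.

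For part (2), I would proceed term-by-term. For each $(A,B)$, either $w_{AB}\to 0$ and hence $|w_{AB}(T_{AB}^{(l)}-p^{(l)})|\le w_{AB}\to 0$ since $T_{AB}^{(l)}\in[0,1]$, or $w_{AB}$ is bounded below along a subsequence, which under Assumption~\ref{assumption-3.1} forces the subsample sizes entering $T_{AB}^{(l)}$ to diverge. In the latter regime, when $(A,B)\neq(c,c)$, $T_{AB}^{(l)}$ is a genuine two-sample $U$-statistic in independent samples with bounded kernel $c(\cdot-\cdot)$, and Hoeffding's strong law yields $T_{AB}^{(l)}\overset{a.s.}{\to}p^{(l)}$; for $T_{cc}^{(l)}$ the diagonal part vanishes almost surely at rate $1/n_c$ and the off-diagonal is again a two-sample $U$-statistic over the $n_c$ independent complete-case subjects, handled by the same law. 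The main obstacle will be the bookkeeping across the four sample-size regimes (a)--(d) of Assumption~\ref{assumption-3.1} together with the within-subject dependence inside $T_{cc}^{(l)}$; the payoff of the decomposition is that this dependence lives only on the diagonal of the double sum with vanishing weight $1/n_c$, while the coefficients $w_{AB}$ automatically concentrate mass on whichever subsamples actually grow.
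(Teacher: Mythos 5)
Your proposal is correct, and it is genuinely more informative than the paper's own ``proof,'' which consists entirely of a citation to Proposition 4.1 of Konietschke et al.~\cite{matched-pair-2012}; you have effectively reconstructed the argument the paper omits. Part (1) is exactly the standard computation: the only non-independent pairs live on the diagonal of the $T_{cc}^{(l)}$ double sum, giving a bias of $\tfrac{w_{cc}}{n_c}(\Delta-p^{(l)})=O\bigl(\tfrac{n_c}{m_1m_2}\bigr)$, and you can dispense with the case analysis entirely by noting $\tfrac{n_c}{m_1m_2}\le\tfrac{1}{m_2}\to0$ under Assumption \ref{assumption-3.1}. The one place to tighten is part (2): the dichotomy ``either $w_{AB}\to0$ or $w_{AB}$ is bounded below along a subsequence'' is the right shape for convergence in probability, but almost sure convergence of the full sequence does not follow from a.s.\ convergence along a sub-subsequence of every subsequence. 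To repair it within your decomposition you need either a pointwise $\epsilon$--$K$ argument (for a.e.\ $\omega$ and each $n$, either the relevant subsample size is $\le K$, in which case the weight is $\le K/m_g\to0$ deterministically, or it exceeds $K$, in which case the a.s.\ convergence of the $U$-statistic process as the minimum sample size diverges bounds $|T_{AB}^{(l)}-p^{(l)}|$), and for $T_{cc}^{(l)}$ the off-diagonal sum is a one-sample degree-two $U$-statistic on the i.i.d.\ paired vectors $\bm{X}_k^{(c)}$ rather than a two-sample one, though Hoeffding's SLLN applies either way. A cleaner route that avoids all regime bookkeeping: since the $m_g=n_c+n_g$ group-$g$ observations are mutually independent across subjects with common marginal $F_g^{(l)}$, the weighted estimator $\widehat{F}_g^{(l)}$ in (\ref{Fhat-gl}) is just the ordinary normalized e.d.f.\ of an i.i.d.\ sample of size $m_g\to\infty$, and
\begin{equation*}
\bigl|\widehat{p}^{(l)}-p^{(l)}\bigr|\le\sup_x\bigl|\widehat{F}_1^{(l)}(x)-F_1^{(l)}(x)\bigr|+\Bigl|\tfrac{1}{m_2}\textstyle\sum_{k}F_1^{(l)}(X_{2k}^{(l)})-p^{(l)}\Bigr|\overset{a.s.}{\longrightarrow}0
\end{equation*}
by Glivenko--Cantelli and the strong law for bounded i.i.d.\ summands.
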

\begin{proof}
	Since $d$ is finite, it suffices to show asymptotic unbiasedness and strong consistency of the relative effect size on each component, i.e. $E(\widehat{p}^{(l)})\to p^{(l)}$ and $\widehat{p}^{(l)}\overset{a.s.}{\to}p^{(l)}$. Proof of these statements are provided in Proposition 4.1 of Konietschke et al. \cite{matched-pair-2012} and is therefore omitted.
\end{proof}                                       
\subsection{Asymptotic Distribution}                             
\label{sec:AsymnptoticDist}
It is shown in Theorem \ref{thm:equivalence} that $\sqrt{n}(\widehat{\bm{p}}-\bm{p})$ is asymptotically equivalent to a random vector, whose components are sums of independent random variables. Define $Y_{gk}^{(A)(l)}=F_s^{(l)}(X_{gk}^{(A)(l)})$, $g,s\in\{1,2\}$, $g\ne s$, $A\in\{c,i\}$ and denote $Z_k^{(c)(l)}=\theta_2Y_{2k}^{(c)(l)}-\theta_1Y_{1k}^{(c)(l)}$. Then define
\begin{equation}
	\label{U_N}
	\sqrt{n}U^{(l)}=\sqrt{n}(\frac{1}{n_c}\sum_{k=1}^{n_c}Z_k^{(c)(l)}+\frac{1}{m_2}\sum_{k=1}^{n_2}Y_{2k}^{(i)(l)}-\frac{1}{m_1}\sum_{k=1}^{n_1}Y_{1k}^{(i)(l)})+\sqrt{n}(1-2p^{(l)}).
\end{equation}

\begin{theorem}
	\label{thm:equivalence}                             
	Let $\sqrt{n}\bm{U}=\sqrt{n}(U^{(1)},\cdots,U^{(d)})^\top$. Then under the Assumptions \ref{assumption-3.1}, 
	\begin{equation*}
		\parallel\sqrt{n}(\widehat{\bm{p}}-\bm{p})-\sqrt{n}\bm{U}\parallel_2^2\to0,
	\end{equation*}
	where $\parallel\cdot\parallel_2$ is $L_2$-norm of vector, i.e. for a $q$-dimensional vector, $\parallel \bm{x}\parallel_2=(\sum_{i=1}^{q}x_i^2)^{\frac{1}{2}}$. \end{theorem}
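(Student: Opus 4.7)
The plan is to carry out a first-order Hoeffding (von Mises) expansion of each $\widehat p^{(l)}-p^{(l)}$, verify that the linear part coincides with $U^{(l)}$ by direct calculation, and control the quadratic remainder in mean square. Starting from the identity
\begin{equation*}
\widehat p^{(l)}-p^{(l)}=\int(\widehat F_1^{(l)}-F_1^{(l)})\,dF_2^{(l)}+\int F_1^{(l)}\,d(\widehat F_2^{(l)}-F_2^{(l)})+R^{(l)},
\end{equation*}
with $R^{(l)}=\int(\widehat F_1^{(l)}-F_1^{(l)})\,d(\widehat F_2^{(l)}-F_2^{(l)})$, I would substitute the mixture $\widehat F_g^{(l)}=\theta_g\widehat F_g^{(c)(l)}+(1-\theta_g)\widehat F_g^{(i)(l)}$ and use the normalized-count identity $\int c(x-y)\,dF_s^{(l)}(x)=1-F_s^{(l)}(y)$. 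Since $\theta_g/n_c=(1-\theta_g)/n_g=1/m_g$, routine algebra collapses the two linear integrals into exactly the expression (\ref{U_N}). This reduces the theorem to proving $nE[(R^{(l)})^2]\to 0$ for each $l=1,\ldots,d$.

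To control $R^{(l)}$, I would split each factor with the same mixture, producing
\begin{equation*}
R^{(l)}=\sum_{A,B\in\{c,i\}}w_{AB}\,R_{AB}^{(l)},\qquad R_{AB}^{(l)}=\int(\widehat F_1^{(A)(l)}-F_1^{(l)})\,d(\widehat F_2^{(B)(l)}-F_2^{(l)}),
\end{equation*}
with weights $w_{AB}$ equal to products of $\theta_g$ and $1-\theta_g$, hence bounded by $1$. Each $R_{AB}^{(l)}$ is a degree-two V-statistic with kernel $\phi(x,y)=c(y-x)-(1-F_2^{(l)}(x))-F_1^{(l)}(y)+p^{(l)}$, and a short calculation using $E[F_2^{(l)}(X)]=1-p^{(l)}$ and $E[F_1^{(l)}(Y)]=p^{(l)}$ shows that $\phi$ is completely Hoeffding-degenerate: $E[\phi(X,Y)\mid X]\equiv 0$ and $E[\phi(X,Y)\mid Y]\equiv 0$ when $X\sim F_1^{(l)}$ and $Y\sim F_2^{(l)}$ are independent. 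For $(A,B)\in\{(c,i),(i,c),(i,i)\}$ the two empirical processes use disjoint groups of subjects, and the standard variance identity for degenerate two-sample V-statistics yields $E[(R_{AB}^{(l)})^2]=O(1/(n_A n_B))$, with $n_A,n_B$ the corresponding sample sizes.

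The main obstacle is the $(c,c)$ piece, in which $\widehat F_1^{(c)(l)}$ and $\widehat F_2^{(c)(l)}$ are built from the same $n_c$ paired subjects. I would split its V-statistic into a diagonal part ($j=k$), whose $n_c$ uniformly bounded summands divided by $n_c^2$ contribute $O(1/n_c^2)$, and an off-diagonal part ($j\ne k$). For the variance of the off-diagonal part I would enumerate 4-tuples $(j,k,j',k')$ with $j\ne k,\ j'\ne k'$ by their pair-overlap pattern: the Hoeffding-degeneracy of $\phi$ combined with independence across different subjects kills $\mathrm{Cov}(\phi(X_{1j}^{(c)(l)},X_{2k}^{(c)(l)}),\phi(X_{1j'}^{(c)(l)},X_{2k'}^{(c)(l)}))$ whenever $|\{j,k,j',k'\}|\ge 3$, leaving only the patterns $\{j=j',\,k=k'\}$ and $\{j=k',\,j'=k\}$. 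These contribute $O(n_c^2)$ bounded covariances to a sum divided by $n_c^4$, so the off-diagonal variance, and hence $E[(R_{cc}^{(l)})^2]$, is $O(1/n_c^2)$.

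Combining the four pieces, $E[(R^{(l)})^2]\le 4\sum_{A,B}w_{AB}^2E[(R_{AB}^{(l)})^2]=O\bigl(\sum_{A,B}n_An_B/(m_1 m_2)^2\bigr)=O(1/(m_1 m_2))$ since $\sum_{A,B}n_An_B=m_1 m_2$. Assumption \ref{assumption-3.1} then gives $nE[(R^{(l)})^2]\le N_0/m_2\to 0$, and summing over the finitely many components yields $E\|\sqrt n(\widehat{\bm p}-\bm p)-\sqrt n\bm U\|_2^2\to 0$, from which the stated convergence follows. The critical step is the $(c,c)$ variance bookkeeping: naive bounds like $|R_{cc}^{(l)}|\le 2\|\widehat F_1^{(c)(l)}-F_1^{(l)}\|_\infty=O_P(n_c^{-1/2})$ are insufficient, notably in case (a) of Assumption \ref{assumption-3.1} where $n$ and $n_c$ are of the same order, and the extra $n_c^{-1/2}$ factor must be extracted from the pairwise degeneracy of $\phi$.
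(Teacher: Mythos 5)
Your proposal is correct and follows essentially the same route as the paper, which performs the first-order von Mises/Hoeffding expansion whose linear term is $U^{(l)}$ and controls the quadratic remainder in mean square, but omits the details by citing Theorem 1 of Konietschke and Brunner. Your explicit verification of the complete degeneracy of the kernel and the pair-overlap bookkeeping for the dependent $(c,c)$ piece supply precisely the steps the paper defers to the literature.
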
                                                                                                                                    
\begin{proof}
	This statement can be proved by showing $\textrm{E}[\sqrt{n}(\widehat{p}^{(l)}-p^{(l)})-\sqrt{n}U^{(l)}]^2\to0$ for all $l=1,\cdots,d$. The proof is similar to Theorem 1 in Konietschke and Brunner \cite{konietschke2009nonparametric} and is therefore omitted.
\end{proof}
Based on Theorem \ref{thm:equivalence}, it suffices to consider the distribution of vector $\sqrt{n}\bm{U}$. Let $\lambda_1,\cdots,\lambda_d$ denote the eigenvalues of $\bm{V}=Cov(\sqrt{n}\bm{U})$ and let $\underset{1\le l\le d}{\min}\{\lambda_l\}$ be the smallest eigenvalue. To derive the asymptotic distribution of $\sqrt{n}\bm{U}$, we need the regularity condition below:
\begin{assumption}
	\label{assumption-lambda}
	Let $\underset{1\le l\le d}{\min}\{\lambda_l\}\ge\lambda_0>0$, where $\lambda_0$ is some constant.
\end{assumption}

\begin{theorem}
	\label{thm:normality}
	Under Assumptions \ref{assumption-3.1} and \ref{assumption-lambda}, the statistic $\sqrt{n}(\widehat{\bm{p}}-\bm{p})$ has asymptotically a multivariate normal distribution with mean $\bm{0}$ and covariance matrix $\bm{V}$.
\end{theorem}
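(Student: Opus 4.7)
The plan is to exploit Theorem \ref{thm:equivalence} to reduce the limit law of $\sqrt{n}(\widehat{\bm{p}}-\bm{p})$ to that of $\sqrt{n}\bm{U}$. Since $L_2$ convergence implies convergence in probability, once $\sqrt{n}\bm{U}\Rightarrow N(\bm{0},\bm{V})$ is established, Slutsky's theorem yields the theorem. The vector $\sqrt{n}\bm{U}$ is a convenient replacement because, by construction, it decomposes (up to a deterministic centering) into three \emph{mutually independent} pieces indexed respectively by the $n_c$ complete cases, the $n_1$ incomplete cases in group 1, and the $n_2$ incomplete cases in group 2; within each piece the summands are i.i.d.\ and bounded.

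First I would invoke the Cram\'er--Wold device. For a fixed $\bm{t}\in\mathbb{R}^d$, $\bm{t}\neq\bm{0}$, form
\[
\sqrt{n}\,\bm{t}^\top\bm{U}=\sqrt{n}\Bigl(\tfrac{1}{n_c}\sum_{k=1}^{n_c}S_k^{(c)}+\tfrac{1}{m_2}\sum_{k=1}^{n_2}S_k^{(2)}-\tfrac{1}{m_1}\sum_{k=1}^{n_1}S_k^{(1)}\Bigr)+c_n,
\]
where $S_k^{(c)}=\sum_{l}t_l\,Z_k^{(c)(l)}$, $S_k^{(g)}=\sum_{l}t_l\,Y_{gk}^{(i)(l)}$ for $g=1,2$, and $c_n$ is the deterministic centering $\sqrt{n}\sum_l t_l(1-2p^{(l)})$. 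A direct expectation calculation shows that this centering exactly cancels the means of the three sums, so $\sqrt{n}\bm{t}^\top\bm{U}$ is a sum of three independent, centered averages. Because each $Y_{gk}^{(A)(l)}=F_s^{(l)}(X_{gk}^{(A)(l)})\in[0,1]$, the summands $S_k^{(c)},S_k^{(1)},S_k^{(2)}$ are uniformly bounded, hence possess finite moments of all orders.

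Next I would apply the classical Lindeberg--L\'evy CLT to each of the three independent centered sums. Boundedness of the summands makes the Lindeberg condition immediate, so each sum---rescaled by $\sqrt{n}/n_c$, $\sqrt{n}/m_1$, or $\sqrt{n}/m_2$ as appropriate---converges to a (possibly degenerate) centered normal whenever the corresponding index tends to infinity, and is $o_P(1)$ otherwise. Under Assumption \ref{assumption-3.1} every scaling coefficient remains bounded, so by independence the three contributions combine into a centered normal with variance $\lim\textrm{Var}(\sqrt{n}\bm{t}^\top\bm{U})=\bm{t}^\top\bm{V}\bm{t}$. Assumption \ref{assumption-lambda} guarantees $\bm{t}^\top\bm{V}\bm{t}\ge\lambda_0\|\bm{t}\|^2>0$, precluding degeneracy, and Cram\'er--Wold delivers $\sqrt{n}\bm{U}\Rightarrow N(\bm{0},\bm{V})$.

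The main obstacle I expect is not the CLT itself---boundedness of the $F$-transformed variables makes Lindeberg trivial---but the bookkeeping required to cover the four sub-cases (a)--(d) of Assumption \ref{assumption-3.1} uniformly. When some of $n_c,n_1,n_2$ remain bounded, the corresponding scaled sum disappears in the limit, and one must verify that the variances of the surviving pieces still assemble into the exact matrix $\bm{V}$ used in the statement (nothing lost, nothing double-counted), and in particular that $\textrm{Var}(\sqrt{n}\bm{U})$ converges to a well-defined limit in each regime. Once this case analysis is organized, the Cram\'er--Wold step closes the argument.
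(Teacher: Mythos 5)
Your proposal matches the paper's own argument: both reduce the problem to $\sqrt{n}\bm{U}$ via Theorem \ref{thm:equivalence}, verify the Lindeberg condition for the three independent sums of uniformly bounded summands, and assemble the joint limit with the Cram\'er--Wold device. Your additional care about the sub-cases of Assumption \ref{assumption-3.1} where some of $n_c,n_1,n_2$ stay bounded, and about Assumption \ref{assumption-lambda} precluding degeneracy, only fills in details the paper leaves implicit.
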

\begin{proof}                  
	The asymptotic normality of $\sqrt{n}(\widehat{\bm{p}}-\bm{p})$ can be established from the asymptotic distribution of random vector $\sqrt{n}\bm{U}$. Apart from some constants, $\sqrt{n}U^{(l)}$ is the sum of three independent random variables. Since the random variables $Z_k^{(c)(l)}$ and ${Y}_{gk}^{(i)(l)}$, $g\in\{1,2\}$ are uniformly bounded by Assumption \ref{assumption-3.1}, asymptotic normality of $\sqrt{n}U^{(l)}$ can be completed by verifying the Lindeberg's condition. 
	Furthermore, the joint normality is verified by the Cramér–Wold device.
\end{proof}                                                                        

\subsection{Estimation of the Asymptotic Covariance Matrix}
\label{sec:EstofCov}
To apply the results in Theorem \ref{thm:normality}, a $L_2$-\textrm{consistent} estimator of $\bm{V}$ is derived in this section. Denote by $v^{(l,r)}$ as the $(l,r)^{th}$ entry of $\bm{V}$. Then, by independence of $Z_{k}^{(c)(l)}$, $Y_{1k}^{(i)(l)}$ and $Y_{2k}^{(i)(l)}$ and also independence between subjects, 
\begin{align}
	\label{v_n}
	v^{(l,r)}&=n\textrm{Cov}(U^{(l)},U^{(r)})\displaybreak[0]\nonumber\\
	&=n\big[\textrm{Cov}(\frac{1}{m_1}\sum_{k=1}^{n_1}Y_{1k}^{(i)(l)},\frac{1}{m_1}\sum_{k=1}^{n_1}Y_{1k}^{(i)(r)})
	+\textrm{Cov}(\frac{1}{m_2}\sum_{k=1}^{n_2}Y_{2k}^{(i)(l)},\frac{1}{m_2}\sum_{k=1}^{n_2,}Y_{2k}^{(i)(r)})+\nonumber\\
	&\qquad \textrm{Cov}(\frac{1}{n_c}\sum_{k=1}^{n_c}Z_k^{(c)(l)},\frac{1}{n_c}\sum_{k=1}^{n_c}Z_k^{(c)(r)})\big]\displaybreak[0]\nonumber\\
	&=n\big[\frac{n_1}{m_1^2}\textrm{Cov}(Y_{11}^{(i)(l)},Y_{11}^{(i)(r)})+\frac{n_2}{m_2^2}\textrm{Cov}(Y_{21}^{(i)(l)},Y_{21}^{(i)(r)})+
	\frac{1}{n_c}\textrm{Cov}(Z_1^{(c)(l)},Z_1^{(c)(r)})\big]\displaybreak[0]\nonumber\\
	&=n(\frac{n_1}{m_1^2}c_1^{(l,r)}+\frac{n_2}{m_2^2}c_2^{(l,r)}+\frac{1}{n_c}c_3^{(l,r)}),
\end{align}
where $c_g^{(l,r)}=\textrm{Cov}(Y_{g1}^{(i)(l)},Y_{g1}^{(i)(r)})$ for $g=1,2$ and $c_3^{(l,r)}=\textrm{Cov}(Z_1^{(c)(l)},Z_1^{(c)(r)})$. Suppose the random variables $Y_{gk}^{(A)(l)}$ were observable, then natural estimators of $c_g^{(l,r)}$ are the empirical covariance
\begin{equation*}
	\tilde{c}_{g}^{(l,r)}=\frac{1}{n_g-1}\sum_{k=1}^{n_g}(Y_{gk}^{(i)(l)}-\overline{Y}_{g\cdot}^{(i)(l)})(Y_{gk}^{(i)(r)}-\overline{Y}_{g\cdot}^{(i)(r)}),\quad g=1,2,
\end{equation*}
and
\begin{equation*}
	\tilde{c}_3^{(l,r)}=\frac{1}{n_c-1}\sum_{k=1}^{n_c}(Z_k^{(c)(l)}-\overline{Z}_\cdot^{(c)(l)})(Z_k^{(c)(r)}-\overline{Z}_\cdot^{(c)(r)}),
\end{equation*}
where $\overline{Y}_{g\cdot}^{(c)(l)}=\frac{1}{n_c}\sum_{k=1}^{n_c}Y_{gk}^{(c)(l)}$, $\overline{Y}_{g\cdot}^{(i)(l)}=\frac{1}{n_g}\sum_{k=1}^{n_g}Y_{gk}^{(i)(l)}$ and $\overline{Z}_{\cdot}^{(c)(l)}=\frac{1}{n_c}\sum_{k=1}^{n_c}Z_{k}^{(c)(l)}$. Consequently an estimator of $v^{(l,r)}$ is given by 
\begin{equation}                                                                             
	\label{tildevN}
	\widetilde{v}^{(l,r)}=n(\frac{n_1}{m_1^2}\tilde{c}_1^{(l,r)}+\frac{n_2}{m_2^2}\tilde{c}_2^{(l,r)}+\frac{1}{n_c}\tilde{c}_3^{(l,r)}).                                    
\end{equation}
Now, let $\bm{Y}_{gk}^{(A)}=(Y_{gk}^{(A)(1)},\cdots,Y_{gk}^{(A)(d)})^\top$, $\overline{\bm{Y}}_{g\cdot}^{(A)}=(\overline{Y}_{g\cdot}^{(A)(1)},\cdots,\overline{Y}_{g\cdot}^{(A)(d)})^\top$, $\bm{Z}_k^{(c)}=(Z_{k}^{(c)(1)}\cdots,Z_{k}^{(c)(d)})^\top$ and $\overline{\bm{Z}}_\cdot^{(c)}=(\overline{Z}_\cdot^{(c)(1)},\cdots,\overline{Z}_\cdot^{(c)(d)})^\top$. Then a natural estimator of $\bm{V}$ is given by $\widetilde{\bm{V}}=\widetilde{\bm{V}}_{c}+\widetilde{\bm{V}}_{1}+\widetilde{\bm{V}}_{2}$, where
\begin{equation*}
	\widetilde{\bm{V}}_{c}=\frac{n}{n_c(n_c-1)}\sum_{k=1}^{n_c}(\bm{Z}_k^{(c)}-\overline{\bm{Z}}_\cdot^{(c)})(\bm{Z}_k^{(c)}-\overline{\bm{Z}}_\cdot^{(c)})^\top\quad\textrm{and}
\end{equation*} 
\begin{equation*}
	\widetilde{\bm{V}}_{g}=n\frac{n_g}{m_g^2(n_g-1)}\sum_{k=1}^{n_g}(\bm{Y}_{gk}^{(i)}-\overline{\bm{Y}}_{g\cdot}^{(i)})(\bm{Y}_{gk}^{(i)}-\overline{\bm{Y}}_{g\cdot}^{(i)})^\top,\quad g=1,2.
\end{equation*}
However, the random variables $Y_{gk}^{(A)(l)}$ are not observable and must be replaced by observable ones, which are "close enough" in probability. One can verify that $\widehat{F}_{g}^{(l)}$ defined in (\ref{Fhat-gl}) can be represented as 
\begin{equation*} 
	\widehat{F}_{g}^{(l)}=\frac{1}{n_c+n_g}\big[\sum_{k=1}^{n_c}c(x-X_{gk}^{(c)(l)})+\sum_{k=1}^{n_g}c(x-X_{gk}^{(i)(l)})\big],\quad g=1,2.
\end{equation*}                                                          
Hence, the empirical counterparts 
$\widehat{Y}_{gk}^{(A)(l)}=\widehat{F}_s^{(l)}{(X_{gk}^{(A)(l)})}$, $s\ne g\in\{1,2\}$, $A\in\{c,i\}$ can be expressed in terms of the overall ranks $R_{gk}^{(A)(l)}$ from $X_{gk}^{(A)(l)}$ among all $N$ observations on the $l^{th}$ component, and the internal ranks $R_{gk}^{(A,g)(l)}$ of $X_{gk}^{(A)(l)}$ among all $n_c+n_g$ observations on the $l^{th}$ component in the $g^{th}$ treatment group. Specifically,
\begin{equation*}
	\widehat{Y}_{gk}^{(A)(l)}=\frac{1}{n_c+n_s}(R_{gk}^{(A)(l)}-R_{gk}^{(A,g)(l)}), g\ne s\in\{1,2\},A\in\{c,i\}\quad\textrm{and}\quad l=1,\cdots,d.
\end{equation*}
Further, let $\bm{R}_{gk}^{(A)}=(R_{gk}^{(A)(1)},\cdots,R_{gk}^{(A)(d)})^\top$ and $\bm{R}_{gk}^{(A,g)}=(R_{gk}^{(A,g)(1)},\cdots,R_{gk}^{(A,g)(d)})^\top$ denote the vector of the ranks $R_{gk}^{(A)(l)}$ and $R_{gk}^{(A,g)(l)}$, respectively, for $g=1,2$ and $l=1,\cdots,d$. Also, define 
\begin{equation*}
	\overline{\bm{R}}_{g\cdot}^{(c)}=\frac{1}{n_c}\sum_{k=1}^{n_c}\bm{R}_{gk}^{(c)},\quad\overline{\bm{R}}_{g\cdot}^{(i)}=\frac{1}{n_g}\sum_{k=1}^{n_g}\bm{R}_{gk}^{(i)},
\end{equation*}
\begin{equation*}
	\overline{\bm{R}}_{g\cdot}^{(c,g)}=\frac{1}{n_c}\sum_{k=1}^{n_c}\bm{R}_{gk}^{(c,g)}\text{ and }\quad\overline{\bm{R}}_{g\cdot}^{(i,g)}=\frac{1}{n_g}\sum_{k=1}^{n_g}\bm{R}_{gk}^{(i,g)}=\frac{n_g+1}{2}\mathbbm{1}_d,\quad g=1,2.
\end{equation*}
Let $\bm{B}_{gk}^{(A)}=\bm{R}_{gk}^{(A)}-\bm{R}_{gk}^{(A,g)}$, $\overline{\bm{B}}_{g\cdot}^{(A)}=\overline{\bm{R}}_{g\cdot}^{(A)}-\overline{\bm{R}}_{g\cdot}^{(A,g)}$for $A\in\{c,i\}$. As proposed before, an estimator of $\bm{V}$ is given by $\widehat{\bm{V}}=\widehat{\bm{V}}_{c}+\widehat{\bm{V}}_{1}+\widehat{\bm{V}}_{2}$, where
\begin{equation}
	\label{V-12}
	\widehat{\bm{V}}_{g}=n\frac{n_g}{m_1^2m_2^2(n_g-1)}\sum_{k=1}^{n_g}(\bm{B}_{gk}^{(i)}-\overline{\bm{B}}_{g\cdot}^{(i)})(\bm{B}_{gk}^{(i)}-\overline{\bm{B}}_{g\cdot}^{(i)})^\top, g=1,2\text{ and }
\end{equation}

\begin{equation}
	\label{V-3}
	\widehat{\bm{V}}_{c}=n\frac{n_c}{m_1^2m_2^2(n_c-1)}\sum_{k=1}^{n_c}\big[\bm{B}_{2k}^{(c)}-\overline{\bm{B}}_{2\cdot}^{(c)}-(\bm{B}_{1k}^{(c)}-\overline{\bm{B}}_{1\cdot}^{(c)})\big]\big[\bm{B}_{2k}^{(c)}-\overline{\bm{B}}_{2\cdot}^{(c)}-(\bm{B}_{1k}^{(c)}-\overline{\bm{B}}_{1\cdot}^{(c)})\big]^\top.
\end{equation}                                   
The consistency of the estimator $\widehat{\bm{V}}$ is established in Theorem \ref{thm:consistency}.
\begin{theorem}
	\label{thm:consistency}
	Under Assumption \ref{assumption-3.1} and \ref{assumption-lambda},
	\[\parallel\widehat{\bm{V}}-\bm{V}\parallel_2^2\rightarrow0,\]
	where $\parallel\cdot\parallel_2$ is $L_2$-norm of matrix, i.e.  for any $s\times t$ matrix $A$, $\parallel A\parallel_2=(\sum_{i=1}^{s}\sum_{j=1}^{t}|a_{ij}|^2)^{1/2}$.
\end{theorem}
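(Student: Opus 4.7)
The plan is to sandwich $\widehat{\bm{V}}$ against an idealised (``oracle'') estimator $\widetilde{\bm{V}}$ obtained from (\ref{V-12})--(\ref{V-3}) by replacing every empirical proxy $\widehat{Y}_{gk}^{(A)(l)}=\widehat{F}_s^{(l)}(X_{gk}^{(A)(l)})$ by its unobservable ideal $Y_{gk}^{(A)(l)}=F_s^{(l)}(X_{gk}^{(A)(l)})$, and to use the splitting
\begin{equation*}
\|\widehat{\bm{V}}-\bm{V}\|_2^2 \le 2\,\|\widehat{\bm{V}}-\widetilde{\bm{V}}\|_2^2 + 2\,\|\widetilde{\bm{V}}-\bm{V}\|_2^2.
\end{equation*}
Two algebraic identities make the reduction transparent: with $s\ne g$, one has $B_{gk}^{(A)(l)}=m_s\,\widehat{Y}_{gk}^{(A)(l)}$, and the complete-case combination satisfies $B_{2k}^{(c)(l)}-B_{1k}^{(c)(l)}=(m_1m_2/n_c)\,\widehat{Z}_k^{(c)(l)}$, with $\widehat{Z}_k^{(c)(l)}=\theta_2\widehat{Y}_{2k}^{(c)(l)}-\theta_1\widehat{Y}_{1k}^{(c)(l)}$. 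Plugging these into (\ref{V-12})--(\ref{V-3}), the extraneous $m_1,m_2$ factors inside the prefactors collapse: $\widehat{\bm{V}}_g$ becomes a classical sample-covariance matrix in the $\widehat{Y}_{gk}^{(i)}$ with prefactor $n\,n_g/(m_g^2(n_g-1))$, and $\widehat{\bm{V}}_c$ becomes a classical sample-covariance matrix in the $\widehat{Z}_k^{(c)}$ with prefactor $n/(n_c(n_c-1))$; $\widetilde{\bm{V}}$ is the same formula in the unobservable truth.

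For the oracle step ($\widetilde{\bm{V}}\to\bm{V}$ in $L_2$), Assumption \ref{assumption-3.1} bounds the coefficients $n\,n_g/m_g^2$ and $n/n_c$ appearing in (\ref{v_n}), and each $\tilde{c}$ is a sample covariance of i.i.d.\ random variables that are uniformly bounded in $[-1,1]$. Standard second-moment bounds therefore give $E(\tilde{v}^{(l,r)}-v^{(l,r)})^2=O(1/n_{\star})$ whenever the governing sample size $n_{\star}$ tends to infinity. In the degenerate regimes of Assumption \ref{assumption-3.1} where one of $n_c,n_1,n_2$ stays bounded, the coefficient of the non-converging sample covariance is itself $o(1)$; in particular, when $n_c$ is bounded, $\theta_g=n_c/m_g\to0$ drives $Z_k^{(c)(l)}$ uniformly to $0$ and annihilates the $c_3$ term. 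Hence $E(\tilde{v}^{(l,r)}-v^{(l,r)})^2\to0$ in every regime allowed by Assumption \ref{assumption-3.1}.

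For the proxy step ($\widehat{\bm{V}}-\widetilde{\bm{V}}\to0$ in $L_2$), the reduction shows each entry of the difference is a bounded linear combination of sample cross-covariances involving the residuals
\begin{equation*}
\widehat{Y}_{gk}^{(A)(l)} - Y_{gk}^{(A)(l)} = \widehat{F}_s^{(l)}(X_{gk}^{(A)(l)}) - F_s^{(l)}(X_{gk}^{(A)(l)}).
\end{equation*}
Since $m_s=n_c+n_s\to\infty$ under Assumption \ref{assumption-3.1}, Glivenko--Cantelli yields $\sup_x|\widehat{F}_s^{(l)}(x)-F_s^{(l)}(x)|\overset{a.s.}{\to}0$. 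Expanding each sample-covariance entry by bilinearity, absorbing one residual factor via Cauchy--Schwarz into the uniformly bounded factor $|\widehat{Y}|,|Y|\le1$, and applying dominated convergence upgrades this a.s.\ uniform convergence to $L_2$ convergence of every entry. Summing over the finitely many $d^2$ entries of the matrix closes the argument.

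The main obstacle will be Step 1 in the degenerate regimes (cases (a) and (d)), where the relevant sample covariance itself does not converge and one must lean on the vanishing of its prefactor, or on the shrinkage of $Z_k^{(c)(l)}$ via $\theta_g\to0$, rather than on the law of large numbers. A secondary bookkeeping task is verifying the prefactor cancellation $n\,n_c\,(m_1m_2/n_c)^2/(m_1^2m_2^2(n_c-1))=n/(n_c(n_c-1))$ that makes $\widehat{\bm{V}}_c$ fit into the oracle framework; once this is in hand, the proof is essentially a Glivenko--Cantelli argument grafted onto the familiar $L_2$-consistency of sample covariances of bounded i.i.d.\ vectors.
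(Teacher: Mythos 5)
Your proposal is correct and follows essentially the same architecture as the paper's proof: the same oracle decomposition $\|\widehat{\bm{V}}-\bm{V}\|_2^2\le \|\widehat{\bm{V}}-\widetilde{\bm{V}}\|_2^2+\|\widetilde{\bm{V}}-\bm{V}\|_2^2$, the same reduction of the rank expressions (\ref{V-12})--(\ref{V-3}) to sample covariances of $\widehat{Y}_{gk}^{(A)(l)}$ and $\widehat{Z}_k^{(c)(l)}$ via $B_{gk}^{(A)(l)}=m_s\widehat{Y}_{gk}^{(A)(l)}$, and the same second-moment treatment of the oracle term entrywise, with the $\theta_g$-prefactors absorbing the degenerate sample-size regimes exactly as the paper's bound $O(1/m_1+1/m_2)$ does implicitly. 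The one substantive divergence is the proxy step: the paper imports the quantitative bound $E\big(\widehat{F}_s^{(l)}(X_{gk}^{(A)(l)})-F_s^{(l)}(X_{gk}^{(A)(l)})\big)^2=O(1/m_s)$ from Konietschke et al.\ and propagates it through $C_r$- and Jensen-type inequalities, whereas you invoke Glivenko--Cantelli plus dominated convergence; your route yields consistency without a rate and requires a brief extra remark that the weighted empirical process $\theta_s\widehat{F}_s^{(c)(l)}+(1-\theta_s)\widehat{F}_s^{(i)(l)}$ still converges uniformly when only one of $n_c,n_s$ diverges (the non-converging piece carries a vanishing weight), but this is a minor addition and both arguments close the proof.
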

\begin{proof}                                                                               
	In Appendix.
\end{proof} 
\section{Test Statistics}
\label{sec:TestStat}
In this section, we will introduce two test statistics for $H_0:\widehat{\bm{p}}=\frac{1}{2}\mathbbm{1}_d$. The first one is Wald-type statistic, which is commonly used for nonparametric model in the multivariate structure. The second one is ANOVA-type statistic, which is first introduced in Brunner and Dette \cite{brunner1997box} for univariate factorial designs and further extended to the multivariate structure in Brunner et al.  \cite{brunner2002multivariate}.
\label{sec:StatandCI}     
\subsection{Wald-type Statistic}
From Theorem \ref{thm:normality}, it follows under $H_0:\bm{p}=\frac{1}{2}\mathbbm{1}_d$ that the statistic $\sqrt{n}\bm{V}^{-1/2}(\widehat{\bm{p}}-\frac{1}{2}\mathbbm{1}_d)$ has asymptotic multivariate standard normal distribution $\bm{\textbf{N}}_d(\bm{0},\bm{I}_d)$. By the Continuous Mapping Theorem, distribution of the quadratic form 
\begin{equation}
	Q_n^*=n\cdot(\widehat{\bm{p}}-\frac{1}{2}\mathbbm{1}_d)^\top\bm{V}^{-1}(\widehat{\bm{p}}-\frac{1}{2}\mathbbm{1}_d)
\end{equation} 
tends to a central $\chi^2$-distribution with $d$ degrees of freedom. However, the covariance matrix $\bm{V}$ is unknown and must be replaced by a consistent estimator. From Theorem \ref{thm:consistency}, it follows that $\parallel\widehat{\bm{V}}^{-1}-\bm{V}^{-1}\parallel_2^2\rightarrow0$ under Assumption \ref{assumption-lambda}. Therefore, under $H_0$, the quadratic form 
\begin{equation}
	\label{Q_N}
	Q_n=n\cdot(\widehat{\bm{p}}-\frac{1}{2}\mathbbm{1}_d)^\top\widehat{\bm{V}}^{-1}(\widehat{\bm{p}}-\frac{1}{2}\mathbbm{1}_d)
\end{equation}
has the same asymptotic distribution as $Q_n^*$, which is also a $\chi^2$-distribution with $d$ degrees of freedom. This can be verified by  
\begin{equation*}
	\parallel Q_n^*-Q_n\parallel_2^{2}\le n\cdot\parallel\widehat{\bm{p}}-\frac{1}{2}\mathbbm{1}_d\parallel_2^2\cdot\parallel\bm{V}^{-1}-\widehat{\bm{V}}^{-1}\parallel_2^2\rightarrow0,
\end{equation*}                                                                       
which follows by Proposition \ref{prop-phat} (2).  Therefore, the null hypothesis $H_0:\bm{p}=\frac{1}{2}\mathbbm{1}_d$ will be rejected at significance level $\alpha$ if $Q_n\ge\chi^2_\alpha$.

The quadratic form $Q_n$ is called the nonparametric Wald-type statistic. It is well-known that this statistic has slow convergence to the limiting $\chi^2$-distribution. Thus $Q_N$ may become extremely liberal unless a very large sample size is available. Hence, Wald-type statistics should only be applied in the case of large sample sizes.

\subsection{ANOVA-type Statistic}
In practice, data sets may not have large enough sample size and the Wald-type statistic may not provide accurate results. Therefore, it is necessary to develop a test with small sample approximation which maintains the pre-assigned level with satisfactory accuracy. The idea for developing such a procedure is to replace the inverse $\widehat{\bm{V}}^{-1}$ of the estimated covariance matrix in (\ref{Q_N}) with $1/tr(\widehat{\bm{V}})$ much in the same way the ANOVA tests do and consider the asymptotic distribution of the resulting quadratic form.  
Details of the approximation process can be found in Brunner et al. \cite{brunner2002multivariate} and is summarized below.
\begin{app}                                                                                                                                  
	Under Assumption \ref{assumption-3.1} and Assumption \ref{assumption-lambda}, the statistic
	\begin{equation}
		\label{ANOVA}
		F_n=\frac{n}{tr(\widehat{\bm{V}})}(\widehat{\bm{p}}-\frac{1}{2}\mathbbm{1}_d)^\top(\widehat{\bm{p}}-\frac{1}{2}\mathbbm{1}_d)=\frac{n}{tr(\widehat{\bm{V}})}\cdot\sum_{l=1}^{d}(\widehat{p}^{(l)}-\frac{1}{2})^2
	\end{equation}
	has, approximately, a central $F(\hat{\nu},\infty)$ distribution under $H_0:\bm{p}=\frac{1}{2}\mathbbm{1}_d$, where $\hat{\nu}=\frac{[tr(\widehat{\bm{V}})]^2}{tr(\widehat{\bm{V}}^2)}$.
\end{app}
Test statistic $F_n$ is called ANOVA-type statistic. For small sample sizes, the null hypothesis $H_0:\bm{p}=\frac{1}{2}\mathbbm{1}_d$ will be rejected at significance level $\alpha$ if $F_n\ge F_\alpha(\hat{\nu},\infty)$.

\section{Extension to General Missing Pattern}
\label{sec:flex-missing}                                                                 
In Sections \ref{sec:ModelandHypothesis}-\ref{sec:TestStat}, we assumed that incomplete data occur at treatment level as, for example, a subject not showing up for a scheduled treatment. 
That means this subject will have no observation on all variables for that treatment group. However, in practical applications, missing data may occur for any component (variable) within a treatment group. In this situation, the total number of possible missing patterns is $2^{2d}-1$, not counting completely missing cases. Also note that the order of the missing patterns is completely arbitrary and is irrelevant to the methodology to be developed. The central idea of our approach is to combine estimates from cases with different missing patterns by weighing them appropriately. Let the $2d\times n$ data matrix be denoted as $\bm{X}=(\bm{X}_1,\cdots,\bm{X}_n)$, where the first $d$ rows are measurements on the $d$ variables in the first group and the remaining $d$ rows are measurements on the $d$ variables in the second group. The columns represent subjects. Suppose there are $K$ missing patterns  in the data set and let $\bm{X}$ be partitioned as $(\bm{X}^{(1)},\cdots,\bm{X}^{(K)})$ where $K\le2^{2d}-1$ and $\bm{X}^{(k)}$ is a $2d\times n_k$ matrix of observations for the $k^{th}$ missing pattern. That is, $\bm{X}^{(k)}$ contains data from $n_k$ subjects for whom there are observations on the same $p_k$ components, say $(j_1,\cdots,j_{p_k})$, among all $2d$ components, and no observations for the remaining $2d-p_k$ components. For the sake of brevity, we assume that all possible patterns are represented in the data matrix, i.e. $K=1,2,\cdots,2^{2d}-1$. To further display missing patterns by means of component allocations, we define a $2d\times2d$ matrix $B_k$ such that for $s=1,\cdots,2d$, the $s^{th}$ row of $B_k$ has unity at the $j_s^{th}$ position and zero elsewhere. With the assumption that at least one subject has complete data on all components in both groups (i.e. $n_1\ge1$), $B_1=I_{2d}$. 

Based on the notations defined above, complete and incomplete sample sizes on the $l^{th}$ component can be represented as 
$$n_c^{(l)}=\sum_{k=1}^{K}n_kI_{\{B_{k,l}=B_{k,l+d}=1\}},\quad n_{1}^{(l)}=\sum_{k=1}^{K}n_kI_{\{B_{k,l}=1,B_{k,l+d}=0\}}$$
$$\textrm{and}\quad n_{2}^{(l)}=\sum_{k=1}^{K}n_kI_{\{B_{k,l}=0,B_{k,l+d}=1\}},$$
where $B_{k,l}$ refer to the $l^{th}$ diagonal entry of matrix $B_k$. In addition, denote $m_g^{(l)}=n_c^{(l)}+n_{g}^{(l)}$ and, $n=\sum_{k=1}^{K}n_k$. Similar to Assumption \ref{assumption-3.1}, we need an assumption on the sample size allocations.
\begin{assumption}
	\label{assumption-flexible}
	$\underset{g,l}{\min}\{n_c^{(l)}+n_{g}^{(l)}\}\rightarrow\infty$ such that $\underset{g,l}{\max}\{\frac{n}{n_c^{(l)}+n_{g}^{(l)}}\}\le N_0<\infty$ for $l=1,\cdots,d$ and $g=1,2$. 
\end{assumption}
Let $\bigtriangleup_{gk}^{(A)(l)}=1$ if $X_{gk}^{(A)(l)}$ is observed, and $\bigtriangleup_{gk}^{(A)(l)}=0$ otherwise. Define the index set $S^{(c,l)}=\{k: \bigtriangleup_{1k}^{(c)(l)}\cdot \bigtriangleup_{2k}^{(c)(l)}=1\}$ of all complete cases on the $l^{th}$ component. Similarly, define $S^{(g,l)}=\{k:\bigtriangleup_{gk}^{(i)(l)}=1\wedge\bigtriangleup_{g'k}^{(i)(l)}=0\}$, $g\ne g'\in\{1,2\}$ as the index set of all incomplete cases on the $l^{th}$ component in the $g^{th}$ group. In this setup, we define 
\[\widehat{F}_g^{(c)(l)}(x)=\frac{1}{n_c^{(l)}}\sum_{k=1}^{n_c^{(l)}}c(x-X_{gk}^{(c)(l)})\quad \textrm{and}\quad \widehat{F}_g^{(i)(l)}(x)=\frac{1}{n_g^{(l)}}\sum_{k=1}^{n_g^{(l)}}c(x-X_{gk}^{(i)(l)})\]
and plug them in equation (\ref{Fhat-gl}) to define the empirical distribution functions $\widehat{F}_g^{(l)}(x),g=1,2$. This estimate will be used in equation (\ref{equation:phatl}) to define an estimator $\widehat{\bm{p}}$ of the vector of relative treatment effects $\bm{p}$. Furthermore, write
\begin{equation*} 
	\begin{split}                 
		\sqrt{n}U^{(l)}&=\sqrt{n}\big[\frac{1}{m_2^{(l)}}\sum_{k\in S^{(c,l)}}F_1^{(l)}(X_{2k}^{(c)(l)})-\frac{1}{m_1^{(l)}}\sum_{k\in S^{(c,l)}}F_2^{(l)}(X_{1k}^{(c)(l)})+\frac{1}{m_{2}^{(l)}}\sum_{k\in S^{(2,l)}}F_1^{(l)}(X_{2k}^{(i)(l)})\\
		&\quad-\frac{1}{m_{1}^{(l)}}\sum_{k\in S^{(1,l)}}F_2^{(l)}(X_{1k}^{(i)(l)})\big]+\sqrt{n}(1-2p^{(l)})\\
		&=\sqrt{n}(\frac{1}{n_c^{(l)}}\sum_{k\in S^{(c,l)}}Z_k^{(c)(l)}+\frac{1}{m_2^{(l)}}\sum_{k\in S^{(2,l)}}Y_{2k}^{(i)(l)}-\frac{1}{m_1^{(l)}}\sum_{k\in S^{(1,l)}}Y_{1k}^{(i)(l)})+\sqrt{n}(1-2p^{(l)}),
	\end{split}
\end{equation*}
where $Z_k^{(c)(l)}=\theta_2^{(l)}Y_{2k}^{(c)(l)}-\theta_1^{(l)}Y_{1k}^{(c)(l)}$. Let  $\sqrt{n}\bm{U}=\sqrt{n}(U^{(1)},\cdots,U^{(d)})^\top$ and $\bm{V}=Cov(\sqrt{n}\bm{U})$. At this point, we can prove that Theorem \ref{thm:equivalence} still holds by verifying $\textrm{E}[\sqrt{n}(\widehat{p}^{(l)}-p^{(l)})-\sqrt{n}U^{(l)}]^2\to0$ for all $l=1,\cdots,d$. Therefore, under Assumptions \ref{assumption-lambda} and \ref{assumption-flexible}, $\sqrt{n}(\widehat{\bm{p}}-\bm{p})$ has asymptotically a multivariate normal distribution with mean $\bm{0}$ and covariance $\bm{\bm{V}}$, as proved in Theorem \ref{thm:normality}. The $(l,r)^{th}$ entry of the covariance matrix $\bm{V}$ can be computed as 
\begin{align}
	\label{vir_flex}
	v^{(l,r)}&=n\textrm{Cov}(U^{(l)},U^{(r)})\nonumber\\
	&=n\textrm{Cov}(\frac{1}{n_c^{(l)}}\sum_{k\in S^{(c,l)}}Z_k^{(c)(l)}+\frac{1}{m_2^{(l)}}\sum_{k\in S^{(2,l)}}Y_{2k}^{(i)(l)}-\frac{1}{m_1^{(l)}}\sum_{k\in S^{(1,l)}}Y_{1k}^{(i)(l)},\displaybreak[0]\nonumber\\
	&\qquad\qquad\frac{1}{n_c^{(r)}}\sum_{k\in S^{(c,r)}}Z_k^{(c)(r)}+\frac{1}{m_2^{(r)}}\sum_{k\in S^{(2,r)}}Y_{2k}^{(i)(r)}-\frac{1}{m_1^{(r)}}\sum_{k\in S^{(1,r)}}Y_{1k}^{(i)(r)})\displaybreak[0]\nonumber\\
	&=n\big[\underset{C_1}{\frac{1}{n_c^{(l)}n_c^{(r)}}\underbrace{\textrm{Cov}(\sum_{k\in S^{(c,l)}}Z_k^{(c)(l)},\sum_{k\in S^{(c,r)}}Z_k^{(c)(r)})}}
	+\underset{C_2}{\frac{1}{n_c^{(l)}m_2^{(r)}}\underbrace{\textrm{Cov}(\sum_{k\in S^{(c,l)}}Z_k^{(c)(l)},\sum_{k\in S^{(2,r)}}Y_{2k}^{(i)(r)})}}\displaybreak[0]\nonumber\\
	&\qquad -\underset{C_3}{\frac{1}{n_c^{(l)}m_1^{(r)}}\underbrace{\textrm{Cov}(\sum_{k\in S^{(c,l)}}Z_k^{(c)(l)},\sum_{k\in S^{(1,r)}}Y_{1k}^{(i)(r)})}}+\underset{C_4}{\frac{1}{m_2^{(l)}n_c^{(r)}}\underbrace{\textrm{Cov}(\sum_{k\in S^{(2,l)}}Y_{2k}^{(i)(l)},\sum_{k\in S^{(c,r)}}Z_k^{(c)(r)})}}\displaybreak[0]\nonumber\\
	&\qquad+\underset{C_5}{\frac{1}{m_2^{(l)}m_2^{(r)}}\underbrace{\textrm{Cov}(\sum_{k\in S^{(2,l)}}Y_{2k}^{(i)(l)},\sum_{k\in S^{(2,r)}}Y_{2k}^{(i)(r)})}}-\underset{C_6}{\frac{1}{m_2^{(l)}m_1^{(r)}}\underbrace{\textrm{Cov}(\sum_{k\in S^{(2,l)}}Y_{2k}^{(i)(l)},\sum_{k\in S^{(1,r)}}Y_{1k}^{(i)(r)})}}\displaybreak[0]\nonumber\\
	&\qquad-\underset{C_7}{\frac{1}{m_1^{(l)}n_c^{(r)}}\underbrace{\textrm{Cov}(\sum_{k\in S^{(1,l)}}Y_{1k}^{(i)(l)},\sum_{k\in S^{(c,r)}}Z_k^{(c)(r)})}}-\underset{C_8}{\frac{1}{m_1^{(l)}m_2^{(r)}}\underbrace{\textrm{Cov}(\sum_{k\in S^{(1,l)}}Y_{1k}^{(i)(l)},\sum_{k\in S^{(2,r)}}Y_{2k}^{(i)(r)})}}\displaybreak[0]\nonumber\\
	&\qquad+\underset{C_9}{\frac{1}{m_1^{(l)}m_1^{(r)}}\underbrace{\textrm{Cov}(\sum_{k\in S^{(1,l)}}Y_{1k}^{(i)(l)},\sum_{k\in S^{(1,r)}}Y_{1k}^{(i)(r)})}}\big].\displaybreak[0]
\end{align}

A consistent estimator $\widehat{\bm{V}}$ of $\bm{V}$ is achieved by separately estimating terms involving $C_1$-$C_9$. Details are shown in the Appendix. Test procedures described in Section \ref{sec:TestStat} can be directly applied by using the newly constructed covariance matrix estimator $\widehat{\bm{V}}$ in (\ref{Q_N}) and (\ref{ANOVA}). 

\section{Simulation Study}
\label{sec:SimuRes}
In this section, we examine the performance of the proposed Wald-type statistic $Q_n$ in (\ref{Q_N}) and that of the ANOVA-type statistic $\widehat{F}_n$ in (\ref{ANOVA}). The evaluations focus on (a) control of the preassigned Type-I error level ($\alpha=0.05$) under  $H_0:\bm{p}=\frac{1}{2}\mathbbm{1}_d$ and (b) achieved powers to detect specific alternatives. Also, a simulation study will be conducted to investigate the accuracy of these test procedures for datasets with general missing patterns. 
\subsection{Simulation Settings}
\label{Sec:SimulationSetting}
The simulation seeks to generate evidence on the performance of the tests along the two criteria in various scenarios that cover a wide-spectrum of reasonable models. The study involves multivariate data with strong/weak and positive/negative correlations for small and moderate sample sizes. 
The observations $\bm{X}_j^{(c)}$ and $\bm{X}_{gk}^{(i)}$, $j=1,\cdots,n_c$, $g=1,2$, $k=1,\cdots,n_g$ will be generated from Discretized Multivariate Normal, Multivariate Log-Normal and Multivariate Cauchy distributions, which represent discrete, skewed and heavily tailed data, respectively. Covariance or scale matrix of the data will be set to
\[\bm{\Sigma}=
\begin{bmatrix}
	\sigma_1^2\textrm{I}_d+\rho_1\sigma_1^2(\textrm{J}_d-\textrm{I}_d) & \rho_{12}\sigma_1\sigma_2\textrm{J}_d&\\
	&&\\
	\rho_{12}\sigma_1\sigma_2\textrm{J}_d& \sigma_2^2\textrm{I}_d+\rho_2\sigma_2^2(\textrm{J}_d-\textrm{I}_d) &\\
\end{bmatrix},\]
where $J_d$ is $d$ dimensional square matrix of all ones. The impact of between and within treatment group correlations can be investigated by varying values of $\rho_1$, $\rho_2$ and $\rho_{12}$. Further, homoscedastic and heteroscedastic scenarios are covered by setting $\sigma_1^2=\sigma_2^2$ and $\sigma_1^2\ne\sigma_2^2$, respectively. For the correlations and variances, we investigate for $(\rho_1,\rho_2,\rho_{12})\in\{(0.1,0.1,0.1),(-0.1,-0.1,-0.1),(0.1,0.9,0.5)\}$ and $(\sigma_1^2,\sigma_2^2)\in\{(1,1),(1,5)\}$. With this covariance matrix $\bm{\Sigma}$, we consider three multivariate distributions:
\begin{itemize}
	\item Discretized Multivariate Normal: Data for the complete as well as incomplete cases are generated from multivariate normal distributions and then each component is rounded to the nearest integer. More precisely, defining $[\cdot]$ as the rounding operator, data for the complete cases are generated as  $\textbf{X}_j^{(c)}=(X_{1j}^{(c)(1)},\cdots,X_{1j}^{(c)(d)},X_{2j}^{(c)(1)},\cdots,X_{2j}^{(c)(d)})$, where $X_{gj}^{(c)(d)}=[W_{gk}^{(c)(d)}]$ and $\bm{W}_j^{(c)}=(W_{1j}^{(c)(1)},\cdots,W_{1j}^{(c)(d)},W_{2j}^{(c)(1)},\cdots,W_{2j}^{(c)(d)})\sim N(\bm{0},\bm{\Sigma})$. Data for incomplete cases are generated in the same manner.
	\item Multivariate Log-Normal: Data for the complete as well as incomplete cases are generated from multivariate normal distributions and then each component is exponentiated. Specifically, data for complete cases are generated as $\textbf{X}_j^{(c)}$=$(X_{1j}^{(c)(1)},\cdots,X_{1j}^{(c)(d)},X_{2j}^{(c)(1)},\cdots,X_{2j}^{(c)(d)})$, where $X_{gj}^{(c)(d)}={\rm exp}(W_{gj}^{(c)(d)})$ and $\bm{W}_{j}^{(c)}$ $\sim \bm{N}(\bm{0},\bm{\Sigma})$. Data for incomplete cases are generated similarly.
	\item Multivariate Cauchy: Both the complete and incomplete cases are generated from the multivariate Cauchy distributions
	$\bm{C}(\textbf{0},\bm{\Sigma})$.
\end{itemize}
Four combinations of sample sizes listed in Table \ref{Table:SampleComb} will be considered.
\begin{table}[!htb]
	\caption{Complete and Incomplete Sample Sizes Combinations}
	\label{Table:SampleComb}
	\centering 
	\begin{tabular}{|c|c|c|c|}
		\hline
		Setting & $n_c$ & $n_1$ & $n_2$ \\ \hline
		1 & 10 & 30 & 30 \\ \hline
		2 & 30 & 10 & 10 \\ \hline                                                                                        
		3 & 30 & 30 & 10 \\ \hline
		4 & 10 & 10 & 30 \\ \hline                                                                                                                                                                                                                                                     
	\end{tabular}
\end{table}
Dimensions of the multivariate data will be set to $d = 2 ,3, 5$. For every combination of the sample size, dimension, covariance matrices and distributions, 1000 simulations are performed. The empirical sizes or powers are calculated from these replications. All the computations are done in R (version 3.6.0) \cite{RStudio}.

Two different methods for handling missing values in nonparametric multivariate analysis are considered for comparison: 
\begin{itemize}
	\item the method of Brunner et al. \cite{brunner2002multivariate} which is designed for independent multivariate samples. We keep data from incomplete cases only for each treatment group so that the problem reduces to that of two independent samples. This method is a special case of the methods developed in this paper by setting $n_c=0$.
	\item A special case of the methods derived in this manuscript where $n_1=n_2=0$. In this case, only data from complete cases will be used. 
\end{itemize} 

The goal of the simulation is to obtain information on whether the test procedures proposed in this manuscript, which use all available data, have superior performance over the tests that use only partial data.
For brevity of notations, the alternative methods will be referred to as Incomplete and Complete. In the simulation tables, test procedures for Incomplete and Complete are denoted by $Q_n^{(j)}$ and $F_n^{(j)}$, where $j=1,2$ represents Wald-type and ANOVA-type statistics respectively.

\subsection{Type-I Error Rate}
\label{sec:TypeIErrorRate}
Achieved Type-I error rates are presented in Tables \ref{Table:Typeone-normal}-\ref{Table:Typeone-cauchy}. From these tables, we note that dependence structures as well as distributions have minor effects on Type-I error rates. The achieved Type-I error rates for each method are close to each other no matter data are generated from discrete, skewed or heavily-tailed distributions. We can also see that the ANOVA-type statistics $F_n$, $F_n^{(1)}$ and $F_n^{(2)}$ are quite stable for all settings and $F_n$ has an advantage over $F_n^{(1)}$ and $F_n^{(2)}$ in preserving the preassigned significance level of $\alpha=0.05$ in most of the cases. However, although the Wald-type statistics $Q_n$, $Q_n^{(1)}$ and $Q_n^{(2)}$ are all too liberal, $Q_n$ performs better than $Q_n^{(1)}$ and $Q_n^{(2)}$ generally. Further, performance of Incomplete and Complete methods are slightly affected by sample size settings, i.e. $Q_n^{(1)}$ and $F_n^{(1)}$ have better performance under Setting 1 where more samples are allocated to incomplete cases, while $Q_n^{(2)}$ and $F_n^{(2)}$ perform better under Setting 4 where more samples are allocated to complete cases. 

There is no obvious pattern showing that the dimension will affect the performance for the ANOVA-type statistics, but it does affect the performance for the Wald-type statistics. The achieved Type-I error rates for $Q_n$, $Q_n^{(1)}$ and $Q_n^{(2)}$ increase with $d$, and the greatest increase is observed for $Q_n^{(2)}$ at sample size Settings 1 and 4. Variances of samples also make a difference. For fixed sample sizes and correlations, data that are generated from heterogeneous distributions tend to have more liberal performance too. 

\begin{table}[!htb]
	\centering
	\caption{Achieved Type-I error rate ($\times100$) for data generated from Discretized Multivariate Normal distribution with missing pattern as in Table \ref{Table:schematic} for $d=2,3,5$. Here, $Q_n$ is the Wald-type statistic proposed in (\ref{Q_N}); $F_n$ is ANOVA-type statistic for small sample approximation proposed in (\ref{ANOVA}); $Q_n^{(1)}$ and $F_n^{(1)}$ are Wald-type and ANOVA-type tests for the Incomplete method; $Q_n^{(2)}$ and $F_n^{(2)}$ are Wald-type and ANOVA-type tests for the Complete method. The nominal Type-I error rate is $\alpha=0.05$.}
	\begin{tabular}{ccllc|c|c|c|c|c|c|c|c|c|c|c|c|}
		\cline{6-17}
		& \multicolumn{3}{c}{}                                &     & \multicolumn{6}{c|}{$(\sigma_1^2,\sigma_2^2)=(1,1)$} & \multicolumn{6}{c|}{$(\sigma_1^2,\sigma_2^2)=(1,5)$} \\ \hline
		\multicolumn{1}{|c|}{\begin{tabular}[c]{@{}c@{}}Sample\\ Size\\ Setting\end{tabular}} & \multicolumn{3}{c|}{$(\rho_1,\rho_2,\rho_{12})$}    & $d$ & $Q_n$   & $F_n$  & $Q_n^{(1)}$  & $F_n^{(1)}$  & $Q_n^{(2)}$  & $F_n^{(2)}$  & $Q_n$   & $F_n$  & $Q_n^{(1)}$  & $F_n^{(1)}$  & $Q_n^{(2)}$  & $F_n^{(2)}$  \\ \hline
		\multicolumn{1}{|c|}{\multirow{6}{*}{1}}                                              & \multicolumn{3}{c|}{\multirow{3}{*}{(0.1,0.1,0.1)}} & 2   & 6.3     & 4.7    & 6.7    & 5.9    & 11.8   & 6.5    & 6.5     & 5.7    & 7.2    & 6.3    & 13     & 7.0    \\ \cline{5-17} 
		\multicolumn{1}{|c|}{}                                                                & \multicolumn{3}{c|}{}                               & 3   & 7.1     & 5.6    & 8.1    & 5.6    & 20.6   & 5.0    & 7.9     & 5.8    & 6.4    & 4      & 19.4   & 6.1    \\ \cline{5-17} 
		\multicolumn{1}{|c|}{}                                                                & \multicolumn{3}{c|}{}                               & 5   & 8.7     & 4.6    & 9.0    & 5.6    & 40.2   & 6.1    & 10.4    & 5.6    & 10.8   & 5.1    & 40.3   & 5.4    \\ \cline{2-17} 
		\multicolumn{1}{|c|}{}                                                                & \multicolumn{3}{c|}{\multirow{3}{*}{(0.1,0.9,0.5)}} & 2   & 5.8     & 6.3    & 5.6    & 6.3    & 9.6    & 5.2    & 5.7     & 8.4    & 6.5    & 7.5    & 10.1   & 7.2    \\ \cline{5-17} 
		\multicolumn{1}{|c|}{}                                                                & \multicolumn{3}{c|}{}                               & 3   & 7.8     & 7.6    & 8.3    & 7.5    & 15.3   & 4.9    & 5.3     & 8.3    & 5.5    & 7.9    & 15.0   & 9.5    \\ \cline{5-17} 
		\multicolumn{1}{|c|}{}                                                                & \multicolumn{3}{c|}{}                               & 5   & 7.3     & 5.9    & 8.7    & 8.0    & 36     & 3.8    & 5.5     & 10.6   & 6.6    & 11.1   & 30.7   & 10.8   \\ \hline
		\multicolumn{1}{|c|}{\multirow{6}{*}{2}}                                              & \multicolumn{3}{c|}{\multirow{3}{*}{(0.1,0.1,0.1)}} & 2   & 6.5     & 4.9    & 10.2   & 5.4    & 7.7    & 4.9    & 6.8     & 5.8    & 11.9   & 7.5    & 7.9    & 5.2    \\ \cline{5-17} 
		\multicolumn{1}{|c|}{}                                                                & \multicolumn{3}{c|}{}                               & 3   & 8.5     & 6.3    & 13.2   & 7.4    & 8.4    & 4.7    & 7.7     & 4.7    & 15.4   & 7.0    & 8.6    & 4.6    \\ \cline{5-17} 
		\multicolumn{1}{|c|}{}                                                                & \multicolumn{3}{c|}{}                               & 5   & 9.2     & 4.4    & 25.3   & 6.9    & 12.6   & 5.2    & 10.7    & 5.7    & 26.9   & 6.2    & 13.9   & 6.3    \\ \cline{2-17} 
		\multicolumn{1}{|c|}{}                                                                & \multicolumn{3}{c|}{\multirow{3}{*}{(0.1,0.9,0.5)}} & 2   & 6.0     & 4.7    & 10.8   & 7.7    & 7.1    & 5.8    & 5.8     & 6.6    & 9.8    & 10.2   & 7.0    & 6.2    \\ \cline{5-17} 
		\multicolumn{1}{|c|}{}                                                                & \multicolumn{3}{c|}{}                               & 3   & 6.8     & 5.2    & 13.7   & 9.0    & 8.0    & 4.5    & 8.1     & 8.5    & 11.3   & 11.0   & 7.9    & 8.6    \\ \cline{5-17} 
		\multicolumn{1}{|c|}{}                                                                & \multicolumn{3}{c|}{}                               & 5   & 9.4     & 5.9    & 25     & 9.9    & 11.9   & 5.0    & 6.3     & 8.6    & 12.0   & 12.2   & 9.7    & 9.1    \\ \hline
		\multicolumn{1}{|c|}{\multirow{6}{*}{3}}                                              & \multicolumn{3}{c|}{\multirow{3}{*}{(0.1,0.1,0.1)}} & 2   & 5.6     & 4.8    & 9.5    & 6.0    & 6.9    & 5.5    & 6.7     & 5.7    & 13.8   & 8.8    & 7.2    & 5.6    \\ \cline{5-17} 
		\multicolumn{1}{|c|}{}                                                                & \multicolumn{3}{c|}{}                               & 3   & 6.9     & 5.4    & 13.4   & 5.7    & 9.2    & 5.8    & 7.9     & 6.3    & 16.6   & 5.2    & 9.9    & 6.8    \\ \cline{5-17} 
		\multicolumn{1}{|c|}{}                                                                & \multicolumn{3}{c|}{}                               & 5   & 8.5     & 5.0    & 19.5   & 6.7    & 13.4   & 5.7    & 11.5    & 6.1    & 30.9   & 6.7    & 14.0   & 5.3    \\ \cline{2-17} 
		\multicolumn{1}{|c|}{}                                                                & \multicolumn{3}{c|}{\multirow{3}{*}{(0.1,0.9,0.5)}} & 2   & 5.2     & 6.3    & 9.6    & 9.2    & 5.0    & 4.1    & 7.0     & 7.7    & 9.6    & 11.1   & 7.3    & 6.8    \\ \cline{5-17} 
		\multicolumn{1}{|c|}{}                                                                & \multicolumn{3}{c|}{}                               & 3   & 5.8     & 6.5    & 9.8    & 9.4    & 8.4    & 4.6    & 6.1     & 9.7    & 10.9   & 11.8   & 6.1    & 7.3    \\ \cline{5-17} 
		\multicolumn{1}{|c|}{}                                                                & \multicolumn{3}{c|}{}                               & 5   & 7.9     & 9.0    & 15.1   & 13.8   & 11.3   & 4.6    & 5.7     & 9.6    & 13.4   & 13.9   & 9.7    & 9.4    \\ \hline
		\multicolumn{1}{|c|}{\multirow{6}{*}{4}}                                              & \multicolumn{3}{c|}{\multirow{3}{*}{(0.1,0.1,0.1)}} & 2   & 9.0     & 6.8    & 9.6    & 6.3    & 14.2   & 8.0    & 7.0     & 5.5    & 7.5    & 5.7    & 11.9   & 7.4    \\ \cline{5-17} 
		\multicolumn{1}{|c|}{}                                                                & \multicolumn{3}{c|}{}                               & 3   & 8.0     & 4.9    & 11.4   & 6.6    & 20.8   & 6.9    & 8.7     & 5.8    & 8.7    & 6.3    & 19.6   & 6.6    \\ \cline{5-17} 
		\multicolumn{1}{|c|}{}                                                                & \multicolumn{3}{c|}{}                               & 5   & 13.5    & 5.1    & 23.9   & 7.2    & 40.5   & 4.7    & 10.1    & 5.5    & 11.4   & 4.9    & 40.2   & 6.3    \\ \cline{2-17} 
		\multicolumn{1}{|c|}{}                                                                & \multicolumn{3}{c|}{\multirow{3}{*}{(0.1,0.9,0.5)}} & 2   & 8.6     & 7.2    & 10.6   & 8.1    & 11.8   & 6.2    & 6.1     & 6.0    & 5.8    & 6.3    & 10.9   & 7.5    \\ \cline{5-17} 
		\multicolumn{1}{|c|}{}                                                                & \multicolumn{3}{c|}{}                               & 3   & 8.7     & 5.5    & 14.3   & 6.0    & 16.0   & 4.2    & 8.7     & 8.3    & 11.0    & 8.4    & 14.8   & 7.4    \\ \cline{5-17} 
		\multicolumn{1}{|c|}{}                                                                & \multicolumn{3}{c|}{}                               & 5   & 14.1    & 6.9    & 27.6   & 7.2    & 35.3   & 5.3    & 7.7     & 8.9    & 13.5   & 9.4    & 30.5   & 9.9    \\ \hline
	\end{tabular}
\label{Table:Typeone-normal}
\end{table}

\begin{table}[!htb]
	\centering
	\caption{Achieved Type-I error rate ($\times100$) for data generated from Multivariate Log-Normal distribution with missing pattern as in Table \ref{Table:schematic} for $d=2,3,5$. Here, $Q_n$ is the Wald-type statistic proposed in (\ref{Q_N}); $F_n$ is ANOVA-type statistic for small sample approximation proposed in (\ref{ANOVA}); $Q_n^{(1)}$ and $F_n^{(1)}$ are Wald-type and ANOVA-type tests for the Incomplete method; $Q_n^{(2)}$ and $F_n^{(2)}$ are Wald-type and ANOVA-type tests for the Complete method. The nominal Type-I error rate is $\alpha=0.05$.}
	\begin{tabular}{ccllc|c|c|c|c|c|c|c|c|c|c|c|c|}
		\cline{6-17}
		\multicolumn{5}{c}{}                                                                                                                                 & \multicolumn{6}{|c|}{$(\sigma_1^2,\sigma_2^2)=(1,1)$} & \multicolumn{6}{c|}{$(\sigma_1^2,\sigma_2^2)=(1,5)$} \\ \hline
		\multicolumn{1}{|c|}{\begin{tabular}[c]{@{}c@{}}Sample\\ Size\\ Setting\end{tabular}} & \multicolumn{3}{c|}{$(\rho_1,\rho_2,\rho_{12})$}       & $d$ & $Q_n$   & $F_n$  & $Q_n^{(1)}$  & $F_n^{(1)}$  & $Q_n^{(2)}$  & $F_n^{(2)}$  & $Q_n$   & $F_n$  & $Q_n^{(1)}$  & $F_n^{(1)}$  & $Q_n^{(2)}$  & $F_n^{(2)}$  \\ \hline
		\multicolumn{1}{|c|}{\multirow{6}{*}{1}}                                              & \multicolumn{3}{c|}{\multirow{3}{*}{(0.1,0.1,0.1)}}    & 2   & 7.1     & 6.5    & 6.9    & 5.5    & 11.7   & 6      & 5.6     & 4.9    & 5.9    & 4.8    & 12.5   & 7.7    \\ \cline{5-17} 
		\multicolumn{1}{|c|}{}                                                                & \multicolumn{3}{c|}{}                                  & 3   & 7.2     & 4.7    & 7.1    & 5.8    & 22.2   & 6.5    & 7.1     & 4.7    & 6.3    & 4.1    & 16.9   & 6.2    \\ \cline{5-17} 
		\multicolumn{1}{|c|}{}                                                                & \multicolumn{3}{c|}{}                                  & 5   & 6.6     & 3.9    & 7.9    & 4.8    & 37.2   & 4.5    & 10.7    & 7      & 11.1   & 5.8    & 41.5   & 6.2    \\ \cline{2-17} 
		\multicolumn{1}{|c|}{}                                                                & \multicolumn{3}{c|}{\multirow{3}{*}{(-0.1,-0.1,-0.1)}} & 2   & 5.6     & 5      & 7.1    & 6.3    & 13.8   & 8.1    & 7.6     & 5.7    & 7.1    & 5.9    & 13.9   & 6.9    \\ \cline{5-17} 
		\multicolumn{1}{|c|}{}                                                                & \multicolumn{3}{c|}{}                                  & 3   & 6.3     & 4.9    & 6.1    & 4.4    & 20.3   & 6.6    & 7       & 4.6    & 7.2    & 5.3    & 21.5   & 6.1    \\ \cline{5-17} 
		\multicolumn{1}{|c|}{}                                                                & \multicolumn{3}{c|}{}                                  & 5   & 11.4    & 6.1    & 10.9   & 6.5    & 40.7   & 5.5    & 9       & 5.7    & 11.5   & 5.3    & 40.1   & 6.4    \\ \hline
		\multicolumn{1}{|c|}{\multirow{6}{*}{2}}                                              & \multicolumn{3}{c|}{\multirow{3}{*}{(0.1,0.1,0.1)}}    & 2   & 6       & 5.4    & 9.5    & 7      & 6.3    & 4.7    & 7       & 6.5    & 11.8   & 6.5    & 7.4    & 6.3    \\ \cline{5-17} 
		\multicolumn{1}{|c|}{}                                                                & \multicolumn{3}{c|}{}                                  & 3   & 5.6     & 4.6    & 12.5   & 5.5    & 8.4    & 4.8    & 7.9     & 4.7    & 15.9   & 6.2    & 7.8    & 5.6    \\ \cline{5-17} 
		\multicolumn{1}{|c|}{}                                                                & \multicolumn{3}{c|}{}                                  & 5   & 9.1     & 5.8    & 20.3   & 5.4    & 12.8   & 5.8    & 9.8     & 5.4    & 28     & 5.3    & 13.1   & 6.8    \\ \cline{2-17} 
		\multicolumn{1}{|c|}{}                                                                & \multicolumn{3}{c|}{\multirow{3}{*}{(-0.1,-0.1,-0.1)}} & 2   & 6.7     & 6      & 9.9    & 6.2    & 8      & 6.8    & 8       & 5.9    & 11.8   & 7.8    & 8.6    & 6.8    \\ \cline{5-17} 
		\multicolumn{1}{|c|}{}                                                                & \multicolumn{3}{c|}{}                                  & 3   & 7.1     & 5.6    & 15.7   & 6.6    & 9.5    & 4.8    & 8       & 6.4    & 14.8   & 6.4    & 9.7    & 5.6    \\ \cline{5-17} 
		\multicolumn{1}{|c|}{}                                                                & \multicolumn{3}{c|}{}                                  & 5   & 11      & 4.9    & 19.5   & 4.9    & 13.7   & 6.2    & 12.5    & 6.5    & 27.1   & 6      & 15.3   & 6.2    \\ \hline
		\multicolumn{1}{|c|}{\multirow{6}{*}{3}}                                              & \multicolumn{3}{c|}{\multirow{3}{*}{(0.1,0.1,0.1)}}    & 2   & 6.6     & 5.8    & 8.1    & 6.1    & 7.1    & 5.7    & 6.5     & 5.6    & 11.8   & 6.6    & 8.1    & 5.8    \\ \cline{5-17} 
		\multicolumn{1}{|c|}{}                                                                & \multicolumn{3}{c|}{}                                  & 3   & 7.2     & 5.5    & 14.9   & 6.9    & 9.1    & 5.6    & 7.1     & 5.1    & 16.5   & 6.9    & 9.8    & 5.2    \\ \cline{5-17} 
		\multicolumn{1}{|c|}{}                                                                & \multicolumn{3}{c|}{}                                  & 5   & 8.4     & 5.7    & 18.8   & 5.2    & 13.7   & 5.7    & 10.5    & 5.1    & 33.3   & 6.5    & 13.6   & 4.9    \\ \cline{2-17} 
		\multicolumn{1}{|c|}{}                                                                & \multicolumn{3}{c|}{\multirow{3}{*}{(-0.1,-0.1,-0.1)}} & 2   & 6.7     & 5.6    & 9.4    & 5.9    & 8.4    & 6.5    & 6.6     & 5.5    & 13.2   & 8.2    & 7.1    & 5.6    \\ \cline{5-17} 
		\multicolumn{1}{|c|}{}                                                                & \multicolumn{3}{c|}{}                                  & 3   & 7.2     & 4.3    & 10.4   & 5.4    & 9.7    & 5      & 7.3     & 4.9    & 19.1   & 7      & 8.5    & 5.2    \\ \cline{5-17} 
		\multicolumn{1}{|c|}{}                                                                & \multicolumn{3}{c|}{}                                  & 5   & 9.1     & 5.6    & 19.4   & 6.1    & 13.4   & 5.6    & 8.6     & 4.7    & 35.2   & 6.5    & 12.6   & 5.7    \\ \hline
		\multicolumn{1}{|c|}{\multirow{6}{*}{4}}                                              & \multicolumn{3}{c|}{\multirow{3}{*}{(0.1,0.1,0.1)}}    & 2   & 7.8     & 5.4    & 10     & 6.3    & 13.5   & 6.2    & 7       & 5.7    & 7.7    & 6      & 14.1   & 7.5    \\ \cline{5-17} 
		\multicolumn{1}{|c|}{}                                                                & \multicolumn{3}{c|}{}                                  & 3   & 9.4     & 5.9    & 12.3   & 6.1    & 19.5   & 6.6    & 6.1     & 4.4    & 7.9    & 5.8    & 18.4   & 5      \\ \cline{5-17} 
		\multicolumn{1}{|c|}{}                                                                & \multicolumn{3}{c|}{}                                  & 5   & 12.6    & 5.1    & 20.4   & 6.2    & 40.3   & 5.3    & 8.8     & 5      & 10.4   & 4.5    & 42.3   & 5.6    \\ \cline{2-17} 
		\multicolumn{1}{|c|}{}                                                                & \multicolumn{3}{c|}{\multirow{3}{*}{(-0.1,-0.1,-0.1)}} & 2   & 7.3     & 5.4    & 9.8    & 6.1    & 14.7   & 6.5    & 6.8     & 5.3    & 7.3    & 6.6    & 12.9   & 7.6    \\ \cline{5-17} 
		\multicolumn{1}{|c|}{}                                                                & \multicolumn{3}{c|}{}                                  & 3   & 9.8     & 6.9    & 13     & 6.4    & 20.3   & 6.9    & 7.4     & 4.8    & 9      & 5.9    & 20.1   & 5.6    \\ \cline{5-17} 
		\multicolumn{1}{|c|}{}                                                                & \multicolumn{3}{c|}{}                                  & 5   & 11.3    & 5.2    & 19.1   & 5.9    & 40.6   & 5.6    & 11.8    & 6.2    & 12.5   & 6.2    & 42.1   & 7      \\ \hline
	\end{tabular}
\label{Table:Typeone-exp}
\end{table}

\begin{table}[!htb]
	\centering
	\caption{Achieved Type-I error rate ($\times100$) for data generated from Multivariate Cauchy distribution with missing pattern as in Table \ref{Table:schematic} for $d=2,3,5$. Here, $Q_n$ is the Wald-type statistic proposed in (\ref{Q_N}); $F_n$ is ANOVA-type statistic for small sample approximation proposed in (\ref{ANOVA}); $Q_n^{(1)}$ and $F_n^{(1)}$ are Wald-type and ANOVA-type tests for the Incomplete method; $Q_n^{(2)}$ and $F_n^{(2)}$ are Wald-type and ANOVA-type tests for the Complete method. The nominal Type-I error rate is $\alpha=0.05$.}
	\begin{tabular}{ccllc|c|c|c|c|c|c|c|c|c|c|c|c|}
		\cline{6-17}
		\multicolumn{5}{c}{}                                                                                                                                 & \multicolumn{6}{|c|}{$(\sigma_1^2,\sigma_2^2)=(1,1)$} & \multicolumn{6}{c|}{$(\sigma_1^2,\sigma_2^2)=(1,5)$} \\ \hline
		\multicolumn{1}{|c|}{\begin{tabular}[c]{@{}c@{}}Sample\\ Size\\ Setting\end{tabular}} & \multicolumn{3}{c|}{$(\rho_1,\rho_2,\rho_{12})$}       & $d$ & $Q_n$   & $F_n$  & $Q_n^{(1)}$  & $F_n^{(1)}$  & $Q_n^{(2)}$  & $F_n^{(2)}$  & $Q_n$   & $F_n$  & $Q_n^{(1)}$  & $F_n^{(1)}$  & $Q_n^{(2)}$  & $F_n^{(2)}$  \\ \hline
		\multicolumn{1}{|c|}{\multirow{6}{*}{1}}                                              & \multicolumn{3}{c|}{\multirow{3}{*}{(0.1,0.1,0.1)}}    & 2   & 6.9     & 5.8    & 7.6    & 6.5    & 11.9   & 6.3    & 6.2     & 5.2    & 5.8    & 4.9    & 13.8   & 6.1    \\ \cline{5-17} 
		\multicolumn{1}{|c|}{}                                                                & \multicolumn{3}{c|}{}                                  & 3   & 7.6     & 5.8    & 6.5    & 4.3    & 20.8   & 6.6    & 6.4     & 4.4    & 6.7    & 4.3    & 20.8   & 7      \\ \cline{5-17} 
		\multicolumn{1}{|c|}{}                                                                & \multicolumn{3}{c|}{}                                  & 5   & 8       & 4.5    & 8.5    & 4.9    & 47.5   & 6.6    & 9.1     & 5.3    & 10.5   & 5.9    & 43.9   & 5.9    \\ \cline{2-17} 
		\multicolumn{1}{|c|}{}                                                                & \multicolumn{3}{c|}{\multirow{3}{*}{(-0.1,-0.1,-0.1)}} & 2   & 7.5     & 6.2    & 8.1    & 6.5    & 13.2   & 7.7    & 6.4     & 5.2    & 6.1    & 5.3    & 15.3   & 8.6    \\ \cline{5-17} 
		\multicolumn{1}{|c|}{}                                                                & \multicolumn{3}{c|}{}                                  & 3   & 7.3     & 5.2    & 5.9    & 4.1    & 22.9   & 6.6    & 7       & 4.8    & 8.4    & 5.2    & 22.9   & 7.4    \\ \cline{5-17} 
		\multicolumn{1}{|c|}{}                                                                & \multicolumn{3}{c|}{}                                  & 5   & 10.9    & 6.2    & 9.2    & 5.6    & 48     & 7.3    & 9.6     & 4.3    & 9.5    & 4.9    & 46.7   & 8.3    \\ \hline
		\multicolumn{1}{|c|}{\multirow{6}{*}{2}}                                              & \multicolumn{3}{c|}{\multirow{3}{*}{(0.1,0.1,0.1)}}    & 2   & 6.2     & 4.7    & 12.5   & 6.8    & 6.5    & 4.3    & 7.8     & 6.4    & 10     & 5.5    & 7.4    & 5.7    \\ \cline{5-17} 
		\multicolumn{1}{|c|}{}                                                                & \multicolumn{3}{c|}{}                                  & 3   & 7.2     & 5.3    & 15.3   & 6.4    & 9.2    & 5.3    & 8       & 5.8    & 14.7   & 6.7    & 10.2   & 5.2    \\ \cline{5-17} 
		\multicolumn{1}{|c|}{}                                                                & \multicolumn{3}{c|}{}                                  & 5   & 10.2    & 4.4    & 23     & 4.8    & 16     & 4.6    & 11.2    & 6.3    & 26.2   & 6.4    & 15.8   & 6.1    \\ \cline{2-17} 
		\multicolumn{1}{|c|}{}                                                                & \multicolumn{3}{c|}{\multirow{3}{*}{(-0.1,-0.1,-0.1)}} & 2   & 6.4     & 5.4    & 10.9   & 7.5    & 8.3    & 6.3    & 6       & 4.9    & 9.8    & 5.8    & 7.3    & 4.7    \\ \cline{5-17} 
		\multicolumn{1}{|c|}{}                                                                & \multicolumn{3}{c|}{}                                  & 3   & 9.8     & 7.4    & 14.2   & 6      & 10.1   & 6.8    & 7.4     & 5.1    & 13.7   & 4.5    & 9.4    & 4.9    \\ \cline{5-17} 
		\multicolumn{1}{|c|}{}                                                                & \multicolumn{3}{c|}{}                                  & 5   & 11.2    & 4.2    & 23.7   & 5.6    & 15.8   & 5      & 12.1    & 5.3    & 25.4   & 6.2    & 14.8   & 5.9    \\ \hline
		\multicolumn{1}{|c|}{\multirow{6}{*}{3}}                                              & \multicolumn{3}{c|}{\multirow{3}{*}{(0.1,0.1,0.1)}}    & 2   & 8       & 6.1    & 10.2   & 5.9    & 9.1    & 7.3    & 6.6     & 5.8    & 10.8   & 6.9    & 7.3    & 5.9    \\ \cline{5-17} 
		\multicolumn{1}{|c|}{}                                                                & \multicolumn{3}{c|}{}                                  & 3   & 7.9     & 7      & 15.2   & 7.2    & 9.3    & 5.4    & 7.5     & 5.1    & 14     & 5.7    & 10.1   & 6.4    \\ \cline{5-17} 
		\multicolumn{1}{|c|}{}                                                                & \multicolumn{3}{c|}{}                                  & 5   & 10.5    & 5.2    & 19.4   & 5.7    & 17     & 5.5    & 10.7    & 5.9    & 29.4   & 5.3    & 16.3   & 6.3    \\ \cline{2-17} 
		\multicolumn{1}{|c|}{}                                                                & \multicolumn{3}{c|}{\multirow{3}{*}{(-0.1,-0.1,-0.1)}} & 2   & 8.9     & 7      & 9.7    & 5.5    & 10.5   & 7.4    & 5       & 4.5    & 10.8   & 7.3    & 6.9    & 5.7    \\ \cline{5-17} 
		\multicolumn{1}{|c|}{}                                                                & \multicolumn{3}{c|}{}                                  & 3   & 6.3     & 4.1    & 10.9   & 4.8    & 10.6   & 6.1    & 6.8     & 5      & 16.3   & 6.6    & 8.1    & 5.3    \\ \cline{5-17} 
		\multicolumn{1}{|c|}{}                                                                & \multicolumn{3}{c|}{}                                  & 5   & 7.4     & 3.3    & 17.1   & 3.8    & 14.6   & 5.1    & 10.6    & 6.3    & 28.5   & 8      & 13.7   & 4.7    \\ \hline
		\multicolumn{1}{|c|}{\multirow{6}{*}{4}}                                              & \multicolumn{3}{c|}{\multirow{3}{*}{(0.1,0.1,0.1)}}    & 2   & 7       & 5.5    & 10     & 6.6    & 13.6   & 8.2    & 6.7     & 4.8    & 9.2    & 6      & 12.1   & 6.9    \\ \cline{5-17} 
		\multicolumn{1}{|c|}{}                                                                & \multicolumn{3}{c|}{}                                  & 3   & 7.7     & 4.5    & 12.2   & 5.4    & 19.8   & 5.2    & 9.3     & 5.4    & 9.6    & 5.5    & 19.9   & 6.3    \\ \cline{5-17} 
		\multicolumn{1}{|c|}{}                                                                & \multicolumn{3}{c|}{}                                  & 5   & 12.2    & 6.1    & 19.6   & 5.9    & 45.5   & 5.9    & 10.3    & 4.6    & 13     & 4.5    & 43.6   & 6.3    \\ \cline{2-17} 
		\multicolumn{1}{|c|}{}                                                                & \multicolumn{3}{c|}{\multirow{3}{*}{(-0.1,-0.1,-0.1)}} & 2   & 7.5     & 5.6    & 9.7    & 6.8    & 16.7   & 8.3    & 6.2     & 5.7    & 7.4    & 5.4    & 14.6   & 8.5    \\ \cline{5-17} 
		\multicolumn{1}{|c|}{}                                                                & \multicolumn{3}{c|}{}                                  & 3   & 9.5     & 5.1    & 12.4   & 5.5    & 24.3   & 7.2    & 10.3    & 6.4    & 11.9   & 6.8    & 22.6   & 6.4    \\ \cline{5-17} 
		\multicolumn{1}{|c|}{}                                                                & \multicolumn{3}{c|}{}                                  & 5   & 12.1    & 4.6    & 18.8   & 5.6    & 44.3   & 6.6    & 12.5    & 5.1    & 16.3   & 4.4    & 44.6   & 6.2    \\ \hline
	\end{tabular}
\label{Table:Typeone-cauchy}
\end{table}

\subsection{Power Study}
\label{sec:PowerStudy}
To investigate power of the two tests based on $Q_n$ and $F_n$, bivariate situation ($d=2$) is considered with three multivariate distributions. In this situation, the first sample is drawn with mean $\bm{\mu}=(0,0)$ and the second one is drawn with mean $\bm{\mu}=(\delta_1,\delta_2)$. Three types of location shift alternatives $(\delta_1,\delta_2)=\delta(1,k)$ for $k=0,1,2$ are considered, where $\delta\in\{0.3,0.6,0.9\}$. Power simulation results are displayed in Tables \ref{Table:Power-normal} and \ref{Table:Power-cauchy}.

Looking at the power results in these three tables, it is clear to see that data generated from homogeneous distribution yield higher power than the heterogeneous ones for fixed sample size and location shift. Also, mean shift $\bm{\mu}_2=(\delta,2\delta)$ yields higher power than $\bm{\mu}_2=(\delta,\delta)$, which in turn yields higher power than $\bm{\mu}_2=(\delta,0)$. This is expected because $\bm{\mu}_2=(\delta,2\delta)$ is a stronger departure from the null compared to $\bm{\mu}_2=(\delta,\delta)$ and $\bm{\mu}_2=(\delta,0)$ in the sense that $\parallel(\delta,2\delta)\parallel_2^2>\parallel(\delta,\delta)\parallel_2^2>\parallel(\delta,0)\parallel_2^2$.

Furthermore, larger values $\delta$ produce higher power for a given location alternative. $Q_n$ achieves the highest power among all Wald-type statistics, and similarly, $F_n$ attains higher power compared to $F_n^{(1)}$ and $F_{(2)}$. We also note that powers for Incomplete and Complete methods are related to sample size allocations. More specifically, $Q_n^{(1)}$ and $F_n^{(1)}$ perform better in Setting 1 and 4 since incomplete data make up higher proportion in the data set. For the same reason, $Q_n^{(2)}$ and $F_n^{(2)}$ perform better in Setting 2 and 3 as complete data weight more in this case. Different from the Type-I error simulation results, the data distributions greatly affect the powers. 

While the achieved power for data from Multivariate Discretized Normal distribution are generally close to those from the Multivariate Log-Normal distribution, they both are greater than the powers for data from Multivariate Cauchy distribution.

\begin{table}[!htb]
	\centering
	\caption{Achieved power ($\times100$) for data generated from Discretized Multivariate Normal distribution with missing pattern as in Table \ref{Table:schematic} for $d=2$. Here, $Q_n$ is the Wald-type statistic proposed in (\ref{Q_N}); $F_n$ is ANOVA-type statistic for small sample approximation proposed in (\ref{ANOVA}); $Q_n^{(1)}$ and $F_n^{(1)}$ are Wald-type and ANOVA-type tests in the Incomplete method; $Q_n^{(2)}$ and $F_n^{(2)}$ are Wald-type and ANOVA-type tests in the Complete method. The correlation coefficients are $(\rho_1,\rho_2,\rho_{12})=(0.1,0.1,0.1)$.}
	\begin{tabular}{ccc|c|c|c|c|c|c|c|c|c|c|c|c|}
		\cline{4-15}
		&                                 &            & \multicolumn{6}{c|}{$(\sigma_1^2,\sigma_2^2)=(1,1)$}                  & \multicolumn{6}{c|}{$(\sigma_1^2,\sigma_2^2)=(1,5)$}                  \\ \hline
		\multicolumn{1}{|c|}{\begin{tabular}[c]{@{}c@{}}Sample\\ Size\\ Setting\end{tabular}} & \multicolumn{1}{c|}{$\delta_1$} & $\delta_2$ & $Q_n$ & $F_n$ & $Q_n^{(1)}$ & $F_n^{(1)}$ & $Q_n^{(2)}$ & $F_n^{(2)}$ & $Q_n$ & $F_n$ & $Q_n^{(1)}$ & $F_n^{(1)}$ & $Q_n^{(2)}$ & $F_n^{(2)}$ \\ \hline
		\multicolumn{1}{|c|}{\multirow{6}{*}{1}}                                              & \multicolumn{1}{c|}{0}          & 0.3        & 21.6  & 20.2  & 18.7        & 17.3        & 20.5        & 11.3        & 11.3  & 10    & 10.9        & 9           & 13.4        & 7.3         \\ \cline{2-15} 
		\multicolumn{1}{|c|}{}                                                                & \multicolumn{1}{c|}{0.3}        & 0.3        & 35.9  & 36.5  & 28.6        & 28.5        & 26.6        & 17.3        & 15.2  & 14    & 13.5        & 12.1        & 16.9        & 8.9         \\ \cline{2-15} 
		\multicolumn{1}{|c|}{}                                                                & \multicolumn{1}{c|}{0.6}        & 0.6        & 91.8  & 92.4  & 79.3        & 81          & 49.3        & 38.6        & 44.8  & 44.8  & 37          & 36          & 25.7        & 18          \\ \cline{2-15} 
		\multicolumn{1}{|c|}{}                                                                & \multicolumn{1}{c|}{0.9}        & 0.9        & 100   & 100   & 99.3        & 99.6        & 77.9        & 72.2        & 79.2  & 80.3  & 66.1        & 65.9        & 40.2        & 32.3        \\ \cline{2-15} 
		\multicolumn{1}{|c|}{}                                                                & \multicolumn{1}{c|}{0.3}        & 0.6        & 77.3  & 76.6  & 61.7        & 62.7        & 37.8        & 26.5        & 32.6  & 30.6  & 25.3        & 23.9        & 21.9        & 15.5        \\ \cline{2-15} 
		\multicolumn{1}{|c|}{}                                                                & \multicolumn{1}{c|}{0.3}        & 0.9        & 96.4  & 96.4  & 89.5        & 89.8        & 57.9        & 45.8        & 53.9  & 52.5  & 39.9        & 38          & 29.9        & 20.9        \\ \hline
		\multicolumn{1}{|c|}{\multirow{6}{*}{2}}                                              & \multicolumn{1}{c|}{0}          & 0.3        & 22.8  & 20.7  & 15.2        & 10.4        & 21          & 17.4        & 11.9  & 10.7  & 14.2        & 9.5         & 11.6        & 9.4         \\ \cline{2-15} 
		\multicolumn{1}{|c|}{}                                                                & \multicolumn{1}{c|}{0.3}        & 0.3        & 41.9  & 38.8  & 18.7        & 13.5        & 34.9        & 30.4        & 17.4  & 15    & 13.3        & 8.9         & 16.3        & 14.2        \\ \cline{2-15} 
		\multicolumn{1}{|c|}{}                                                                & \multicolumn{1}{c|}{0.6}        & 0.6        & 94.6  & 94.8  & 42.6        & 38.9        & 86.9        & 86.1        & 47    & 45.5  & 20          & 14.2        & 40.9        & 38.3        \\ \cline{2-15} 
		\multicolumn{1}{|c|}{}                                                                & \multicolumn{1}{c|}{0.9}        & 0.9        & 100   & 100   & 72.5        & 70.4        & 99.6        & 99.5        & 80.5  & 80.5  & 34.7        & 28.3        & 69.2        & 68.4        \\ \cline{2-15} 
		\multicolumn{1}{|c|}{}                                                                & \multicolumn{1}{c|}{0.3}        & 0.6        & 77.4  & 76.7  & 31.2        & 25.9        & 68.8        & 65.2        & 31.3  & 29.4  & 16.9        & 11.5        & 27.2        & 24.9        \\ \cline{2-15} 
		\multicolumn{1}{|c|}{}                                                                & \multicolumn{1}{c|}{0.3}        & 0.9        & 97.3  & 97.2  & 51.3        & 44.7        & 93.2        & 92.8        & 56.1  & 54.3  & 25.4        & 19.9        & 46.4        & 43.1        \\ \hline
		\multicolumn{1}{|c|}{\multirow{6}{*}{3}}                                              & \multicolumn{1}{c|}{0}          & 0.3        & 26    & 23.2  & 15.9        & 11.3        & 20.8        & 17.9        & 12.5  & 10.5  & 13.8        & 8           & 12.6        & 9.5         \\ \cline{2-15} 
		\multicolumn{1}{|c|}{}                                                                & \multicolumn{1}{c|}{0.3}        & 0.3        & 47.9  & 47.3  & 22.3        & 18          & 37.2        & 32.6        & 17.1  & 15.8  & 15.1        & 9.3         & 13.7        & 11.9        \\ \cline{2-15} 
		\multicolumn{1}{|c|}{}                                                                & \multicolumn{1}{c|}{0.6}        & 0.6        & 96.9  & 97.1  & 58          & 54          & 87.5        & 87          & 46.4  & 46    & 23.6        & 17.2        & 37.9        & 35.8        \\ \cline{2-15} 
		\multicolumn{1}{|c|}{}                                                                & \multicolumn{1}{c|}{0.9}        & 0.9        & 100   & 100   & 87          & 86.5        & 99.5        & 99.6        & 81.5  & 81.6  & 36.4        & 30.1        & 70.2        & 68.4        \\ \cline{2-15} 
		\multicolumn{1}{|c|}{}                                                                & \multicolumn{1}{c|}{0.3}        & 0.6        & 83.5  & 83.3  & 36          & 32.5        & 68.7        & 64.9        & 31.6  & 30.9  & 19.9        & 14.2        & 26.1        & 23.8        \\ \cline{2-15} 
		\multicolumn{1}{|c|}{}                                                                & \multicolumn{1}{c|}{0.3}        & 0.9        & 99    & 98.9  & 68.4        & 62.4        & 93.9        & 92          & 59.6  & 56.9  & 27.7        & 20.1        & 47.4        & 44.3        \\ \hline
		\multicolumn{1}{|c|}{\multirow{6}{*}{4}}                                              & \multicolumn{1}{c|}{0}          & 0.3        & 15.7  & 12.6  & 14.8        & 10.1        & 17          & 9.5         & 10    & 8.1   & 8.6         & 7.2         & 13.4        & 7.5         \\ \cline{2-15} 
		\multicolumn{1}{|c|}{}                                                                & \multicolumn{1}{c|}{0.3}        & 0.3        & 29.6  & 27.8  & 22.5        & 18.9        & 22.8        & 13.6        & 13    & 12.5  & 11.6        & 9.4         & 16          & 9.3         \\ \cline{2-15} 
		\multicolumn{1}{|c|}{}                                                                & \multicolumn{1}{c|}{0.6}        & 0.6        & 77.6  & 78.5  & 53.8        & 49.9        & 51          & 41.2        & 42.9  & 42.6  & 29.9        & 29.1        & 25          & 17.2        \\ \cline{2-15} 
		\multicolumn{1}{|c|}{}                                                                & \multicolumn{1}{c|}{0.9}        & 0.9        & 98.4  & 98.4  & 86.4        & 84.5        & 78.7        & 72          & 71    & 71.4  & 54.3        & 54.3        & 39.3        & 30.9        \\ \cline{2-15} 
		\multicolumn{1}{|c|}{}                                                                & \multicolumn{1}{c|}{0.3}        & 0.6        & 59.8  & 57.7  & 38.8        & 31.8        & 38          & 28.4        & 26.7  & 24.2  & 18.5        & 15.1        & 24.7        & 15.9        \\ \cline{2-15} 
		\multicolumn{1}{|c|}{}                                                                & \multicolumn{1}{c|}{0.3}        & 0.9        & 87.6  & 86.1  & 66.9        & 62.5        & 58.1        & 46.2        & 47.9  & 46    & 35.3        & 32.7        & 27.3        & 18.8        \\ \hline
	\end{tabular}
\label{Table:Power-normal}
\end{table}

\begin{table}[!htb]
	\centering
	\caption{Achieved power ($\times100$) for data generated from Multivariate Log-Normal distribution with missing pattern as in Table \ref{Table:schematic} for $d=2$. Here, $Q_n$ is the Wald-type statistic proposed in (\ref{Q_N}); $F_n$ is ANOVA-type statistic for small sample approximation proposed in (\ref{ANOVA}); $Q_n^{(1)}$ and $F_n^{(1)}$ are Wald-type and ANOVA-type tests in the Incomplete method; $Q_n^{(2)}$ and $F_n^{(2)}$ are Wald-type and ANOVA-type tests in the Complete method. The correlation coefficients are $(\rho_1,\rho_2,\rho_{12})=(-0.1,-0.1,-0.1)$.}
	\begin{tabular}{ccc|c|c|c|c|c|c|c|c|c|c|c|c|}
		\cline{4-15}
		\multicolumn{3}{c}{}                                                                                                                 & \multicolumn{6}{|c|}{$(\sigma_1^2,\sigma_2^2)=(1,1)$}                  & \multicolumn{6}{c|}{$(\sigma_1^2,\sigma_2^2)=(1,5)$}                  \\ \hline
		\multicolumn{1}{|c|}{\begin{tabular}[c]{@{}c@{}}Sample\\ Size\\ Setting\end{tabular}} & \multicolumn{1}{c|}{$\delta_1$} & $\delta_2$ & $Q_n$ & $F_n$ & $Q_n^{(1)}$ & $F_n^{(1)}$ & $Q_n^{(2)}$ & $F_n^{(2)}$ & $Q_n$ & $F_n$ & $Q_n^{(1)}$ & $F_n^{(1)}$ & $Q_n^{(2)}$ & $F_n^{(2)}$ \\ \hline
		\multicolumn{1}{|c|}{\multirow{6}{*}{1}}                                              & \multicolumn{1}{c|}{0}          & 0.3        & 18.3  & 17.1  & 17.4        & 14.8        & 17.4        & 10.6        & 11.4  & 9.5   & 10.7        & 7.7         & 16          & 9.7         \\ \cline{2-15} 
		\multicolumn{1}{|c|}{}                                                                & \multicolumn{1}{c|}{0.3}        & 0.3        & 41.6  & 37.1  & 32.4        & 27.6        & 27.7        & 17.8        & 16.9  & 14.1  & 15.5        & 13.3        & 16.9        & 8.1         \\ \cline{2-15} 
		\multicolumn{1}{|c|}{}                                                                & \multicolumn{1}{c|}{0.6}        & 0.6        & 93.9  & 92.4  & 86.8        & 83.6        & 44.9        & 34.1        & 48.4  & 44.4  & 40.3        & 35.3        & 22.8        & 15.8        \\ \cline{2-15} 
		\multicolumn{1}{|c|}{}                                                                & \multicolumn{1}{c|}{0.9}        & 0.9        & 100   & 100   & 99.6        & 99.3        & 76.9        & 68          & 82.5  & 79.3  & 72.8        & 68.9        & 34.5        & 24.6        \\ \cline{2-15} 
		\multicolumn{1}{|c|}{}                                                                & \multicolumn{1}{c|}{0.3}        & 0.6        & 77.7  & 74.5  & 69.2        & 64.1        & 34.7        & 24          & 30.6  & 26.8  & 25.4        & 22.3        & 20.3        & 13.5        \\ \cline{2-15} 
		\multicolumn{1}{|c|}{}                                                                & \multicolumn{1}{c|}{0.3}        & 0.9        & 96.7  & 95.6  & 92.4        & 90.5        & 54.6        & 44.5        & 55.9  & 52.9  & 45.7        & 41.5        & 28          & 18.6        \\ \hline
		\multicolumn{1}{|c|}{\multirow{6}{*}{2}}                                              & \multicolumn{1}{c|}{0}          & 0.3        & 21.4  & 19.4  & 13.7        & 9.9         & 18.4        & 15.9        & 11.6  & 9.5   & 14.1        & 9.6         & 11.4        & 9.3         \\ \cline{2-15} 
		\multicolumn{1}{|c|}{}                                                                & \multicolumn{1}{c|}{0.3}        & 0.3        & 38.1  & 33.8  & 19.5        & 13.6        & 30.4        & 28.2        & 16    & 13.7  & 14.1        & 7.5         & 14.9        & 13          \\ \cline{2-15} 
		\multicolumn{1}{|c|}{}                                                                & \multicolumn{1}{c|}{0.6}        & 0.6        & 92.3  & 92.1  & 49.1        & 39.2        & 82.6        & 80          & 46.6  & 43.5  & 24          & 16.2        & 38.2        & 34.8        \\ \cline{2-15} 
		\multicolumn{1}{|c|}{}                                                                & \multicolumn{1}{c|}{0.9}        & 0.9        & 100   & 99.9  & 78.5        & 72.8        & 98.7        & 98.7        & 78.9  & 76.1  & 39.1        & 28.3        & 67.9        & 63.8        \\ \cline{2-15} 
		\multicolumn{1}{|c|}{}                                                                & \multicolumn{1}{c|}{0.3}        & 0.6        & 71.5  & 69.6  & 33.7        & 25.1        & 59.6        & 55.8        & 31.5  & 27.6  & 17.7        & 11.4        & 26.2        & 22.1        \\ \cline{2-15} 
		\multicolumn{1}{|c|}{}                                                                & \multicolumn{1}{c|}{0.3}        & 0.9        & 95.4  & 94.8  & 53.8        & 43.4        & 88.2        & 86          & 53.9  & 51.7  & 26.5        & 17.7        & 44.7        & 40.4        \\ \hline
		\multicolumn{1}{|c|}{\multirow{6}{*}{3}}                                              & \multicolumn{1}{c|}{0}          & 0.3        & 27.2  & 25.3  & 16.3        & 12.3        & 20.8        & 16.9        & 11.4  & 10.7  & 14.2        & 9.2         & 11          & 8.4         \\ \cline{2-15} 
		\multicolumn{1}{|c|}{}                                                                & \multicolumn{1}{c|}{0.3}        & 0.3        & 44    & 41    & 24.5        & 18.7        & 30.9        & 26.7        & 16.2  & 13.1  & 16.6        & 9.9         & 14.4        & 11.8        \\ \cline{2-15} 
		\multicolumn{1}{|c|}{}                                                                & \multicolumn{1}{c|}{0.6}        & 0.6        & 95.4  & 95    & 59.9        & 51.7        & 80.3        & 78.4        & 46.9  & 41.4  & 26.3        & 17.6        & 32.8        & 28.5        \\ \cline{2-15} 
		\multicolumn{1}{|c|}{}                                                                & \multicolumn{1}{c|}{0.9}        & 0.9        & 100   & 100   & 91          & 87.4        & 98.8        & 98.3        & 82.5  & 79.3  & 41.3        & 33          & 65.9        & 63.3        \\ \cline{2-15} 
		\multicolumn{1}{|c|}{}                                                                & \multicolumn{1}{c|}{0.3}        & 0.6        & 81.3  & 78.5  & 46.5        & 37.1        & 57.5        & 54.8        & 34.6  & 31.8  & 22          & 14          & 26.3        & 22.1        \\ \cline{2-15} 
		\multicolumn{1}{|c|}{}                                                                & \multicolumn{1}{c|}{0.3}        & 0.9        & 98.2  & 98.2  & 68.5        & 58.9        & 89.8        & 88.1        & 57.3  & 52.6  & 29.8        & 21.7        & 45          & 39.9        \\ \hline
		\multicolumn{1}{|c|}{\multirow{6}{*}{4}}                                              & \multicolumn{1}{c|}{0}          & 0.3        & 17.9  & 15.4  & 17.2        & 12.9        & 19.2        & 11          & 9.2   & 8.5   & 11.1        & 8.5         & 11.9        & 7.1         \\ \cline{2-15} 
		\multicolumn{1}{|c|}{}                                                                & \multicolumn{1}{c|}{0.3}        & 0.3        & 31.7  & 26.8  & 25          & 18          & 24.6        & 15.2        & 17.5  & 14.8  & 15.2        & 11.1        & 16.3        & 9.5         \\ \cline{2-15} 
		\multicolumn{1}{|c|}{}                                                                & \multicolumn{1}{c|}{0.6}        & 0.6        & 80.5  & 76.3  & 62.1        & 53.2        & 44.2        & 34.5        & 42.9  & 38.9  & 36.2        & 30.2        & 25.1        & 17.2        \\ \cline{2-15} 
		\multicolumn{1}{|c|}{}                                                                & \multicolumn{1}{c|}{0.9}        & 0.9        & 99.4  & 99    & 92.5        & 89.9        & 75.2        & 68.5        & 74.4  & 70.5  & 58.6        & 52.8        & 38.9        & 27.7        \\ \cline{2-15} 
		\multicolumn{1}{|c|}{}                                                                & \multicolumn{1}{c|}{0.3}        & 0.6        & 60.1  & 55.1  & 46.2        & 36.5        & 35.7        & 23.6        & 28.3  & 24.4  & 22.1        & 18.5        & 21.3        & 12.3        \\ \cline{2-15} 
		\multicolumn{1}{|c|}{}                                                                & \multicolumn{1}{c|}{0.3}        & 0.9        & 88.6  & 86.3  & 71.1        & 62.4        & 54          & 40.5        & 51.7  & 46.7  & 40.6        & 34.8        & 27.3        & 18.4        \\ \hline
	\end{tabular}
\label{Table:Power-exp}
\end{table}

\begin{table}[!htb]
	\centering
	\caption{Achieved power ($\times100$) for data generated from Multivariate Cauchy distribution with missing pattern as in Table \ref{Table:schematic} for $d=2$. Here, $Q_n$ is the Wald-type statistic proposed in (\ref{Q_N}); $F_n$ is ANOVA-type statistic for small sample approximation proposed in (\ref{ANOVA}); $Q_n^{(1)}$ and $F_n^{(1)}$ are Wald-type and ANOVA-type tests in the Incomplete method; $Q_n^{(2)}$ and $F_n^{(2)}$ are Wald-type and ANOVA-type tests in the Complete method. The correlation coefficients are $(\rho_1,\rho_2,\rho_{12})=(0.1,0.9,0.5)$.}
	\begin{tabular}{ccc|c|c|c|c|c|c|c|c|c|c|c|c|}
		\cline{4-15}
		&                                 &            & \multicolumn{6}{c|}{$(\sigma_1^2,\sigma_2^2)=(1,1)$}                  & \multicolumn{6}{c|}{$(\sigma_1^2,\sigma_2^2)=(1,5)$}                  \\ \hline
		\multicolumn{1}{|c|}{\begin{tabular}[c]{@{}c@{}}Sample\\ Size\\ Setting\end{tabular}} & \multicolumn{1}{c|}{$\delta_1$} & $\delta_2$ & $Q_n$ & $F_n$ & $Q_n^{(1)}$ & $F_n^{(1)}$ & $Q_n^{(2)}$ & $F_n^{(2)}$ & $Q_n$ & $F_n$ & $Q_n^{(1)}$ & $F_n^{(1)}$ & $Q_n^{(2)}$ & $F_n^{(2)}$ \\ \hline
		\multicolumn{1}{|c|}{\multirow{6}{*}{1}}                                              & \multicolumn{1}{c|}{0}          & 0.3        & 12.7  & 10.3  & 11          & 9.3         & 14.9        & 9.7         & 8.8   & 7.2   & 8.2         & 7.2         & 10.9        & 5.9         \\ \cline{2-15} 
		\multicolumn{1}{|c|}{}                                                                & \multicolumn{1}{c|}{0.3}        & 0.3        & 12.8  & 14.9  & 10.1        & 11.6        & 15.3        & 10.1        & 7.5   & 9.8   & 7.8         & 8.7         & 10.8        & 7.2         \\ \cline{2-15} 
		\multicolumn{1}{|c|}{}                                                                & \multicolumn{1}{c|}{0.6}        & 0.6        & 38.4  & 45.2  & 28.6        & 35.4        & 30.3        & 20.8        & 15.5  & 20.9  & 12.5        & 15.5        & 14          & 11.4        \\ \cline{2-15} 
		\multicolumn{1}{|c|}{}                                                                & \multicolumn{1}{c|}{0.9}        & 0.9        & 63.8  & 71.2  & 47.4        & 54.7        & 45.7        & 36.8        & 29.2  & 37.1  & 19          & 28.1        & 21.3        & 17.8        \\ \cline{2-15} 
		\multicolumn{1}{|c|}{}                                                                & \multicolumn{1}{c|}{0.3}        & 0.6        & 28.7  & 29.4  & 20.9        & 21.9        & 24.4        & 15.5        & 13    & 14.7  & 11.4        & 11.6        & 13.3        & 10.1        \\ \cline{2-15} 
		\multicolumn{1}{|c|}{}                                                                & \multicolumn{1}{c|}{0.3}        & 0.9        & 49.2  & 48.7  & 35.3        & 34.9        & 34.3        & 24          & 25.7  & 24.4  & 20.2        & 18.8        & 19.8        & 12.4        \\ \hline
		\multicolumn{1}{|c|}{\multirow{6}{*}{2}}                                              & \multicolumn{1}{c|}{0}          & 0.3        & 15.2  & 12.2  & 10.5        & 6.7         & 13.7        & 11.2        & 10    & 8.1   & 9.2         & 8.5         & 8.8         & 6.5         \\ \cline{2-15} 
		\multicolumn{1}{|c|}{}                                                                & \multicolumn{1}{c|}{0.3}        & 0.3        & 17.5  & 17.9  & 9.7         & 8.1         & 20.6        & 17.2        & 8.4   & 9.5   & 8.7         & 7.3         & 8.4         & 9.2         \\ \cline{2-15} 
		\multicolumn{1}{|c|}{}                                                                & \multicolumn{1}{c|}{0.6}        & 0.6        & 49.5  & 53.9  & 17.9        & 16.4        & 52.2        & 50.7        & 19.9  & 22.8  & 11.9        & 12.1        & 19.8        & 21.6        \\ \cline{2-15} 
		\multicolumn{1}{|c|}{}                                                                & \multicolumn{1}{c|}{0.9}        & 0.9        & 78.7  & 83.4  & 24.6        & 25.7        & 76.9        & 76.9        & 35.8  & 42.9  & 15.1        & 15.5        & 32.6        & 37.8        \\ \cline{2-15} 
		\multicolumn{1}{|c|}{}                                                                & \multicolumn{1}{c|}{0.3}        & 0.6        & 38.6  & 40.7  & 16.5        & 13.9        & 40.8        & 37.8        & 14.8  & 17.4  & 9.8         & 9.1         & 15.3        & 14.8        \\ \cline{2-15} 
		\multicolumn{1}{|c|}{}                                                                & \multicolumn{1}{c|}{0.3}        & 0.9        & 63.6  & 62.9  & 23.7        & 19          & 61.5        & 59          & 31    & 30.6  & 14.3        & 12.6        & 27.4        & 25.8        \\ \hline
		\multicolumn{1}{|c|}{\multirow{6}{*}{3}}                                              & \multicolumn{1}{c|}{0}          & 0.3        & 15.3  & 11.6  & 13          & 10.9        & 13.9        & 11.6        & 8.3   & 7.1   & 9.4         & 8.5         & 9.5         & 6.7         \\ \cline{2-15} 
		\multicolumn{1}{|c|}{}                                                                & \multicolumn{1}{c|}{0.3}        & 0.3        & 17    & 21.1  & 11.8        & 12.7        & 19.9        & 17.4        & 7.8   & 9.6   & 10.1        & 10.3        & 9.9         & 9.1         \\ \cline{2-15} 
		\multicolumn{1}{|c|}{}                                                                & \multicolumn{1}{c|}{0.6}        & 0.6        & 48.7  & 57.7  & 18.7        & 23.3        & 53          & 52.9        & 16.5  & 22    & 13.2        & 13.9        & 18.9        & 20.6        \\ \cline{2-15} 
		\multicolumn{1}{|c|}{}                                                                & \multicolumn{1}{c|}{0.9}        & 0.9        & 79    & 87.2  & 30.1        & 34.3        & 75.8        & 77.5        & 34.3  & 42.9  & 15.8        & 17.9        & 33.9        & 39          \\ \cline{2-15} 
		\multicolumn{1}{|c|}{}                                                                & \multicolumn{1}{c|}{0.3}        & 0.6        & 38.5  & 40    & 15.9        & 15.9        & 41.6        & 38.5        & 16.6  & 18    & 12.2        & 11.3        & 16.6        & 16.1        \\ \cline{2-15} 
		\multicolumn{1}{|c|}{}                                                                & \multicolumn{1}{c|}{0.3}        & 0.9        & 62.4  & 63.2  & 28.2        & 23          & 58.4        & 56.6        & 30.5  & 27.4  & 16          & 12.7        & 28.3        & 26.7        \\ \hline
		\multicolumn{1}{|c|}{\multirow{6}{*}{4}}                                              & \multicolumn{1}{c|}{0}          & 0.3        & 11.5  & 8.6   & 11.5        & 7.7         & 14.6        & 8.4         & 9.1   & 7.1   & 9           & 6.6         & 10          & 6.7         \\ \cline{2-15} 
		\multicolumn{1}{|c|}{}                                                                & \multicolumn{1}{c|}{0.3}        & 0.3        & 15.3  & 14.1  & 16.1        & 11.7        & 16.6        & 9.4         & 8.9   & 9     & 10.2        & 8.4         & 11.2        & 8.9         \\ \cline{2-15} 
		\multicolumn{1}{|c|}{}                                                                & \multicolumn{1}{c|}{0.6}        & 0.6        & 34.4  & 36.5  & 25          & 22.5        & 32.3        & 23.6        & 15.7  & 16.4  & 13.1        & 11.7        & 13.8        & 10.2        \\ \cline{2-15} 
		\multicolumn{1}{|c|}{}                                                                & \multicolumn{1}{c|}{0.9}        & 0.9        & 54.1  & 58.4  & 37.3        & 33.7        & 48.2        & 38.8        & 27.8  & 31.1  & 21.2        & 20.6        & 21.2        & 17.9        \\ \cline{2-15} 
		\multicolumn{1}{|c|}{}                                                                & \multicolumn{1}{c|}{0.3}        & 0.6        & 26.4  & 24.9  & 20.7        & 16.4        & 23.3        & 15.1        & 13.3  & 13.4  & 11.2        & 10.1        & 14          & 10.3        \\ \cline{2-15} 
		\multicolumn{1}{|c|}{}                                                                & \multicolumn{1}{c|}{0.3}        & 0.9        & 40.7  & 38.7  & 28.8        & 22.5        & 35.8        & 25.8        & 23.6  & 22.8  & 17.1        & 14.9        & 20.6        & 15.8        \\ \hline
	\end{tabular}
\label{Table:Power-cauchy}
\end{table}

\subsection{Multiple Imputation}
\label{sec:MultipleImputation}
Another common method for handling missing data is Multiple Imputation introduced in Rubin \cite{Rubin:1987}. In the context of our problem, missing or deficient values are replaced with  two or more acceptable values generated from a predictive distribution. Hotelling's two sample $T^2$ test will be conducted on each of the completed data to test equality of mean vectors in the two groups, i.e. $H_0:\bm{\mu}_1=\bm{\mu}_2$ vs $H_1:\bm{\mu}_1\ne\bm{\mu}_2$. We refer to this test procedure as Imputation. However, due to the computational cost  of Imputation, only a small-scale simulation is conducted to compare its performance with the methods introduced in this paper. In the covariance matrix, $(\rho_1,\rho_2,\rho_{12})=(-0.1,-0.1,-0.1)$ or $(0.1,0.1,0.1)$ and $(\sigma_1,\sigma_2)=(1,1)$ or $(1,5)$ are used.        	
The achieved Type-I error rates and powers from Imputation are displayed in Tables \ref{Table:Multiple-Typeone} and \ref{Table:Multiple-power}. In Table \ref{Table:Multiple-Typeone}, for discretized multivariate normal distribution or multivariate log-normal distribution with homoscedasticity, $F_n$ and Imputation achieve similar Type-I error rates. However, when data are generated from multivariate log-normal distribution with heteroscedasticity, the achieved Type-I error rates are too liberal and nearly 1. Also, for data generated from multivariate Cauchy distribution, the achieved Type-I error rates are too conservative compared to $F_n$. Overall, Imputation achieves smaller powers compared with $Q_n$ and $F_n$. Specifically, for data that are generated from multivariate Cauchy distribution, the achieved powers are way too small.

\begin{table}[!htb]
	\centering
	\caption{Achieved Type-I error rate ($\times100$) for data generated from Discretized Multivariate Normal, Multivariate Log-Normal and Multivariate Cauchy distributions with missing pattern as in Table \ref{Table:schematic} for $d=2,3,5$. Here, $Q_n$ is the Wald-type statistic proposed in (\ref{Q_N}); $F_n$ is ANOVA-type statistic for small sample approximation proposed in (\ref{ANOVA}); Multiple Impute is the test procedure in Imputation alternative method with 5 chains. The nominal Type-I error rate is $\alpha=0.05$ and sample sizes are as in Setting 1.}
	\begin{tabular}{cllclc|c|c|c|c|c|c|c|c|c|}
		\cline{7-15}
		\multicolumn{6}{c|}{} & \multicolumn{3}{c|}{\begin{tabular}[c]{@{}c@{}}Discretized\\ Multivariate\\ Normal\end{tabular}} & \multicolumn{3}{c|}{\begin{tabular}[c]{@{}c@{}}Multivariate\\ Log-Normal\end{tabular}} & \multicolumn{3}{c|}{\begin{tabular}[c]{@{}c@{}}Multivariate\\ Cauchy\end{tabular}} \\ \hline
		\multicolumn{3}{|c|}{$(\rho_1,\rho_2,\rho_{12})$} & \multicolumn{2}{c|}{$(\sigma_1^2,\sigma_2^2)$} & $d$ & $Q_n$ & $F_n$ & \begin{tabular}[c]{@{}c@{}}Multiple\\ Impute\end{tabular} & $Q_n$ & $F_n$ & \begin{tabular}[c]{@{}c@{}}Multiple\\ Impute\end{tabular} & $Q_n$ & $F_n$ & \begin{tabular}[c]{@{}c@{}}Multiple\\ Impute\end{tabular} \\ \hline
		\multicolumn{3}{|c|}{\multirow{6}{*}{(-0.1,-0.1,-0.1)}} & \multicolumn{2}{c|}{\multirow{3}{*}{(1,1)}} & 2 & 5.2 & 4.2 & 4.6 & 7.2 & 6.3 & 5.6  & 5.4  & 4.4 & 2.4 \\ \cline{6-15} 
		\multicolumn{3}{|c|}{}                                  & \multicolumn{2}{c|}{}                       & 3 & 7.5 & 5.7 & 5.9 & 7.1 & 5.3 & 5.8  & 7.3  & 5.3 & 2.1 \\ \cline{6-15} 
		\multicolumn{3}{|c|}{}                                  & \multicolumn{2}{c|}{}                       & 5 & 12  & 7.1 & 5.3 & 8.0   & 4.8 & 5.0    & 7.9  & 4.0   & 1.1 \\ \cline{4-15} 
		\multicolumn{3}{|c|}{}                                  & \multicolumn{2}{c|}{\multirow{3}{*}{(1,5)}} & 2 & 7.2 & 6.8 & 6.1 & 6.5 & 5.3 & 73.4 & 7.0    & 5.8 & 2.9 \\ \cline{6-15} 
		\multicolumn{3}{|c|}{}                                  & \multicolumn{2}{c|}{}                       & 3 & 7.1 & 5.7 & 4.8 & 5.4 & 4.3 & 85.7 & 7.8  & 5.8 & 2.7 \\ \cline{6-15} 
		\multicolumn{3}{|c|}{}                                  & \multicolumn{2}{c|}{}                       & 5 & 6.6 & 3.8 & 4.5 & 9.2 & 5.0   & 97.3 & 9.1  & 5.4 & 1.0   \\ \hline
		\multicolumn{3}{|c|}{\multirow{6}{*}{(0.1,0.1,0.1)}}    & \multicolumn{2}{c|}{\multirow{3}{*}{(1,1)}} & 2 & 5.5 & 5.3 & 4.7 & 7.1 & 6.3 & 6.0    & 5.5  & 4.3 & 3.2 \\ \cline{6-15} 
		\multicolumn{3}{|c|}{}                                  & \multicolumn{2}{c|}{}                       & 3 & 7.2 & 4.9 & 3.6 & 6.4 & 4.3 & 6.0    & 6.6  & 4.5 & 2.1 \\ \cline{6-15} 
		\multicolumn{3}{|c|}{}                                  & \multicolumn{2}{c|}{}                       & 5 & 9.7 & 6.0   & 4.0   & 9.6 & 5.4 & 6.3  & 10.2 & 5.3 & 0.8 \\ \cline{4-15} 
		\multicolumn{3}{|c|}{}                                  & \multicolumn{2}{c|}{\multirow{3}{*}{(1,5)}} & 2 & 5.1 & 4.7 & 5.6 & 7.2 & 6.3 & 72.2 & 6.4  & 5.4 & 2.8 \\ \cline{6-15} 
		\multicolumn{3}{|c|}{}                                  & \multicolumn{2}{c|}{}                       & 3 & 8.0   & 6.1 & 6.3 & 6.3 & 3.7 & 87   & 8.2  & 6.7 & 2.4 \\ \cline{6-15} 
		\multicolumn{3}{|c|}{}                                  & \multicolumn{2}{c|}{}                       & 5 & 8.8 & 5.9 & 4.8 & 8.9 & 5.0   & 95   & 8.9  & 4.8 & 1.1 \\ \hline
	\end{tabular}
	\label{Table:Multiple-Typeone}
\end{table}

\begin{table}[!htb]
	\centering
	\caption{Achieved power ($\times100$) for data generated from Discretized Multivariate Normal, Multivariate Log-Normal and Multivariate Cauchy distributions with missing pattern as in Table \ref{Table:schematic} for $d=2,3,5$. Here, $Q_n$ is the Wald-type statistic proposed in (\ref{Q_N}); $F_n$ is ANOVA-type statistic for small sample approximation proposed in (\ref{ANOVA}); Multiple Impute is the test procedure in Imputation alternative method with 5 chains. The nominal Type-I error rate is $\alpha=0.05$ and sample sizes are as in Setting 1.}
	\begin{tabular}{ccccccc|c|c|c|c|c|c|c|c|c|}
		\cline{8-16}
		&                  &                  &                        &                       &                                 &            & \multicolumn{3}{c|}{\begin{tabular}[c]{@{}c@{}}Discretized\\ Multivariate\\ Normal\end{tabular}} & \multicolumn{3}{c|}{\begin{tabular}[c]{@{}c@{}}Multivariate\\ Log-Normal\end{tabular}} & \multicolumn{3}{c|}{\begin{tabular}[c]{@{}c@{}}Multivariate\\ Cauchy\end{tabular}} \\ \hline
		\multicolumn{3}{|c|}{$(\rho_1,\rho_2,\rho_{12})$}        & \multicolumn{2}{c|}{$(\sigma_1^2,\sigma_2^2)$} & \multicolumn{1}{c|}{$\delta_1$} & $\delta_2$ & $Q_n$         & $F_n$         & \begin{tabular}[c]{@{}c@{}}Multiple\\ Impute\end{tabular}        & $Q_n$      & $F_n$     & \begin{tabular}[c]{@{}c@{}}Multiple\\ Impute\end{tabular}     & $Q_n$    & $F_n$    & \begin{tabular}[c]{@{}c@{}}Multiple\\ Impute\end{tabular}    \\ \hline
		\multicolumn{3}{|c|}{\multirow{12}{*}{(-0.1,-0.1,-0.1)}} & \multicolumn{2}{c|}{\multirow{6}{*}{(1,1)}}    & \multicolumn{1}{c|}{0}          & 0.3        & 20            & 18.1          & 9                                                                & 19.9       & 17        & Impute                                                        & 10       & 8.9      & 2.7                                                          \\ \cline{6-16} 
		\multicolumn{3}{|c|}{}                                   & \multicolumn{2}{c|}{}                          & \multicolumn{1}{c|}{0.3}        & 0.3        & 40            & 35            & 16.1                                                             & 40.3       & 35.9      & 17.8                                                          & 17.3     & 14.9     & 3.5                                                          \\ \cline{6-16} 
		\multicolumn{3}{|c|}{}                                   & \multicolumn{2}{c|}{}                          & \multicolumn{1}{c|}{0.6}        & 0.6        & 93            & 91.1          & 58.7                                                             & 94.2       & 93        & 45.5                                                          & 44.8     & 42.5     & 3.9                                                          \\ \cline{6-16} 
		\multicolumn{3}{|c|}{}                                   & \multicolumn{2}{c|}{}                          & \multicolumn{1}{c|}{0.9}        & 0.9        & 99.9          & 99.9          & 89.4                                                             & 100        & 100       & 72.9                                                          & 75.6     & 73.3     & 7.2                                                          \\ \cline{6-16} 
		\multicolumn{3}{|c|}{}                                   & \multicolumn{2}{c|}{}                          & \multicolumn{1}{c|}{0.3}        & 0.6        & 77.7          & 75            & 40.2                                                             & 77.3       & 73.8      & 30.8                                                          & 31.9     & 28.6     & 3.2                                                          \\ \cline{6-16} 
		\multicolumn{3}{|c|}{}                                   & \multicolumn{2}{c|}{}                          & \multicolumn{1}{c|}{0.3}        & 0.9        & 97.2          & 96.7          & 68.9                                                             & 97.5       & 97.1      & 48.4                                                          & 52.7     & 48.9     & 5.7                                                          \\ \cline{4-16} 
		\multicolumn{3}{|c|}{}                                   & \multicolumn{2}{c|}{\multirow{6}{*}{(1,5)}}    & \multicolumn{1}{c|}{0}          & 0.3        & 10.9          & 9.1           & 6.4                                                              & 13.2       & 12        & 31.1                                                          & 7.5      & 6.1      & 2.2                                                          \\ \cline{6-16} 
		\multicolumn{3}{|c|}{}                                   & \multicolumn{2}{c|}{}                          & \multicolumn{1}{c|}{0.3}        & 0.3        & 17.4          & 14            & 9.6                                                              & 15.3       & 13        & 31.5                                                          & 9.7      & 8.7      & 3                                                            \\ \cline{6-16} 
		\multicolumn{3}{|c|}{}                                   & \multicolumn{2}{c|}{}                          & \multicolumn{1}{c|}{0.6}        & 0.6        & 47.3          & 43.2          & 21.9                                                             & 47.6       & 44.8      & 37.4                                                          & 21.4     & 18.4     & 3.2                                                          \\ \cline{6-16} 
		\multicolumn{3}{|c|}{}                                   & \multicolumn{2}{c|}{}                          & \multicolumn{1}{c|}{0.9}        & 0.9        & 80.1          & 76.8          & 46.5                                                             & 81.8       & 78.2      & 44.4                                                          & 40.8     & 36.2     & 4.6                                                          \\ \cline{6-16} 
		\multicolumn{3}{|c|}{}                                   & \multicolumn{2}{c|}{}                          & \multicolumn{1}{c|}{0.3}        & 0.6        & 32.2          & 28            & 13.3                                                             & 31.2       & 27.9      & 33.5                                                          & 15.8     & 13.2     & 3.1                                                          \\ \cline{6-16} 
		\multicolumn{3}{|c|}{}                                   & \multicolumn{2}{c|}{}                          & \multicolumn{1}{c|}{0.3}        & 0.9        & 55.9          & 52.9          & 27.2                                                             & 58.4       & 54.5      & 37.8                                                          & 27.3     & 22.7     & 4.3                                                          \\ \hline
		\multicolumn{3}{|c|}{\multirow{12}{*}{(0.1,0.1,0.1)}}    & \multicolumn{2}{c|}{\multirow{6}{*}{(1,1)}}    & \multicolumn{1}{c|}{0}          & 0.3        & 22.2          & 20.3          & 10.8                                                             & 21.8       & 20        & 11.3                                                          & 10       & 8.7      & 2.3                                                          \\ \cline{6-16} 
		\multicolumn{3}{|c|}{}                                   & \multicolumn{2}{c|}{}                          & \multicolumn{1}{c|}{0.3}        & 0.3        & 36            & 35            & 15.7                                                             & 38.1       & 38.1      & 16.1                                                          & 16.3     & 15.5     & 2.2                                                          \\ \cline{6-16} 
		\multicolumn{3}{|c|}{}                                   & \multicolumn{2}{c|}{}                          & \multicolumn{1}{c|}{0.6}        & 0.6        & 90            & 90.6          & 51.6                                                             & 90.3       & 90.4      & 41.2                                                          & 41.9     & 43       & 4                                                            \\ \cline{6-16} 
		\multicolumn{3}{|c|}{}                                   & \multicolumn{2}{c|}{}                          & \multicolumn{1}{c|}{0.9}        & 0.9        & 100           & 100           & 89                                                               & 99.9       & 99.9      & 72.1                                                          & 70.1     & 72       & 7.4                                                          \\ \cline{6-16} 
		\multicolumn{3}{|c|}{}                                   & \multicolumn{2}{c|}{}                          & \multicolumn{1}{c|}{0.3}        & 0.6        & 70            & 69            & 35.8                                                             & 75.8       & 76        & 30.1                                                          & 30.6     & 30.5     & 4.3                                                          \\ \cline{6-16} 
		\multicolumn{3}{|c|}{}                                   & \multicolumn{2}{c|}{}                          & \multicolumn{1}{c|}{0.3}        & 0.9        & 95.3          & 95.2          & 63.6                                                             & 97.1       & 97.1      & 50.8                                                          & 49.8     & 49.2     & 4.6                                                          \\ \cline{4-16} 
		\multicolumn{3}{|c|}{}                                   & \multicolumn{2}{c|}{\multirow{6}{*}{(1,5)}}    & \multicolumn{1}{c|}{0}          & 0.3        & 12.4          & 10.9          & 6.6                                                              & 8.6        & 7.2       & 30.8                                                          & 8.9      & 7.9      & 1.8                                                          \\ \cline{6-16} 
		\multicolumn{3}{|c|}{}                                   & \multicolumn{2}{c|}{}                          & \multicolumn{1}{c|}{0.3}        & 0.3        & 16.3          & 16.2          & 9.1                                                              & 15.5       & 14.3      & 31.3                                                          & 10.2     & 9.1      & 2.7                                                          \\ \cline{6-16} 
		\multicolumn{3}{|c|}{}                                   & \multicolumn{2}{c|}{}                          & \multicolumn{1}{c|}{0.6}        & 0.6        & 45.9          & 45.1          & 20.3                                                             & 44.5       & 44.9      & 38.4                                                          & 20.3     & 19.2     & 2.3                                                          \\ \cline{6-16} 
		\multicolumn{3}{|c|}{}                                   & \multicolumn{2}{c|}{}                          & \multicolumn{1}{c|}{0.9}        & 0.9        &    76.9      &         77.3      &                             41.5                          & 78.5       & 78.9      & 42.7                                                          & 36.6     & 37.7     & 4.3                                                          \\ \cline{6-16} 
		\multicolumn{3}{|c|}{}                                   & \multicolumn{2}{c|}{}                          & \multicolumn{1}{c|}{0.3}        & 0.6        &       30.4        &       28.8        &                     13.6                                 & 30.9       & 30.5      & 34.1                                                          & 17.2     & 15.6     & 2.7                                                          \\ \cline{6-16} 
		\multicolumn{3}{|c|}{}                                   & \multicolumn{2}{c|}{}                          & \multicolumn{1}{c|}{0.3}        & 0.9        &      53.8         &       52.4        &                        23.5                               & 51.1       & 50.5      & 37.8                                                          & 25.6     & 24.1     & 3.5                                                          \\ \hline
	\end{tabular}
\label{Table:Multiple-power}
\end{table}

\subsection{General Missing Pattern}
\label{sec:GeneralMissingPattern}
{For multivariate data that have general missing structures, we anticipate performance to be affected by sample size allocations. We consider three allocations which are aimed to cover practical situations, and we refer to them as Design 1-3.
	\begin{description} 
		\item [Design 1:] Fix the total sample size $n$ and assign $n/15$ subjects per missing pattern.
		\item [Design 2:] Fix the total sample size $n$ and vary the proportion of complete cases $a$ such that $n_1=na$ (complete sample size) and $n_2=\cdots=n_{15}=n(1-a)/14$.
		\item [Design 3:] Fix $n_2,\cdots,n_{15}=100$ (i.e. large number of incomplete cases) and vary the complete sample size $n_1$.
	\end{description}
	
The achieved Type-I error rates and powers are shown in Tables \ref{Table:flex-design1-typeone}-\ref{Table:flex-design3-typeone}  and Tables \ref{Table:flex-design1-power}-\ref{Table:flex-design3-power}, respectively. The performance of the methods under Discretized Multivariate Normal distribution and Multivariate Log-Normal distribution are similar.

In Table \ref{Table:flex-design1-typeone}, the ANOVA-type statistic preserves preassigned significance level very well in all settings, while Wald-type statistic tends to be slightly liberal. In Table \ref{Table:flex-design2-typeone} and \ref{Table:flex-design3-typeone}, performance of Wald-type and ANOVA-type statistics are very close and they both control preassigned significance level well. Furthermore, their performances are not affected by the distributions, covariance structures or sample sizes allocations.

The power reacts are consistent with Section \ref{sec:PowerStudy}. The ANOVA-type statistics tend to achieve higher powers than the Wald-type statistics and, further, homogeneous data yield more powers compared to heterogeneous data. Furthermore, we see that powers generally increase with the total sample size $n$, the proportion of complete cases (in Table \ref{Table:flex-design2-power}) and the complete sample size (in Table \ref{Table:flex-design3-power}). It also worth to mention that the results in Table \ref{Table:flex-design3-power} are quite favorable in the sense that most of them approach unity even when $\delta$ is small. }

\begin{table}[!htb]
	\caption{Achieved Type-I error rate ($\times100$) for data generated from Discretized Multivariate Normal, Multivariate Log-Normal and Multivariate Cauchy distributions with general missing pattern for $d=2$. Here, $n$ is the total sample size in Design 1; $Q_n$ is the Wald-type statistic proposed in (\ref{Q_N}); $F_n$ is ANOVA-type statistic for small sample approximation proposed in (\ref{ANOVA}). The nominal Type-I error rate is $\alpha=0.05$.}
	\centering
	\begin{tabular}{|c|c|l|l|c|l|c|c|c|c|c|c|}
		\hline
		\multicolumn{6}{|c|}{\textbf{Design 1}} & \multicolumn{2}{c|}{\begin{tabular}[c]{@{}c@{}}Discretized\\ Multivariate\\ Normal\end{tabular}} & \multicolumn{2}{c|}{\begin{tabular}[c]{@{}c@{}}Multivariate\\ Log-Normal\end{tabular}} & \multicolumn{2}{c|}{\begin{tabular}[c]{@{}c@{}}Multivariate\\ Cauchy\end{tabular}} \\ \hline
		$n$ & \multicolumn{3}{c|}{$(\rho_1,\rho_2,\rho_{12})$} & \multicolumn{2}{c|}{$(\sigma_1,\sigma_2)$} & $Q_n$ & $F_n$ & $Q_n$ & $F_n$ & $Q_n$ & $F_n$ \\ \hline
		\multirow{4}{*}{75} & \multicolumn{3}{c|}{\multirow{2}{*}{(-0.1,-0.1,-0.1)}} & \multicolumn{2}{c|}{(1,1)} & 6.2 & 4.9 & 6.4 & 5.6 & 6.1 & 4.6 \\ \cline{5-12} 
		& \multicolumn{3}{c|}{} & \multicolumn{2}{c|}{(1,5)} & 5.9 & 5.5 & 5.4 & 5.0 & 6.5 & 5.6 \\ \cline{2-12} 
		& \multicolumn{3}{c|}{\multirow{2}{*}{(0.1,0.1,0.1)}} & \multicolumn{2}{c|}{(1,1)} & 7.7 & 6.1 & 5.5 & 4.4 & 6.7 & 6.0 \\ \cline{5-12} 
		& \multicolumn{3}{c|}{} & \multicolumn{2}{c|}{(1,5)} & 7.6 & 6.1 & 6.5 & 5.2 & 6.7 & 6.3 \\ \hline
		\multirow{4}{*}{150} & \multicolumn{3}{c|}{\multirow{2}{*}{(-0.1,-0.1,-0.1)}} & \multicolumn{2}{c|}{(1,1)} & 7.1 & 6.5 & 5.4 & 5.0   & 6.2 & 5.6 \\ \cline{5-12} 
		& \multicolumn{3}{c|}{} & \multicolumn{2}{c|}{(1,5)} & 6.3 & 5.7 & 5.9 & 5.5 & 5.8 & 5.4 \\ \cline{2-12} 
		& \multicolumn{3}{c|}{\multirow{2}{*}{(0.1,0.1,0.1)}} & \multicolumn{2}{c|}{(1,1)} & 5.4 & 4.7 & 5.5 & 5.3 & 4.9 & 3.4\\ \cline{5-12} 
		& \multicolumn{3}{c|}{} & \multicolumn{2}{c|}{(1,5)} & 5.7 & 5.0   & 5.6 & 4.8 & 4.9 & 4.6 \\ \hline
		\multirow{4}{*}{300} & \multicolumn{3}{c|}{\multirow{2}{*}{(-0.1,-0.1,-0.1)}} & \multicolumn{2}{c|}{(1,1)} & 5.5 & 5.5 & 5.9 & 5.4 & 5.1 & 4.8 \\ \cline{5-12} 
		& \multicolumn{3}{c|}{} & \multicolumn{2}{c|}{(1,5)} & 7.1 & 7.2 & 4.7 & 4.5 & 4.9 & 4.3 \\ \cline{2-12} 
		& \multicolumn{3}{c|}{\multirow{2}{*}{(0.1,0.1,0.1)}} & \multicolumn{2}{c|}{(1,1)} & 5.6 & 5.4 & 4.9 & 4.5 & 4.7 & 4.5 \\ \cline{5-12} 
		& \multicolumn{3}{c|}{} & \multicolumn{2}{c|}{(1,5)} & 3.6 & 3.6 & 4.5 & 3.9 & 8.0   & 7.5 \\ \hline
	\end{tabular}
	\label{Table:flex-design1-typeone}
\end{table}

\begin{table}[htb]
	\centering
	\caption{Achieved Type-I error rate ($\times100$) for data generated from Discretized Multivariate Normal, Multivariate Log-Normal and Multivariate Cauchy distributions with general missing pattern for $d=2$. Here, $n$ is the total sample size in Design 2; $Q_n$ is the Wald-type statistic proposed in (\ref{Q_N}); $F_n$ is ANOVA-type statistic for small sample approximation proposed in (\ref{ANOVA}). The nominal Type-I error rate is $\alpha=0.05$.}
	\begin{tabular}{|c|c|c|l|c|c|c|c|c|c|c|}
		\hline
		\multicolumn{5}{|c|}{\textbf{Design 2}} & \multicolumn{2}{c|}{\begin{tabular}[c]{@{}c@{}}Discretized\\ Multivariate\\ Normal\end{tabular}} & \multicolumn{2}{c|}{\begin{tabular}[c]{@{}c@{}}Multivariate\\ Log-Normal\end{tabular}} & \multicolumn{2}{c|}{\begin{tabular}[c]{@{}c@{}}Multivariate\\ Cauchy\end{tabular}} \\ \hline
		$n$ & $(\rho_1,\rho_2,\rho_{12})$ & \multicolumn{2}{c|}{$(\sigma ^2_1,\sigma^ 2_2)$} & $a$ & $Q_n$ & $F_n$ & $Q_n$ & $F_n$ & $Q_n$ & $F_n$ \\ \hline
		\multirow{16}{*}{210} & \multirow{8}{*}{(-0.1,-0.1-0.1)} & \multicolumn{2}{c|}{\multirow{4}{*}{(1,1)}} & 0.2 & 5.6 & 4.9 & 5.1 & 4.9 & 4.4 & 4.6 \\ \cline{5-11} 
		&                                  & \multicolumn{2}{c|}{}                       & 0.4 & 5.1 & 4.9 & 6.2 & 6.2 & 4.8 & 5.0   \\ \cline{5-11} 
		&                                  & \multicolumn{2}{c|}{}                       & 0.6 & 5.8 & 5.4 & 5.1 & 5.0   & 5.7 & 5.7 \\ \cline{5-11} 
		&                                  & \multicolumn{2}{c|}{}                       & 0.8 & 5.5 & 5.1 & 4.6 & 4.1 & 5.8 & 5.6 \\ \cline{3-11} 
		&                                  & \multicolumn{2}{c|}{\multirow{4}{*}{(1,5)}} & 0.2 & 5.2 & 5.3 & 4.3 & 4.1 & 7.1 & 6.3 \\ \cline{5-11} 
		&                                  & \multicolumn{2}{c|}{}                       & 0.4 & 5   & 4.5 & 4.6 & 4.0   & 5.3 & 5.2 \\ \cline{5-11} 
		&                                  & \multicolumn{2}{c|}{}                       & 0.6 & 6.2 & 6.2 & 6.2 & 5.6 & 4.4 & 4.4 \\ \cline{5-11} 
		&                                  & \multicolumn{2}{c|}{}                       & 0.8 & 6.4 & 6.2 & 5.9 & 6.0   & 5.6 & 5.2 \\ \cline{2-11} 
		& \multirow{8}{*}{(0.1,0.1,0.1)}   & \multicolumn{2}{c|}{\multirow{4}{*}{(1,1)}} & 0.2 & 4.9 & 4.7 & 5.2 & 4.8 & 4.9 & 4.8 \\ \cline{5-11} 
		&                                  & \multicolumn{2}{c|}{}                       & 0.4 & 5.4 & 5.0   & 5.9 & 4.9 & 5.9 & 5.5 \\ \cline{5-11} 
		&                                  & \multicolumn{2}{c|}{}                       & 0.6 & 5.5 & 5.4 & 7.0   & 6.4 & 6.4 & 5.9 \\ \cline{5-11} 
		&                                  & \multicolumn{2}{c|}{}                       & 0.8 & 5.4 & 4.9 & 4.8 & 4.5 & 5.3 & 4.7 \\ \cline{3-11} 
		&                                  & \multicolumn{2}{c|}{\multirow{4}{*}{(1,5)}} & 0.2 & 5.0   & 4.8 & 5.9 & 5.5 & 5.4 & 5.3 \\ \cline{5-11} 
		&                                  & \multicolumn{2}{c|}{}                       & 0.4 & 6.5 & 5.9 & 6.7 & 5.6 & 4.5 & 4.4 \\ \cline{5-11} 
		&                                  & \multicolumn{2}{c|}{}                       & 0.6 & 5.0   & 4.8 & 4.6 & 4.4 & 6.9 & 6.1 \\ \cline{5-11} 
		&                                  & \multicolumn{2}{c|}{}                       & 0.8 & 4.8 & 5.1 & 4.5 & 4.7 & 4.9 & 4.4 \\ \hline
		\end{tabular}
	\label{Table:flex-design2-typeone}
\end{table}

\begin{table}[htb]
	\caption{Achieved Type-I error rate ($\times100$) for data generated from Discretized Multivariate Normal, Multivariate Log-Normal and Multivariate Cauchy distributions with general missing pattern for $d=2$. Here, $n_1$ is the complete sample size in Design 3; $Q_n$ is the Wald-type statistic proposed in (\ref{Q_N}); $F_n$ is ANOVA-type statistic for small sample approximation proposed in (\ref{ANOVA}). The nominal Type-I error rate is $\alpha=0.05$.}
	\centering
	\begin{tabular}{|c|l|l|c|l|c|c|c|c|c|c|c|}
		\hline
		\multicolumn{6}{|c|}{\textbf{Design 3}} &\multicolumn{2}{c|}{\begin{tabular}[c]{@{}c@{}}Discretized\\ Multivariate\\ Normal\end{tabular}} & \multicolumn{2}{c|}{\begin{tabular}[c]{@{}c@{}}Multivariate\\ Log-Normal\end{tabular}} & \multicolumn{2}{c|}{\begin{tabular}[c]{@{}c@{}}Multivariate\\ Cauchy\end{tabular}} \\ \hline
		\multicolumn{3}{|c|}{($\rho_1$,$\rho_2$,$\rho_{12}$)} & \multicolumn{2}{c|}{($\sigma_1$,$\sigma_2$)} & $n_1$ & $Q_n$ & $F_n$ & $Q_n$ & $F_n$ & $Q_n$ & $F_n$ \\ \hline
		\multicolumn{3}{|c|}{\multirow{6}{*}{(-0.1,-0.1,-0.1)}} & \multicolumn{2}{c|}{\multirow{3}{*}{(1,1)}} & 5 & 6.6 & 6.3 & 4.8 & 4.6 & 5.4 & 5.6 \\ \cline{6-12} 
		\multicolumn{3}{|c|}{} & \multicolumn{2}{c|}{} & 10 & 6.6 & 6.6 & 6.7 & 6.7 & 4.8 & 4.6 \\ \cline{6-12} 
		\multicolumn{3}{|c|}{} & \multicolumn{2}{c|}{} & 20 & 4.3 & 4.1 & 4.3 & 4.3 & 5.6 & 5.6 \\ \cline{4-12} 
		\multicolumn{3}{|c|}{} & \multicolumn{2}{c|}{\multirow{3}{*}{(1,5)}} & 5   & 5.0 & 5.0   & 4.9 & 5.0   & 4.9 & 4.8 \\ \cline{6-12} 
		\multicolumn{3}{|c|}{} & \multicolumn{2}{c|}{} & 10 & 4.9 & 4.4 & 5.1 & 5.1 & 5.1 & 5.6 \\ \cline{6-12} 
		\multicolumn{3}{|c|}{} & \multicolumn{2}{c|}{} & 20 & 5.8 & 5.6 & 5.8 & 5.6 & 5.6 & 5.9 \\ \hline
		\multicolumn{3}{|c|}{\multirow{6}{*}{(0.1,0.1,0.1)}} & \multicolumn{2}{c|}{\multirow{3}{*}{(1,1)}} & 5 & 5.1 & 4.8 & 6.7 & 6.7 & 4.5 & 4.3 \\ \cline{6-12} 
		\multicolumn{3}{|c|}{} & \multicolumn{2}{c|}{} & 10 & 4.9 & 4.8 & 5.2 & 5.2 & 4.7 & 4.7 \\ \cline{6-12} 
		\multicolumn{3}{|c|}{} & \multicolumn{2}{c|}{} & 20 & 5.9 & 5.8 & 3.8 & 3.9 & 4.1 & 4.0 \\ \cline{4-12} 
		\multicolumn{3}{|c|}{} & \multicolumn{2}{c|}{\multirow{3}{*}{(1,5)}} & 5 & 5.2 & 4.6 & 6.4 & 6.1 & 5.2 & 5.2 \\ \cline{6-12} 
		\multicolumn{3}{|c|}{} & \multicolumn{2}{c|}{} & 10 & 6.4 & 6.6 & 5.8 & 5.5 & 4.2 & 4.4 \\ \cline{6-12} 
		\multicolumn{3}{|c|}{} & \multicolumn{2}{c|}{} & 20 & 4.2 & 4.1 & 6.3 & 6.2 & 5.4 & 5.2 \\ \hline
	\end{tabular}
	\label{Table:flex-design3-typeone}
\end{table}

\begin{table}[!htb]
		\caption{Achieved power ($\times100$) for data generated from Discretized Multivariate Normal, Multivariate Log-Normal and Multivariate Cauchy distribution with general missing pattern for $d=2$. Here, $n$ is the total sample size in Design 1; $Q_n$ is the Wald-type statistic proposed in (\ref{Q_N}); $F_n$ is ANOVA-type statistic for small sample approximation proposed in (\ref{ANOVA}). The correlation coefficients are $(\rho_1,\rho_2,\rho_{12})=(0.1,0.1,0.1)$. The nominal Type-I error rate is $\alpha=0.05$.}
	\centering
	\begin{tabular}{|c|c|l|c|c|c|c|c|c|c|c|}
		\hline
		\multicolumn{5}{|c|}{\textbf{Design 1}}                                                                                                                                     & \multicolumn{2}{c|}{$n=75$} & \multicolumn{2}{c|}{$n=150$} & \multicolumn{2}{c|}{$n=300$} \\ \hline
		Distribution                                                                                     & \multicolumn{2}{c|}{$(\sigma_1^2,\sigma_2^2)$} & $\delta_1$ & $\delta_2$ & $Q_n$        & $F_n$        & $Q_n$         & $F_n$        & $Q_n$         & $F_n$        \\ \hline
		\multirow{12}{*}{\begin{tabular}[c]{@{}c@{}}Discretized\\ Multivariate\\ Normal\end{tabular}} & \multicolumn{2}{c|}{\multirow{6}{*}{(1,1)}}    & 0          & 0.3        & 20.8         & 18.9         & 35.6          & 34.8         & 67.8          & 67.2         \\ \cline{4-11} 
		& \multicolumn{2}{c|}{}                          & 0.3        & 0.3        & 37.0         & 34.6         & 61.6          & 61.7         & 92.4          & 93.0           \\ \cline{4-11} 
		& \multicolumn{2}{c|}{}                          & 0.6        & 0.6        & 92.8         & 92.3         & 99.8          & 99.7         & 100           & 100          \\ \cline{4-11} 
		& \multicolumn{2}{c|}{}                          & 0.9        & 0.9        & 99.9         & 99.9         & 100           & 100          & 100           & 100          \\ \cline{4-11} 
		& \multicolumn{2}{c|}{}                          & 0.3        & 0.6        & 77.8         & 75.6         & 95.7          & 95.6         & 100           & 100          \\ \cline{4-11} 
		& \multicolumn{2}{c|}{}                          & 0.3        & 0.9        & 95.7         & 94.9         & 100           & 100          & 100           & 100          \\ \cline{2-11} 
		& \multicolumn{2}{c|}{\multirow{6}{*}{(1,5)}}    & 0          & 0.3        & 11.9         & 10.4         & 14.0          & 13.3         & 24.7          & 23.6         \\ \cline{4-11} 
		& \multicolumn{2}{c|}{}                          & 0.3        & 0.3        & 15.5         & 14.1         & 23.0          & 22.2         & 44.5          & 44.9         \\ \cline{4-11} 
		& \multicolumn{2}{c|}{}                          & 0.6        & 0.6        & 44.1         & 43.2         & 74.6          & 74.9         & 96.5          & 96.5         \\ \cline{4-11} 
		& \multicolumn{2}{c|}{}                          & 0.9        & 0.9        & 77.9         & 79.2         & 98.3          & 98.7         & 100           & 100          \\ \cline{4-11} 
		& \multicolumn{2}{c|}{}                          & 0.3        & 0.6        & 28.4         & 27.6         & 55.4          & 55.1         & 85.2          & 85.2         \\ \cline{4-11} 
		& \multicolumn{2}{c|}{}                          & 0.3        & 0.9        & 55.4         & 52.6         & 84.0          & 83.6         & 99.4          & 99.4         \\ \hline
		\multirow{12}{*}{\begin{tabular}[c]{@{}c@{}}Multivariate\\ Cauchy\end{tabular}}                  & \multicolumn{2}{c|}{\multirow{6}{*}{(1,1)}}    & 0          & 0.3        & 11.2         & 10.0         & 17.2          & 15.9         & 26.0          & 25.2         \\ \cline{4-11} 
		& \multicolumn{2}{c|}{}                          & 0.3        & 0.3        & 16.7         & 14.4         & 25.6          & 24.2         & 46.8          & 45.6         \\ \cline{4-11} 
		& \multicolumn{2}{c|}{}                          & 0.6        & 0.6        & 46.7         & 45.3         & 74.1          & 74.1         & 96.7          & 97.1         \\ \cline{4-11} 
		& \multicolumn{2}{c|}{}                          & 0.9        & 0.9        & 74.5         & 75.0         & 97.1          & 97.4         & 100           & 100          \\ \cline{4-11} 
		& \multicolumn{2}{c|}{}                          & 0.3        & 0.6        & 29.4         & 27.6         & 56.0          & 54.2         & 84.9          & 85.6         \\ \cline{4-11} 
		& \multicolumn{2}{c|}{}                          & 0.3        & 0.9        & 53.8         & 52.0         & 82.2          & 82.4         & 98.8          & 98.9         \\ \cline{2-11} 
		& \multicolumn{2}{c|}{\multirow{6}{*}{(1,5)}}    & 0          & 0.3        & 9.0          & 7.4          & 8.4           & 7.7          & 11.5          & 11.2         \\ \cline{4-11} 
		& \multicolumn{2}{c|}{}                          & 0.3        & 0.3        & 10.4         & 10.0         & 10.4          & 10.4         & 19.9          & 19.8         \\ \cline{4-11} 
		& \multicolumn{2}{c|}{}                          & 0.6        & 0.6        & 21.9         & 19.4         & 34.2          & 33.4         & 63.0          & 63.0         \\ \cline{4-11} 
		& \multicolumn{2}{c|}{}                          & 0.9        & 0.9        & 38.7         & 36.8         & 68.3          & 67.6         & 93.3          & 93.5         \\ \cline{4-11} 
		& \multicolumn{2}{c|}{}                          & 0.3        & 0.6        & 15.3         & 14.3         & 22.9          & 21.9         & 41.4          & 40.4         \\ \cline{4-11} 
		& \multicolumn{2}{c|}{}                          & 0.3        & 0.9        & 24.9         & 23.6         & 41.8          & 41.4         & 72.8          & 73.9         \\ \hline
	\end{tabular}
\label{Table:flex-design1-power}
\end{table}

\begin{table}[!htb]
	\caption{Achieved power ($\times100$) for data generated from Multivariate Normal, Multivariate Log-Normal and Multivariate Cauchy distribution with general missing pattern for $d=2$. Here, $n$ is the total sample size in Design 2; $Q_n$ is the Wald-type statistic proposed in (\ref{Q_N}); $F_n$ is ANOVA-type statistic for small sample approximation proposed in (\ref{ANOVA}). The correlation coefficients are $(\rho_1,\rho_2,\rho_{12})=(0.1,0.1,0.1)$. The nominal Type-I error rate is $\alpha=0.05$.}
	\centering
	\begin{tabular}{|c|c|c|c|c|c|c|c|c|c|c|c|c|c|}
		\hline
		 \multicolumn{6}{|c|}{\textbf{Design 2}}           & \multicolumn{2}{c|}{$a=0.2$} & \multicolumn{2}{c|}{$a=0.4$} & \multicolumn{2}{c|}{$a=0.6$} & \multicolumn{2}{c|}{$a=0.8$} \\ \hline
		$n$                   & Distribution                                                                                  & \multicolumn{2}{c|}{$(\sigma_1^2,\sigma_2^2)$} & $\delta_1$ & $\delta_2$ & $Q_n$         & $F_n$        & $Q_n$         & $F_n$        & $Q_n$         & $F_n$        & $Q_n$         & $F_n$        \\ \hline
		\multirow{36}{*}{210} & \multirow{12}{*}{\begin{tabular}[c]{@{}c@{}}Discretized\\ Multivariate\\ Normal\end{tabular}} & \multicolumn{2}{c|}{\multirow{6}{*}{(1,1)}}    & 0          & 0.3        & 56.2          & 55.1         & 59.8          & 59           & 68.2          & 67.2         & 73.6          & 73           \\ \cline{5-14} 
		&                                                                                               & \multicolumn{2}{c|}{}                          & 0.3        & 0.3        & 83.6          & 84.1         & 89.6          & 89           & 92.5          & 92.4         & 95.9          & 95.7         \\ \cline{5-14} 
		&                                                                                               & \multicolumn{2}{c|}{}                          & 0.6        & 0.6        & 100           & 100          & 100           & 100          & 100           & 100          & 100           & 100          \\ \cline{5-14} 
		&                                                                                               & \multicolumn{2}{c|}{}                          & 0.9        & 0.9        & 100           & 100          & 100           & 100          & 100           & 100          & 100           & 100          \\ \cline{5-14} 
		&                                                                                               & \multicolumn{2}{c|}{}                          & 0.3        & 0.6        & 99.8          & 99.8         & 99.9          & 99.9         & 100           & 100          & 100           & 100          \\ \cline{5-14} 
		&                                                                                               & \multicolumn{2}{c|}{}                          & 0.3        & 0.9        & 100           & 100          & 100           & 100          & 100           & 100          & 100           & 100          \\ \cline{3-14} 
		&                                                                                               & \multicolumn{2}{c|}{\multirow{6}{*}{(1,5)}}    & 0          & 0.3        & 20.4          & 20.3         & 24.6          & 24.1         & 27.2          & 26.6         & 26.7          & 26.9         \\ \cline{5-14} 
		&                                                                                               & \multicolumn{2}{c|}{}                          & 0.3        & 0.3        & 37.2          & 36.6         & 43.5          & 43.5         & 47.5          & 47.8         & 49.2          & 49.3         \\ \cline{5-14} 
		&                                                                                               & \multicolumn{2}{c|}{}                          & 0.6        & 0.6        & 91.1          & 91.4         & 94.7          & 95.5         & 97.2          & 97.5         & 98.4          & 98.6         \\ \cline{5-14} 
		&                                                                                               & \multicolumn{2}{c|}{}                          & 0.9        & 0.9        & 100           & 100          & 100           & 100          & 100           & 100          & 100           & 100          \\ \cline{5-14} 
		&                                                                                               & \multicolumn{2}{c|}{}                          & 0.3        & 0.6        & 74.9          & 74.6         & 83.1          & 82.8         & 86            & 85.7         & 89.8          & 90.1         \\ \cline{5-14} 
		&                                                                                               & \multicolumn{2}{c|}{}                          & 0.3        & 0.9        & 96.2          & 95.9         & 98.7          & 98.5         & 99.2          & 99.1         & 99.3          & 99.2         \\ \cline{2-14} 
		& \multirow{12}{*}{\begin{tabular}[c]{@{}c@{}}Multivariate\\ Cauchy\end{tabular}}               & \multicolumn{2}{c|}{\multirow{6}{*}{(1,1)}}    & 0          & 0.3        & 21.1          & 20           & 24.9          & 24.1         & 27.9          & 27.3         & 29.8          & 29.7         \\ \cline{5-14} 
		&                                                                                               & \multicolumn{2}{c|}{}                          & 0.3        & 0.3        & 38.3          & 37.9         & 43.8          & 43.5         & 49.8          & 49.4         & 55.6          & 56.1         \\ \cline{5-14} 
		&                                                                                               & \multicolumn{2}{c|}{}                          & 0.6        & 0.6        & 91.7          & 91.8         & 94.4          & 94.6         & 96.9          & 97.7         & 98.1          & 98           \\ \cline{5-14} 
		&                                                                                               & \multicolumn{2}{c|}{}                          & 0.9        & 0.9        & 99.9          & 99.9         & 100           & 100          & 100           & 100          & 100           & 100          \\ \cline{5-14} 
		&                                                                                               & \multicolumn{2}{c|}{}                          & 0.3        & 0.6        & 74.4          & 74.1         & 83.6          & 84           & 85            & 86           & 88.4          & 88.7         \\ \cline{5-14} 
		&                                                                                               & \multicolumn{2}{c|}{}                          & 0.3        & 0.9        & 95.7          & 95.4         & 97.2          & 97.4         & 98.8          & 99.1         & 99.7          & 99.6         \\ \cline{3-14} 
		&                                                                                               & \multicolumn{2}{c|}{\multirow{6}{*}{(1,5)}}    & 0          & 0.3        & 9.1           & 9            & 12.5          & 11.2         & 13.3          & 13.1         & 14            & 13.7         \\ \cline{5-14} 
		&                                                                                               & \multicolumn{2}{c|}{}                          & 0.3        & 0.3        & 16            & 16.2         & 19.1          & 18.4         & 19.1          & 19.6         & 21            & 21.4         \\ \cline{5-14} 
		&                                                                                               & \multicolumn{2}{c|}{}                          & 0.6        & 0.6        & 51.2          & 50.6         & 55.9          & 56.6         & 64.7          & 65.9         & 72            & 72           \\ \cline{5-14} 
		&                                                                                               & \multicolumn{2}{c|}{}                          & 0.9        & 0.9        & 84.5          & 85.5         & 89.5          & 89.7         & 92.6          & 92.7         & 97.2          & 97.4         \\ \cline{5-14} 
		&                                                                                               & \multicolumn{2}{c|}{}                          & 0.3        & 0.6        & 35.1          & 34.7         & 38.9          & 38.1         & 47.6          & 46.8         & 48.2          & 48           \\ \cline{5-14} 
		&                                                                                               & \multicolumn{2}{c|}{}                          & 0.3        & 0.9        & 62.3          & 61.7         & 67.5          & 67.1         & 72.9          & 73.3         & 78.5          & 78.1         \\ \hline
	\end{tabular}
\label{Table:flex-design2-power}
\end{table}

\begin{table}[!htb]
	\caption{Achieved power ($\times100$) for data generated from Multivariate Normal, Multivariate Log-Normal and Multivariate Cauchy distribution with general missing pattern for $d=2$. Here, $n_1$ is the complete sample size in Design 3; $Q_n$ is the Wald-type statistic proposed in (\ref{Q_N}); $F_n$ is ANOVA-type statistic for small sample approximation proposed in (\ref{ANOVA}). The correlation coefficients are $(\rho_1,\rho_2,\rho_{12})=(0.1,0.1,0.1)$. The nominal Type-I error rate is $\alpha=0.05$.}
	\centering
	\begin{tabular}{|c|c|l|c|c|c|c|c|c|c|c|}
		\hline
		\multicolumn{5}{|c|}{\textbf{Design 3}}                                                                                                                                  & \multicolumn{2}{c|}{$n_1=5$} & \multicolumn{2}{c|}{$n_1=10$} & \multicolumn{2}{c|}{$n_1=20$} \\ \hline
		Distribution                                                                                  & \multicolumn{2}{c|}{$(\sigma_1^2,\sigma_2^2)$} & $\delta_1$ & $\delta_2$ & $Q_n$         & $F_n$        & $Q_n$         & $F_n$         & $Q_n$         & $F_n$         \\ \hline
		\multirow{12}{*}{\begin{tabular}[c]{@{}c@{}}Discretized\\ Multivariate\\ Normal\end{tabular}} & \multicolumn{2}{c|}{\multirow{6}{*}{(1,1)}}    & 0          & 0.3        & 99.8          & 99.8         & 100           & 100           & 100           & 100           \\ \cline{4-11} 
		& \multicolumn{2}{c|}{}                          & 0.3        & 0.3        & 100           & 100          & 100           & 100           & 100           & 100           \\ \cline{4-11} 
		& \multicolumn{2}{c|}{}                          & 0.6        & 0.6        & 100           & 100          & 100           & 100           & 100           & 100           \\ \cline{4-11} 
		& \multicolumn{2}{c|}{}                          & 0.9        & 0.9        & 100           & 100          & 100           & 100           & 100           & 100           \\ \cline{4-11} 
		& \multicolumn{2}{c|}{}                          & 0.3        & 0.6        & 100           & 100          & 100           & 100           & 100           & 100           \\ \cline{4-11} 
		& \multicolumn{2}{c|}{}                          & 0.3        & 0.9        & 100           & 100          & 100           & 100           & 100           & 100           \\ \cline{2-11} 
		& \multicolumn{2}{c|}{\multirow{6}{*}{(1,5)}}    & 0          & 0.3        & 78.8          & 78.8         & 78.2          & 78            & 81.1          & 80.9          \\ \cline{4-11} 
		& \multicolumn{2}{c|}{}                          & 0.3        & 0.3        & 97.6          & 97.5         & 98.2          & 98.4          & 98.4          & 98.5          \\ \cline{4-11} 
		& \multicolumn{2}{c|}{}                          & 0.6        & 0.6        & 100           & 100          & 100           & 100           & 100           & 100           \\ \cline{4-11} 
		& \multicolumn{2}{c|}{}                          & 0.9        & 0.9        & 100           & 100          & 100           & 100           & 100           & 100           \\ \cline{4-11} 
		& \multicolumn{2}{c|}{}                          & 0.3        & 0.6        & 100           & 100          & 100           & 100           & 100           & 100           \\ \cline{4-11} 
		& \multicolumn{2}{c|}{}                          & 0.3        & 0.9        & 100           & 100          & 100           & 100           & 100           & 100           \\ \hline
		\multirow{12}{*}{\begin{tabular}[c]{@{}c@{}}Multivariate\\ Cauchy\end{tabular}}               & \multicolumn{2}{c|}{\multirow{6}{*}{(1,1)}}    & 0          & 0.3        & 82            & 82           & 82.9          & 82.5          & 84.3          & 83.8          \\ \cline{4-11} 
		& \multicolumn{2}{c|}{}                          & 0.3        & 0.3        & 99            & 98.9         & 97.9          & 98            & 97.9          & 98            \\ \cline{4-11} 
		& \multicolumn{2}{c|}{}                          & 0.6        & 0.6        & 100           & 100          & 100           & 100           & 100           & 100           \\ \cline{4-11} 
		& \multicolumn{2}{c|}{}                          & 0.9        & 0.9        & 100           & 100          & 100           & 100           & 100           & 100           \\ \cline{4-11} 
		& \multicolumn{2}{c|}{}                          & 0.3        & 0.6        & 100           & 100          & 100           & 100           & 100           & 100           \\ \cline{4-11} 
		& \multicolumn{2}{c|}{}                          & 0.3        & 0.9        & 100           & 100          & 100           & 100           & 100           & 100           \\ \cline{2-11} 
		& \multicolumn{2}{c|}{\multirow{6}{*}{(1,5)}}    & 0          & 0.3        & 37.2          & 37.6         & 37.4          & 37.4          & 38            & 37.9          \\ \cline{4-11} 
		& \multicolumn{2}{c|}{}                          & 0.3        & 0.3        & 67            & 67.1         & 69.6          & 69.6          & 67.5          & 67.4          \\ \cline{4-11} 
		& \multicolumn{2}{c|}{}                          & 0.6        & 0.6        & 99.9          & 99.9         & 100           & 100           & 99.9          & 99.9          \\ \cline{4-11} 
		& \multicolumn{2}{c|}{}                          & 0.9        & 0.9        & 100           & 100          & 100           & 100           & 100           & 100           \\ \cline{4-11} 
		& \multicolumn{2}{c|}{}                          & 0.3        & 0.6        & 97.8          & 97.7         & 96.9          & 97.2          & 97.2          & 97.2          \\ \cline{4-11} 
		& \multicolumn{2}{c|}{}                          & 0.3        & 0.9        & 99.9          & 99.9         & 100           & 100           & 100           & 100           \\ \hline
	\end{tabular}
\label{Table:flex-design3-power}
\end{table}

\section{Real Data Analysis}
\label{sec:RealData}
In this section, we analyze the ARTIS data described in Section \ref{sec:Motivation}. The data-analytic objective is to test the existence of air-filter intervention effect on the domain variables. Our main goal is to illustrate the application of the methods developed in this paper with a  real data and compare the results with the alternative methods presented in Section \ref{sec:SimuRes}. For the ARTIS data, the interpretation of significant intervention effect is that at least one of the domain variables has nonparametric effect size not equal to 0.5, i.e. data collected after the air-filter intervention tend to be either greater or smaller than those collected before in at least one of the domains. More generally, we want to know whether the air-filter intervention improves quality of life for children with asthma in terms of activity limitation, emotional function or symptoms domains.

As mentioned in Section \ref{sec:Motivation}, one child per family and one visit per intervention period is selected randomly to make the ARTIS data fit the data scheme assumed in this paper. In the selected random sample, $n_c=33$ children have paired data before and after the intervention on all the three domains, $n_1=8$ children have data on all three domains only in the pre-intervention period and $n_2=1$ child has data on all three domains only in the post-intervention period. The estimates of $\bm{p}$ are shown in Table \ref{Table:TestResults} for the methods derived in this paper (All) and, also, for Incomplete and Complete methods. Note that in the ARTIS data, ANOVA-type statistic is more reliable compared with Wald-type statistic due to the small sample size.

From Table \ref{Table:TestResults}, at $\alpha=0.05$, neither Wald-type nor ANOVA-type statistics detect significant intervention effect for all the three methods. Therefore, the air-filter intervention do not have significant tendency to result in larger PAQLQ scores for the three domains. In other words, the intervention does not improve the quality of life for children with asthma in homes using wood-burning stoves. Note that, since the total sample size for Incomplete method is $n_1+n_2=9$ and there is only one sample in the post-intervention group, the results from this method are rather biased and, therefore, its effect size estimates are not reliable. The Imputation method does not provide effect size estimate. Thus, the only conclusion that can be drawn is that there is no significant difference between mean vectors of the domain variables between pre-intervention and post-intervention periods, which is consistent with the nonparametric test results.

\begin{table}[!htb]
	\centering
	\caption{Test statistics and p-values for Domain Variables in ARTIS data. Here, $Q_n$ and $F_n$ represent Wald-type and ANOVA-type statistics proposed in (\ref{Q_N}) and (\ref{ANOVA}); $Q_n^{(1)}$ and $F_n^{(1)}$ are Wald-type and ANOVA-type tests in the Incomplete method; $Q_n^{(2)}$ and $F_n^{(2)}$ are Wald-type and ANOVA-type tests in the Complete method. The nominal Type-I error rate is $\alpha=0.05$.}
	\footnotesize
	\begin{tabular}{|c|c|c|c|c|c|c|c|c|c|c|}
		\hline
		& \multicolumn{3}{c|}{All} & \multicolumn{3}{c|}{Incomplete} & \multicolumn{3}{c|}{Complete} & Multiple Impute \\ \hline
		& $\widehat{\textbf{p}}$ & $Q_n$ & $F_n$ & $\widehat{\textbf{p}}$ & $Q_n^{(1)}$ & $F_n^{(1)}$ & $\widehat{\textbf{p}}$ & $Q_n^{(2)}$ & $F_n^{(2)}$ & p-value \\ \hline
		\begin{tabular}[c]{@{}c@{}}Activity\\ Limitation\end{tabular} & 0.587 & \multirow{3}{*}{0.363} & \multirow{3}{*}{0.134} & 0.813 & \multirow{3}{*}{0.807} & \multirow{3}{*}{0.813} & 0.564 & \multirow{3}{*}{0.3382} & \multirow{3}{*}{0.136} & \multirow{3}{*}{0.368} \\ \cline{1-2} \cline{5-5} \cline{8-8}
		\begin{tabular}[c]{@{}c@{}}Emotional\\ Function\end{tabular} & 0.583 &  &  & 0.870 &  &  & 0.581 &  &  &  \\ \cline{1-2} \cline{5-5} \cline{8-8}
		Symptoms & 0.564 &  &  & 0.750 &  &  & 0.567 &  &  &  \\ \hline
	\end{tabular}
	\label{Table:TestResults}
\end{table}

\section{Discussion}
\label{sec:Discussion}

In many studies, multivariate data from subjects that belong to the same or different treatment groups are collected. In this paper, we have proposed methods that can be used to compare treatment groups for this type of data with general missing patterns. Commonly used approaches include removing incomplete samples, i.e. keep complete cases only, and imputing incomplete cases from the existing data. These strategies are not effective in the sense that they either ignore valuable information or introduce imputation errors. With simulation studies, we have shown that all of these alternatives methods are not efficient in preserving the preassigned Type-I error rate or achieving reasonable power. Therefore, they cannot be recommended for general application. 

The present paper aims at inferential procedures with the fewest assumptions so that ordinal or skewed data are accommodated in a seamless way. In that sense, we derived a fully nonparametric method for estimating and testing the nonparametric effect size applicable for multivariate data. Our nonparametric effect size estimators also allow comparisons among treatment groups on each response variable. In other words, unlike global tests, the proposed procedures can provide more specific information.

With our method, the marginal distribution of each response variable is estimated independently by weighing the corresponding complete and incomplete data. This strategy leads to procedures implementable using ranking routines and reduces the code complexity for calculating effect sizes and covariance matrices. However, other estimation of marginal distributions, such as taking correlation among the variables and between the treatment groups into consideration in the weighting scheme, is likely to be more accurate. Furthermore, although multiple response variables are allowed in this paper, only the two group case is considered. Also, clustered data are not allowed for any of the response variables. In summary, extension to multiple groups and more elaborate estimators of nonparametric effect to accommodate complex data structures will be of interest. For instance, the ARTIS data set involves twenty-three quality of life scores measured in three groups, which can be regarded as multivariate clustered data in factorial design with complete and incomplete clusters. It should be pointed out that the missing data patterns along with dependence structures among response variables and clusters may result in a rather complex covariance matrix in the asymptotic theory. We plan to investigate these problems in future researches. 

\section*{Reference}
\newpage
\bibliographystyle{elsarticle-num-names}
\bibliography{reference}

\newpage

\section{Appendix}
\label{sec:Appendix}

\subsection{Proof of Theorem \ref{thm:consistency}}
\label{Appendix:Proof4.3}
To show $\parallel\widehat{\bm{V}}-\bm{V}\parallel_2^{2}\rightarrow0$, it suffices to show the $L_2$-consistency for each element of $\widehat{\bm{V}}-\bm{V}$, i.e. $\parallel\widehat{v}^{(l,r)}-v^{(l,r)}\parallel_2^2\to0$. Here, $v^{(l,r)}$ is given in (\ref{v_n}) and $\widehat{v}^{(l,r)}=\widehat{v}_{c}^{(l,r)}+\widehat{v}_{1}^{(l,r)}+\widehat{v}_{2}^{(l,r)}$, where $\widehat{v}_{j}^{(l,r)}$ is the $(l,r)^{th}$ entry of $\widehat{\bm{V}}_{j},j\in\{c,1,2\}$ as given in (\ref{V-12}) and (\ref{V-3}). First, rewrite
\begin{equation}
	\widehat{v}_{g}^{(l,r)}=n\frac{n_g}{m_g^2(n_g-1)}\sum_{k=1}^{n_g}(\widehat{Y}_{gk}^{(i)(l)}-\widehat{\overline{Y}}_{g\cdot}^{(i)(l)})(\widehat{Y}_{gk}^{(i)(r)}-\widehat{\overline{Y}}_{g\cdot}^{(i)(r)}),\quad g=1,2\quad\textrm{and}
\end{equation}
\begin{equation}
	\widehat{v}_{c}^{(l,r)}=\frac{n}{n_c(n_c-1)}\sum_{k=1}^{n_c}(\widehat{Z}_k^{(c)(l)}-\widehat{\overline{Z}_\cdot}^{(c)(l)})(\widehat{Z}_k^{(c)(r)}-\widehat{\overline{Z}_\cdot}^{(c)(r)}).
\end{equation}
By triangle inequality, it follows
\begin{equation*}
	\parallel\widehat{v}^{(l,r)}-v^{(l,r)}\parallel_2^2\le\parallel\widehat{v}^{(l,r)}-\tilde{v}^{(l,r)}\parallel_2^2+\parallel\tilde{v}^{(l,r)}-v^{(l,r)}\parallel_2^2.
\end{equation*}
Now, the proof will be complete by showing $\parallel\widehat{v}^{(l,r)}-\tilde{v}^{(l,r)}\parallel_2^2\to0$ and $\parallel\tilde{v}^{(l,r)}-v^{(l,r)}\parallel_2^2\to0$.

To show $\parallel\tilde{v}^{(l,r)}-v^{(l,r)}\parallel_2^2\to0$, it suffices to show $\parallel\widehat{v}_{j}^{(l,r)}-\tilde{v}_{j}^{(l,r)}\parallel_2^2\to0$, $j\in\{c,1,2\}$ by triangle equality. Proof of the special case when $l=r\in\{1,\cdots,d\}$, $j\in\{c,1,2\}$ follows from Theorem 5.1 in \cite{matched-pair-2012}. For other cases where $l\ne r$, we give an example proof for $j=c$.
Observe that
\begin{align*}
	&\quad\parallel\tilde{v}_{c}^{(l,r)}-v_{c}^{(l,r)}\parallel_2^2\displaybreak[0]\\
	&=E\bigg(\frac{n}{n_c(n_c-1)}\sum_{k=1}^{n_c}\big[\frac{n_c}{m_2}Y_{2k}^{(c)(l)}-\frac{n_c}{m_1}Y_{1k}^{(c)(l)}-\big(\frac{n_c}{m_2}\overline{Y}_{2\cdot}^{(c)(l)}-\frac{n_c}{m_1}\overline{Y}_{1\cdot}^{(c)(l)}\big)\big]\times\displaybreak[0]\\
	&\qquad\qquad\qquad\qquad\quad\big[\frac{n_c}{m_2}Y_{2k}^{(c)(r)}-\frac{n_c}{m_1}Y_{1k}^{(c)(r)}-\big(\frac{n_c}{m_2}\overline{Y}_{2\cdot}^{(c)(r)}-\frac{n_c}{m_1}\overline{Y}_{1\cdot}^{(c)(r)}\big)\big]\displaybreak[0]\\
	&\qquad\quad-\frac{n}{n_c}\textrm{Cov}\big(\frac{n_c}{m_2}Y_{2k}^{(c)(l)}-\frac{n_c}{m_1}Y_{1k}^{(c)(l)},\frac{n_c}{m_2}Y_{2k}^{(c)(r)}-\frac{n_c}{m_1}Y_{1k}^{(c)(r)}\big)\bigg)^2\displaybreak[0]\\
	&=E\bigg(\sum_{g=1}^{2}\sum_{g'=1}^{2}(1-2\ell_{g\ne g'})\frac{n}{n_c(n_c-1)}\sum_{k=1}^{n_c}\big(\frac{n_c}{m_g}Y_{gk}^{(c)(l)}-\frac{n_c}{m_g}\overline{Y}_{g\cdot}^{(c)(l)}\big)\big(\frac{n_c}{m_{g'}}Y_{g'k}^{(c)(r)}-\frac{n_c}{m_{g'}}\overline{Y}_{g'\cdot}^{(c)(r)}\big)\displaybreak[0]\\
	&\qquad-\sum_{g=1}^{2}\sum_{g'=1}^{2}(1-2\ell_{g\ne g'})\frac{n}{n_c}\frac{1}{n_c}\sum_{k=1}^{n_c}\textrm{Cov}\big(\frac{n_c}{m_g}Y_{gk}^{(c)(l)}-\frac{n_c}{m_{g'}}Y_{g'k}^{(c)(r)}\big)\bigg)^2\displaybreak[0]\\
	&=E\bigg(\sum_{g=1}^{2}\sum_{g'=1}^{2}(1-2\ell_{g\ne g'})\frac{n_c}{m_g}\frac{n_c}{m_{g'}}\big[\frac{n}{n_c(n_c-1)}\sum_{k=1}^{n_c}(Y_{gk}^{(c)(l)}-\overline{Y}_{g\cdot}^{(c)(l)})(Y_{g'k}^{(c)(r)}-\overline{Y}_{g'\cdot}^{(c)(r)})\displaybreak[0]\\
	&\qquad\quad-\frac{n}{n_c^2}\sum_{k=1}^{n_c}\textrm{Cov}(Y_{gk}^{(c)(l)},Y_{g'k}^{(c)(r)})\big]\bigg)^2\displaybreak[0]\\
	&\le4\sum_{g=1}^{2}\sum_{g'=1}^{2}(\frac{n_c}{m_g}\frac{n_c}{m_{g'}})^2E\bigg(\frac{n}{n_c(n_c-1)}\sum_{k=1}^{n_c}(Y_{gk}^{(c)(l)}-\overline{Y}_{g\cdot}^{(c)(l)})(Y_{g'k}^{(c)(r)}-\overline{Y}_{g'\cdot}^{(c)(r)})\displaybreak[0]\\
	&\qquad-\frac{n}{n_c^2}\sum_{k=1}^{n_c}\textrm{Cov}(Y_{gk}^{(c)(l)},Y_{g'k}^{(c)(r)})\bigg)^2.
\end{align*}
According to the proof of Theorem 3.5 in \cite{werner2006nichtparametrische}, we have
\begin{equation*}
	E\bigg(\frac{n}{n_c(n_c-1)}\sum_{k=1}^{n_c}(Y_{gk}^{(c)(l)}-\overline{Y}_{g\cdot}^{(c)(l)})(Y_{g'k}^{(c)(r)}-\overline{Y}_{g'\cdot}^{(c)(r)})-\frac{n}{n_c^2}\sum_{k=1}^{n_c}\textrm{Cov}(Y_{gk}^{(c)(l)},Y_{g'k}^{(c)(r)})\bigg)^2=O(\frac{1}{n_c}).
\end{equation*}
This implies
\begin{equation*}
	\parallel\tilde{v}_{c}^{(l,r)}-v_{c}^{(l,r)}\parallel_2^2 \le 4\sum_{g=1}^{2}\sum_{g'=1}^{2}(\frac{n_c}{m_g}\frac{n_c}{m_{g'}})^2O(\frac{1}{n_c})=O(\frac{1}{m_1}+\frac{1}{m_2}).
\end{equation*}
Next, we show $\parallel\widehat{v}_{c}^{(l,r)}-\tilde{v}_{c}^{(l,r)}\parallel_2^2\to0$. Notice that
\begin{align*}
	&\quad\parallel\widehat{v}_{c}^{(l,r)}-\tilde{v}_{c}^{(l,r)}\parallel_2^2\\
	&=E(\widehat{v}_{c}^{(l,r)}-\tilde{v}_{c}^{(l,r)})^2\\
	&=(\frac{n}{n_c-1})^2E\bigg(\frac{1}{n_c}\sum_{k=1}^{n_c}\underset{\Delta_k}{\underbrace{\big[(\widehat{Z}_k^{(c)(l)}-\widehat{\overline{Z}}_{\cdot}^{(c)(l)})(\widehat{Z}_k^{(c)(r)}-\widehat{\overline{Z}}_{\cdot}^{(c)(r)})-(Z_k^{(c)(l)}-\overline{Z}_{\cdot}^{(c)(l)})(Z_k^{(c)(r)}-\overline{Z}_{\cdot}^{(c)(r)})\big]}}\bigg)^2\\
	&=(\frac{n}{n_c-1})^2E(\frac{1}{n_c}\sum_{k=1}^{n_c}\Delta_k)^2.
\end{align*}
Let $A=\frac{1}{n_c}\sum_{k=1}^{n_c}\Delta_k$. It follows that $\quad\parallel\widehat{v}_{c}^{(l,r)}-\tilde{v}_{c}^{(l,r)}\parallel_2^2=(\frac{n}{n_c-1})^2E(A^2)$. By Jensen's Inequality and $C_r$-inequality,
\begin{equation}
	\label{Inequality:ASquare}
	\begin{split}
		E(A^2)&\le\frac{1}{n_c}\sum_{k=1}^{n_c}E(\Delta_k^2)\\
		&\le2E\bigg((Z_1^{(c)(l)}-\overline{Z}_{\cdot}^{(c)(l)})^2\big[(\widehat{Z}_1^{(c)(r)}-\widehat{\overline{Z}}_{\cdot}^{(c)(r)})-(Z_1^{(c)(r)}-\overline{Z}_{\cdot}^{(c)(r)})\big]^2\bigg)\\
		&\quad+2E\bigg((\widehat{Z}_1^{(c)(r)}-\widehat{\overline{Z}}_{\cdot}^{(c)(r)})^2\big[(\widehat{Z}_1^{(c)(l)}-\widehat{\overline{Z}}_{\cdot}^{(c)(l)})-(Z_1^{(c)(l)}-\overline{Z}_{\cdot}^{(c)(l)})\big]^2\bigg).
	\end{split}
\end{equation}
Note that the random variables $Y_{gk}^{(c)(l)}$ and $\widehat{Y}_{gk}^{(c)(l)}$ are uniformly bounded by 1 so that $|Y_{gk}^{(c)(l)}-\overline{Y}_{g\cdot}^{(c)(l)}|\le1$ and $|\widehat{Y}_{gk}^{(c)(l)}-\widehat{\overline{Y}}_{g\cdot}^{(c)(l)}|\le1$. Then, by $C_r$-inequality,  it follows that
\begin{equation}
	\label{Inequality:ZSquare}
	\begin{split}
		E(Z_1^{(c)(l)}-\overline{Z}_{\cdot}^{(c)(l)})^2&=E\big[\theta_2(Y_{2k}^{(c)(l)}-\overline{Y}_{2\cdot}^{(c)(l)})-\theta_1(Y_{1k}^{(c)(l)}-\overline{Y}_{1\cdot}^{(c)(l)})\big]^2\\
		&\le2E\big[\theta_2^2(Y_{2k}^{(c)(l)}-\overline{Y}_{2\cdot}^{(c)(l)})^2+\theta_1^2(Y_{1k}^{(c)(l)}-\overline{Y}_{1\cdot}^{(c)(l)})^2\big]\\
		&\le2(\theta_2^2+\theta_1^2)\\
		&\le4.
	\end{split}
\end{equation}
Further, 
\begin{align*}
	&\quad E\bigg((Z_1^{(c)(l)}-\overline{Z}_{\cdot}^{(c)(l)})^2\big[(\widehat{Z}_1^{(c)(r)}-\widehat{\overline{Z}}_{\cdot}^{(c)(r)})-(Z_1^{(c)(r)}-\overline{Z}_{\cdot}^{(c)(r)})\big]^2\bigg)\\
	&\le16E\big[(\widehat{Z}_1^{(c)(r)}-\widehat{\overline{Z}}_{\cdot}^{(c)(r)})-(Z_1^{(c)(r)}-\overline{Z}_{\cdot}^{(c)(r)})\big]^2\\
	&=16E\big[(\widehat{Z}_1^{(c)(r)}-Z_1^{(c)(r)})-(\widehat{\overline{Z}}_{\cdot}^{(c)(r)}-\overline{Z}_{\cdot}^{(c)(r)})]^2\\
	&\le32E(\widehat{Z}_1^{(c)(r)}-Z_1^{(c)(r)})^2+32E(\widehat{\overline{Z}}_{\cdot}^{(c)(r)}-\overline{Z}_{\cdot}^{(c)(r)})^2\\
	&\le64E(\widehat{Z}_1^{(c)(r)}-Z_1^{(c)(r)})^2,
\end{align*}
where the last step follows by Jensen's inequality. According to proof of Theorem 5.1 in \cite{matched-pair-2012}, $E\big(\widehat{F}_g^{(r)}(X_{g'k}^{(c)(r)})-F_g^{(r)}(X_{g'k}^{(c)(r)})\big)^2=O(\frac{1}{m_g})$, $g,g'\in\{1,2\}$ and $g\ne g'$. Then by $C_r$-inequality it follows that
\begin{equation}
	\label{Inequality:ZHat}
	\begin{split}
		E(\widehat{Z}_1^{(c)(r)}-Z_1^{(c)(r)})^2&=E\big(\theta_2(\widehat{Y}_{2k}^{(c)(r)}-Y_{2k}^{(c)(r)})-\theta_1(\widehat{Y}_{1k}^{(c)(r)}-Y_{1k}^{(c)(r)})\big)^2\\
		&\le2\theta_2^2E(\widehat{Y}_{2k}^{(c)(r)}-Y_{2k}^{(c)(r)})^2+2\theta_1^2E(\widehat{Y}_{1k}^{(c)(r)}-Y_{1k}^{(c)(r)})^2\\
		&=2\theta_2^2E\big(\widehat{F}_1^{(r)}(X_{2k}^{(c)(r)})-F_1^{(r)}(X_{2k}^{(c)(r)})\big)^2\\
		&\quad+2\theta_1^2E\big(\widehat{F}_2^{(r)}(X_{1k}^{(c)(r)})-F_2^{(r)}(X_{1k}^{(c)(r)})\big)^2\\
		&=O(\frac{1}{m_1}+\frac{1}{m_2}).
	\end{split}
\end{equation}
Finally, applying (\ref{Inequality:ASquare}), (\ref{Inequality:ZSquare}) and (\ref{Inequality:ZHat}), we get $E(A^2)=O(\frac{1}{m_1}+\frac{1}{m_2})$. Together with Assumption \ref{assumption-3.1}, this completes the proof.

\subsection{Estimator of Covariance Matrix in Section \ref{sec:flex-missing}}
We will estimate covariance matrix $\widehat{\bm{V}}$ by estimating its diagonal elements and off-diagonal elements separately. Let us start with the off-diagonal elements. By independence between the subjects (clusters), it follows for $l\ne r$
\[\begin{split}
	C_1&=\textrm{Cov}(\sum_{k\in S^{(c,l)}}Z_k^{(c)(l)},\sum_{k\in S^{(c,r)}}Z_k^{(c)(r)})\\
	&=\textrm{Cov}(\sum_{k\in S^{(c,l)}\cap S^{(c,r)}}Z_k^{(c)(l)},\sum_{k\in S^{(c,l)}\cap S^{(c,r)}}Z_k^{(c)(r)})\\
	&=\sum_{k\in S^{(c,l)}\cap S^{(c,r)}}n_k\cdot\textrm{Cov}(Z_1^{(c)(l)},Z_1^{(c)(r)})\\
	&:=e_1\cdot\textrm{Cov}(Z_1^{(c)(l)},Z_1^{(c)(r)}).
\end{split}\]
Similarly, we have
\[C_2=\sum_{k\in S^{(c,l)}\cap S^{(2,r)}}n_k\cdot\textrm{Cov}(Z_1^{(c)(l)},Y_{21}^{(i)(r)}):=e_2\cdot\textrm{Cov}(Z_1^{(c)(l)},Y_{21}^{(i)(r)}),\]
\[C_3=\sum_{k\in S^{(c,l)}\cap S^{(1,r)}}n_k\cdot\textrm{Cov}(Z_1^{(c)(l)},Y_{11}^{(i)(r)}):=e_3\cdot\textrm{Cov}(Z_1^{(c)(l)},Y_{11}^{(i)(r)}),
\]
\[C_4=\sum_{k\in S^{(2,l)}\cap S^{(c,r)}}n_k\cdot\textrm{Cov}(Y_{21}^{(i)(l)},Z_1^{(c)(r)}):=e_4\cdot\textrm{Cov}(Y_{21}^{(i)(l)},Z_1^{(c)(r)}),\]
\[C_5=\sum_{k\in S^{(2,l)}\cap S^{(2,r)}}n_k\cdot\textrm{Cov}(Y_{21}^{(i)(l)},Y_{21}^{(i)(r)}):=e_5\cdot\textrm{Cov}(Y_{21}^{(i)(l)},Y_{21}^{(i)(r)}),\]
\[C_6=\sum_{k\in S^{(2,l)}\cap S^{(1,r)}}n_k\cdot\textrm{Cov}(Y_{21}^{(i)(l)},Y_{11}^{(i)(r)}):=e_6\cdot\textrm{Cov}(Y_{21}^{(i)(l)},Y_{11}^{(i)(r)}),\]
\[C_7=\sum_{k\in S^{(1,l)}\cap S^{(c,r)}}n_k\cdot\textrm{Cov}(Y_{11}^{(i)(l)},Z_1^{(c)(r)}):=e_7\cdot\textrm{Cov}(Y_{11}^{(i)(l)},Z_1^{(c)(r)}),\]
\[C_8=\sum_{k\in S^{(1,l)}\cap S^{(2,r)}}n_k\cdot\textrm{Cov}(Y_{11}^{(i)(l)},Y_{21}^{(i)(r)}):=e_8\cdot\textrm{Cov}(Y_{11}^{(i)(l)},Y_{21}^{(i)(r)}),\]
and
\[C_9=\sum_{k\in S^{(1,l)}\cap S^{(1,r)}}n_k\cdot\textrm{Cov}(Y_{11}^{(i)(l)},Y_{11}^{(i)(r)}):=e_9\cdot\textrm{Cov}(Y_{11}^{(i)(l)},Y_{11}^{(i)(r)}),\]
{where $e_j$ is the number of subjects that are involved in the corresponding covariance term $C_j$, $j=1,\cdots,9$.} If $Y_{gk}^{(A)(l)}$, $g\in\{1,2\}$, $A\in\{c,i\}$, $l\in\{1,\cdots,d\}$ were observable, the natural estimators $\tilde{C}_j$ of $C_j$, $j=1,\cdots,9$ would be
\[\tilde{C}_{1}=\frac{e_1}{e_1-1}\sum_{k=1}^{e_1}(Z_k^{(c)(l)}-\overline{Z}_{\cdot}^{(c)(l)})(Z_k^{(c)(r)}-\overline{Z}_{\cdot}^{(c)(r)}),\]
\[\tilde{C}_2=\frac{e_2}{e_2-1}\sum_{k=1}^{e_2}(Z_k^{(c)(l)}-\overline{Z}_\cdot^{(c)(l)})(Y_{2k}^{(i)(r)}-\overline{Y}_{2\cdot}^{(i)(r)}),\]
\[\tilde{C}_3=\frac{e_3}{e_3-1}\sum_{k=1}^{e_3}(Z_k^{(c)(l)}-\overline{Z}_\cdot^{(c)(l)})(Y_{1k}^{(i)(r)}-\overline{Y}_{1\cdot}^{(i)(r)}),\]
\[\tilde{C}_4=\frac{e_4}{e_4-1}\sum_{k=1}^{e_4}(Y_{2k}^{(i)(l)}-\overline{Y}_{2\cdot}^{(i)(l)})(Z_k^{(c)(r)}-\overline{Z}_\cdot^{(c)(r)}),\]
\[\]
\[\tilde{C}_5=\frac{e_5}{e_5-1}\sum_{k=1}^{e_5}(Y_{2k}^{(i)(l)}-\overline{Y}_{2\cdot}^{(i)(l)})(Y_{2k}^{(i)(r)}-\overline{Y}_{2\cdot}^{(i)(r)}),\]
\[\tilde{C}_6=\frac{e_6}{e_6-1}\sum_{k=1}^{e_6}(Y_{2k}^{(i)(l)}-\overline{Y}_{2\cdot}^{(i)(l)})(Y_{1k}^{(i)(r)}-\overline{Y}_{1\cdot}^{(i)(r)}),\]
\[\tilde{C}_7=\frac{e_7}{e_7-1}\sum_{k=1}^{e_7}(Y_{1k}^{(i)(l)}-\overline{Y}_{1\cdot}^{(i)(l)})(Z_k^{(c)(r)}-\overline{Z}_\cdot^{(c)(r)}),\]
\[\tilde{C}_8=\frac{e_8}{e_8-1}\sum_{k=1}^{e_8}(Y_{1k}^{(i)(l)}-\overline{Y}_{1\cdot}^{(i)(l)})(Y_{2k}^{(i)(r)}-\overline{Y}_{2\cdot}^{(i)(r)}),\]
and
\[\tilde{C}_9=\frac{e_9}{e_9-1}\sum_{k=1}^{e_9}(Y_{1k}^{(i)(l)}-\overline{Y}_{1\cdot}^{(i)(l)})(Y_{1k}^{(i)(r)}-\overline{Y}_{1\cdot}^{(i)(r)}).\]
Then we can achieve estimators $\widehat{C}_j$ of $C_j$ by replacing the unobservable random variables $Y_{gk}^{(A)(l)}$ and $\overline{Y}_{g\cdot}^{(A)(l)}$ with their empirical counterparts $\widehat{Y}_{gk}^{(A)(l)}$ and $\widehat{\overline{Y}}_{g\cdot}^{(A)(l)}$ in $\tilde{C}_j,j=1,\cdots,9$. Estimators of diagonal elements ($l=r$) can be obtained analogously. In fact, in this case, due to independence among subjects, it is easy to verify that 
\[\tilde{C}_2=\tilde{C}_3=\tilde{C}_4=\tilde{C}_6=\tilde{C}_7=\tilde{C}_8=0,\]
and for the $l^{th}$ component, natural estimators of $C_1$, $C_5$ and $C_9$ are
\[\tilde{C}_1=\frac{n_c^{(l)}}{n_c^{(l)}-1}\sum_{k\in S^{(c,l)}}(Z_k^{(c)(l)}-\overline{Z}_\cdot^{(c)(l)})^2,\quad\tilde{C}_5=\frac{n_{2}^{(l)}}{n_{2}^{(l)}-1}\sum_{k\in S^{(2,l)}}(Y_{2k}^{(i)(l)}-\overline{Y}_{2\cdot}^{(i)(l)})^2\]
\[\textrm{and}\quad\tilde{C}_9=\frac{n_{1}^{(l)}}{n_{1}^{(l)}-1}\sum_{k\in S^{(1,l)}}(Y_{1k}^{(i)(l)}-\overline{Y}_{1\cdot}^{(i)(l)})^2.\]
Then we replace unobservable random variables with their empirical counterparts to get estimators of the diagonal elements. 

The main difference between covariance matrices for data sets with simple and general missing pattern results from their off-diagonal elements. In Appendix \ref{Appendix:Proof4.3}, we see that elements of covariance matrix for data sets with simple missing pattern can be estimated by summing up estimates of the three covariance decompositions, which are then proved to be consistent. However, for the general missing pattern, the off-diagonal elements consist of 9 covariance decompositions. Since consistency of these covariance decompositions can be established in a manner similar to the proof in Appendix \ref{Appendix:Proof4.3}, the consistency of the covariance matrix follows.

\end{document}